\numberwithin{equation}{section} 
\def\mdseries@tt{m}             
\definecolor{red}{rgb}{.7,0,0}
\definecolor{blue}{rgb}{0,0,1}
\def\mcG{\mathcal{G}}
\def\mcH{\mathcal{H}}
\def\mcM{\mathcal{M}}
\def\mcP{\mathcal{P}}
\def\mcS{\mathcal{S}}
\def\mcO{\mathcal{O}}
\def\mcX{\mathcal{X}}
\def\mcU{\mathcal{U}}
\def\mcZ{\mathcal{Z}}
\def\mcW{\mathcal{W}}
\def\msC{\mathscr{C}}
\def\msO{\mathscr{O}}
\def\msK{\mathscr{K}}
\def\bbR{\mathbb{R}}
\def\bbZ{\mathbb{Z}}
\def\bbN{\mathbb{N}}
\def\bbS{\mathbb{S}}
\def\fkC{\mathfrak{C}}
\def\fkA{\mathfrak{A}}
\def\sgn{\mathsf{sgn}}
\def\sfK{\mathsf{K}}
\def\Tg{\mathrm{T}}
\title{Computing the Homology of Semialgebraic Sets.\\
I: Lax Formulas\thanks{This work was supported by the Einstein Foundation, Berlin.}}
\author{Peter B\"urgisser
\thanks{Partially funded by DFG research grant BU 1371/2-2.}
\\
Technische Universit\"at Berlin\\ 
Institut f\"ur Mathematik\\ 
GERMANY\\
{\tt pbuerg@math.tu-berlin.de} 
\and
Felipe Cucker
\thanks{Partially supported by a GRF grant
from the Research Grants Council of the Hong Kong
SAR (project number CityU 11302418).}
\\
Dept. of Mathematics\\
City University of Hong Kong\\
HONG KONG\\
{\tt macucker@cityu.edu.hk}
\and
Josu\'{e} Tonelli-Cueto\\
Technische Universit\"at Berlin\\ 
Institut f\"ur Mathematik\\ 
GERMANY\\
{\tt  ton-cue@math.tu-berlin.de}
}
\date{}
\def\th@plain{%
  \thm@notefont{}
  \slshape 
}
\def\th@definition{%
  \thm@notefont{}
  \normalfont 
}
\theoremstyle{plain}
\newtheorem{lem}{Lemma}[section]
\newtheorem{prop}[lem]{Proposition}
\newtheorem{theo}[lem]{Theorem}
\newtheorem{cor}[lem]{Corollary}
\theoremstyle{definition}
\newtheorem{defi}[lem]{Definition}
\theoremstyle{remark}
\newtheorem{exam}[lem]{Example}
\newtheorem{obs}{Observation}[lem]
\newtheorem{remark}[lem]{Remark}
\newcommand{\eproof}{\hfill\qed}
\newcommand{\conv}{\mathsf{conv}}
\newcommand{\Ap}{\mathsf{S}}
\newcommand{\cost}{\mathop{\mathsf{cost}}}
\newcommand{\size}{\mathop{\mathsf{size}}}
\newcommand{\depth}{\mathop{\mathsf{depth}}}
\def\bfd{\boldsymbol{d}}
\def\Hd{\mcH_{\bfd}}
\def\Pd{\mcP_{\bfd}}
\def\hm{^{\mathsf{h}}}
\def\Hm{\mathsf{H}}
\def\kappabar{\overline{\kappa}}
\def\kappaff{\overline{\kappa}_{\sf aff}}
\def\codim{\mathrm{codim}}
\def\prob{\mathop{\mathrm{Prob}}}
\def\sfH{\mathop{\mathsf H}}
\newcommand{\cech}[2]{\text{\v{C}}_{#1}\big(#2\big)}
\def\Oh{\mathcal{O}}
\def\diff{\mathrm{D}}
\begin{document}
\maketitle
\begin{abstract}
We describe and analyze an algorithm for computing the homology (Betti numbers and torsion coefficients) of closed semialgebraic sets given by Boolean formulas without negations over lax polynomial inequalities. 
The algorithm works in weak exponential time. This means that 
outside a subset of data having exponentially small measure, the 
cost of the algorithm is single exponential in the size of the data.

All previous algorithms solving this problem have doubly 
exponential complexity (and this is so for almost all input data). 
Our algorithm thus represents an exponential acceleration 
over state-of-the-art algorithms for all input data outside a set 
that vanishes exponentially fast. 
\end{abstract}

\begin{quote}
    {\bf AMS classification numbers:} 14P10, 65D18, 65Y20, 68Q25
    
    {\bf Keywords:} Homology groups, Weak complexity, Numerical algorithms
    
\end{quote}
\section{Introduction}

A {\em semialgebraic set} is a subset of $\bbR^n$ defined by a {\em Boolean combination} 
(i.e., a sequence of unions, intersections, and complements) of polynomial
equalities and inequalities.  The class of such sets is closed under
unions, intersections, complements, and projections as well as under
taking images and preimages of polynomial maps. This wealth of closure
properties is consistent with the wealth of shapes that semialgebraic
sets can take.

Semialgebraic sets play a distinguished role in several branches of
mathematics. In mathematical logic, they occur as the definable sets
of the first-order theory of real closed fields~\cite{Tarski51}, in
real algebraic geometry, where they are the constructible
sets~\cite{BR:90,BCR:98}, in complexity theory, where complete
problems over the reals in various complexity classes are stated in
terms of semialgebraic sets~\cite{BSS89,BC:09}, in mathematical
programming~\cite{BGH:05}, robotics~\cite{CannyThesis,SchSha:88},
\dots. Not surprisingly, in the last decades, a substantial amount of
work was devoted to the design of algorithms for problems involving
semialgebraic sets. This goal was already present in Tarski's
work~\cite{Tarski51}, where a procedure to decide the first-order
theory of the reals is given. 
In the 1970's Collins~\cite{Collins75} and W\"uthrich~\cite{Wut76},
independently introduced an algorithm, today known as {\em Cylindrical
Algebraic Decomposition} (and usually referred as CAD) that allowed to
solve most of the problems mentioned before.  The cost of running CAD
on a list of $q$ polynomials of degree at most $D$ in $n$ variables
is $(q D)^{2^{\mcO(n)}}$, that is, it has a doubly exponential
dependence on the number of variables, and this dependence is generic:
it does hold for all choices of coefficients for the polynomials in
the list outside a smaller dimensional set. In addition, since in all
applications one needs to first compute a CAD and only then solve from
this CAD the problem at hand, this generic doubly exponential
complexity appears to be unavoidable no matter the problem
considered. A new set of ideas, known as the {\em critical points
method}, was proposed in the late 1980's by Grigori'ev and
Vorobjov~\cite{GriVo88,Gri88}. Using these ideas, complexity bounds
improved to singly exponential in $n$, $(q D)^{\mcO(n)}$, for many
of the questions considered on semialgebraic sets: deciding
emptiness~\cite{GriVo88,Ren92a,BaPoRo96}, counting connected
components~\cite{BaPoRo99,canny:93,canny-grig-voro:92,GriVo92,HeRoSo:94},
computing the dimension~\cite{Koi98}, the Euler-Poincar\'e
characteristic~\cite{Basu_1996}, and the first few~\cite{Basu_2006}
Betti numbers.

A problem that was conspicuously left out of these improvements is
that of computing the sequence of homology groups of a semialgebraic
set. Of course, the list above contains partial results in this
direction (the number of connected components being the 0th Betti
number) but, as of today, no single exponential algorithm has been
devised returning the whole sequence of homology groups over $\bbZ$
(which gives information about the torsion in addition to the Betti
numbers). This sequence being arguably the most important set of
topological invariants, the importance of its computation can hardly
be overemphasized.
\smallskip

All the algorithms mentioned above are {\em symbolic} in the sense
that they assume infinite precision in the computations. If
implemented with finite precision they may, and experience shows 
they often do, return meaningless outputs. Driven by a search of
numerical stability, Cucker and Smale devised a {\em numerical}
algorithm for deciding emptiness~\cite{CS98}. The possibility of
round-off errors implies the existence of a set of inputs for
which, no matter the machine precision nor the algorithm at hand,
there exists computations that return a wrong answer (wrong number
of connected components, wrong dimension, etc.). This set of 
inputs, referred to as {\em ill-posed}, is usually lower 
dimensional and hence of measure zero in the space of data. 
Numerical algorithms are not expected to return an output on 
them. Instead, the computation time is expected to increase with
the proximity to the set of ill-posed inputs, a proximity which is
usually measured by, or closely related to, the {\em condition number}
of the input at hand. Complexity estimates are therefore commonly
expressed in the dimensions of the data and the condition number.

Yet, condition numbers are not, in general, known a priori, a fact that 
makes condition-based complexity bounds to be less informative. The 
standard way to overcome this drawback, pioneered by Goldstine and von
Neumann~\cite{vNGo51}, Demmel~\cite{Demmel88}, and
Smale~\cite{Smale97} among others, consists of accepting statistical
bounds in exchange of eliminating the condition number from these 
bounds.  To do so, the space of data is endowed with a probability
measure (usually a Gaussian on an Euclidean space or the uniform
distribution on a sphere) that allows one to treat the condition
number as a random variable. In this setting, the most common form of
analysis is the {\em average analysis}, that aims to bound the
expected running time of the algorithm in terms of the data's
dimensions only. Recently, however, Amelunxen and Lotz~\cite{AmLo:17}
brought in a different form of analysis with the aim of giving a
theoretical explanation of the efficiency in practice of numerical
algorithms whose average complexity was too high. A paradigm of this
situation is the power method to compute the leading eigenpair of a
Hermitian matrix: this algorithm is very fast in practice, yet the
average number of iterations it performs has been shown to be
infinite~\cite{Kostlan88}. Amelunxen and Lotz realized that here, as in 
many other problems, this disagreement between theory and practice is due to the
presence of a vanishingly small (more precisely, having a measure exponentially 
small with respect to the input size) set of outliers, outside of which the
algorithm can be shown to be efficient. Complexity estimates holding 
outside a set of exponentially small measure were called {\em weak} 
in~\cite{AmLo:17}. 

The lines above are the background against which we can state our main result.
\medskip

\noindent{\bf Statement of the main result.}\quad  
Let $n\geq 2$, $q\geq 1$, and $\bfd=(d_1,\ldots,d_q)$, with $d_i\geq 1$ 
for $i=1,\ldots,q$. We denote by $\Pd[q]$ the vector space of polynomial 
tuples $p=(p_1,\ldots,p_q)$ with $p_i\in\bbR[X_1,\ldots,X_n]$ of degree at most 
$d_i$. We let $D:=\max\{d_1,\ldots,d_q\}$ and denote by $N$ the dimension of 
$\Pd[q]$. 
We endow this space with the Weyl inner product (see~\S\ref{sec:Weyl} 
below) and the resulting unit sphere 
$\bbS(\Pd[q])=\bbS^{N-1}$ with the uniform probability measure. 

We say that a Boolean combination $\Phi$ over $p\in\Pd[q]$ is 
{\em lax} if it has no complements; only unions and intersections 
of the atomic sets $\{p_i(x)\le 0\}$, $\{p_i(x)=0\}$, 
and $\{p_i(x)\ge 0\}$, for $i=1,\ldots,q$ 
(see \S\ref{sec:logic}). 
The {\em size} of $\Phi$, denoted 
$\size(\Phi)$ is the number of unions and intersections in the sequence. 
We note that $\Phi$ defines a closed 
semialgebraic subset of $\bbR^n$ which we will denote by $W(p,\Phi)$. 
Finally, we associate to $p$ a condition number $\kappaff(p)$ (whose 
precise definition is in~\S\ref{sec:affine}). This condition number does not depend 
on $\Phi$. 

The {\em size} of a pair $(p,\Phi)$ is 
$\size(p,\Phi):=N+\size(\Phi)$. 

\begin{theo}\label{thm:main_result}
We exhibit a stable numerical algorithm~{\sf Homology} 
that, given a tuple $p\in\Pd[q]$ and a lax Boolean 
combination $\Phi$ over $p$, computes the homology groups of~$W(p,\Phi)$. 
The cost of~{\sf Homology} on input~$(p,\Phi)$,
denoted~$\cost(p,\Phi)$, that is, the number of arithmetic 
operations and comparisons in~$\bbR$, satisfies:
\begin{description}
\item[(i)] $\cost(p,\Phi)\leq 
\size(\Phi) q^{\Oh(n)} (nD\kappaff(p))^{\Oh(n^2)}$.
\end{description}
Furthermore, if $p$ is drawn from the uniform distribution on $\bbS^{N-1}$, then:
\begin{description}
\item[(ii)] $\cost(p,\Phi)\leq \size(\Phi) q^{\Oh(n)}
(nD)^{\Oh(n^3)}$ with probability at least 
$1-(nq D)^{-n}$, and 
\item[(iii)]
$\cost(p,\Phi)\leq 
2^{\mcO\big(\size(p,\Phi)^{1+\frac{2}{D}}\big)}$ 
with probability at least $1-2^{-\size(p,\Phi)}$.
\end{description}
\end{theo}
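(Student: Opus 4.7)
The argument decouples naturally into a condition-based deterministic analysis yielding (i), and probabilistic tail estimates on $\kappaff(p)$ yielding (ii) and (iii). The plan for the algorithm $\mathsf{Homology}$ is to replace $W(p,\Phi)$ by a finite simplicial model of the same homotopy type. After homogenizing, one works with a compact lift $W\hm\subset\bbS^n$; one samples an $\varepsilon$-net $X\subset\bbS^n$, evaluates every atomic $p_i$ at every $x\in X$, and uses the signs of these evaluations together with the parse tree of $\Phi$ to determine a Čech-type nerve $\cech{\varepsilon}{X}$ whose homology is then computed by Smith normal form. Laxness of $\Phi$ is what ensures $W\hm$ is closed and so permits a covering-based approximation; without negations there are no open atoms to reconcile.

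The first step is to establish correctness at the scale $\varepsilon\asymp (nD\kappaff(p))^{-1}$. The key geometric ingredient, presumably developed in an earlier section, is a quantitative nerve-type theorem asserting that at this scale $\cech{\varepsilon}{X}$ is homotopy equivalent to $W\hm$. The role of $\kappaff(p)$ is to quantify how much the zero sets of the $p_i$ can wiggle under perturbations, and hence how finely one must sample in order to avoid missing components of $W\hm$ or spuriously merging disjoint ones.

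To prove (i) I would count operations. The $\varepsilon$-net has $\varepsilon^{-n}=(nD\kappaff(p))^{\Oh(n)}$ points; evaluating the tuple $p$ at each of them takes $qN\,(nD\kappaff(p))^{\Oh(n)}$ arithmetic operations, and evaluating $\Phi$ contributes an extra factor $\size(\Phi)$. Assembling the Čech complex and computing its homology in dimension $\le n$ by Smith normal form adds $(nD\kappaff(p))^{\Oh(n^2)}$ operations, which dominates. Collecting terms yields the bound in (i). For (ii) and (iii) I would feed (i) into tail bounds on $\kappaff(p)$ for $p$ uniform on $\bbS^{N-1}$. A polynomial tail of the form $\prob\{\kappaff(p)\ge t\}\le (nqD)^{\Oh(n)} t^{-1}$, standard for condition numbers of random polynomial systems under the Weyl metric, gives (ii) upon setting $t=(nqD)^{\Theta(n)}$. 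For (iii) one uses the sharper estimate that outside a set of measure at most $2^{-\size(p,\Phi)}$ the condition number satisfies $\kappaff(p)\le 2^{\Oh(\size(p,\Phi)/D)}$; plugging this into the factor $\kappaff(p)^{\Oh(n^2)}$ of (i), and using $N\ge \binom{n+D}{D}$ to absorb the $n^2$ into $\size(p,\Phi)^{2/D}$, produces the $2^{\Oh(\size(p,\Phi)^{1+2/D})}$ bound.

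The main obstacle I foresee is the correctness step: establishing the quantitative nerve-type theorem for lax Boolean combinations with constants depending only on $\kappaff(p)$, $n$ and $D$. Because $\Phi$ mixes unions and intersections, one must show that the Čech complex models every sub-expression of $\Phi$ compatibly, most naturally via a Mayer--Vietoris induction along the parse tree of $\Phi$; laxness is precisely what makes this induction close, since negation would force one to approximate open sets by closed ones, destroying the local-to-global argument. The probabilistic parts (ii) and (iii) are then essentially bookkeeping once the two-sided tail estimates on $\kappaff(p)$ are in hand, provided the condition number has been defined to be invariant enough under the Weyl action to make such estimates tractable.
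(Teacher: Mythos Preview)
Your plan matches the paper's: homogenize to $\bbS^n$, sample a grid at scale $\varepsilon\asymp (D\kappaff(p))^{-1}$, assemble a Boolean combination of \v{C}ech complexes following the parse tree of $\Phi$, establish correctness via a Mayer--Vietoris (inclusion-exclusion) induction over the DNF of $\Phi$, count operations for~(i), and derive~(ii) and~(iii) from a polynomial tail $\prob\{\kappaff\ge t\}\lesssim (nqD)^{\Oh(n)}/t$. Two small corrections: first, the paper obtains only \emph{homological} equivalence of the complex with $\Ap(f,\Phi)$ for general lax $\Phi$, not homotopy equivalence---nearest-point retractions can fail for unions, which is precisely why the Mayer--Vietoris route is needed; second, in~(iii) the tail bound only gives $\kappaff(p)\le 2^{\Oh(\size(p,\Phi))}$ (no $/D$), and the exponent $n^2\cdot\size(p,\Phi)$ is then absorbed into $\size(p,\Phi)^{1+2/D}$ using $\size(p,\Phi)\ge N=\Omega(n^D)$.
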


\begin{obs}
A few comments on~Theorem~\ref{thm:main_result} are
called for.
\begin{description}
\item[(i)]
A detailed explanation, along with a proof, of the numerical stability
mentioned in the statement above is in Section~7 of~\cite{CKS16}. As
the same explanation, word by word, applies in our context we will not
deal with this issue in the rest of our exposition.
\item[(ii)]
Part~(iii) of Theorem~\ref{thm:main_result} shows, in the terminology
introduced by~\cite{AmLo:17}, that {\sf Homology} works in {\em weak
exponential time}. 
\item[(iii)]
It is easy to check that all the routines in algorithm {\sf Homology} do parallelize.
The parallel version of the algorithm can then be shown to work in parallel time 
$\size(p,\Phi)^{\mcO(1)}$ with probability at least 
$1-2^{-\size(p,\Phi)}$. 
That is, it works in 
{\em weak parallel polynomial time}. We will be more precise 
in~\S\ref{sec:parallel}.
\end{description}
\end{obs}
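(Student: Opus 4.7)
The Observation collects three largely independent remarks about Theorem~\ref{thm:main_result}, so I would treat them in sequence, reusing previously established machinery rather than developing new estimates.

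For part~(i), the plan is to transfer verbatim the stability analysis of~\cite[Section~7]{CKS16} to the present setting. Every subroutine of \textsf{Homology} is built from primitives already analyzed there: polynomial evaluation and sign determination at grid points, estimation of condition invariants along rays, and the combinatorial assembly of a simplicial complex. For each such primitive the discrete output is a piecewise-constant function of the input $p$, whose set of discontinuities lies in a real hypersurface whose distance to $p$ is controlled from below in terms of $1/\kappaff(p)$ (up to Weyl-type factors). Choosing the working precision at $\mathcal{O}(\log(\size(p,\Phi)\,\kappaff(p)))$ bits then forces all floating-point sign and comparison tests to agree with their exact counterparts, so the discrete output (a sequence of abelian groups) is provably correct. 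Because this is a purely mechanical adaptation, no new bound has to be proved.

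For part~(ii), the plan is simply to match the bound in Theorem~\ref{thm:main_result}(iii) against the definition of~\cite{AmLo:17}: an algorithm runs in \emph{weak exponential time} if, outside a set of inputs whose measure decays exponentially in the input size, its cost is bounded by $2^{\mathcal{O}(\size^{c})}$ for some constant $c$. Part~(iii) of the theorem provides exactly this, with failure set of measure at most $2^{-\size(p,\Phi)}$ and exponent $c = 1+2/D \leq 3$ (since $D\geq 1$) on its complement, so the observation follows by unfolding the definition.

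For part~(iii), the plan is to inspect \textsf{Homology} routine by routine and exhibit an efficient parallel implementation of each. The expensive work is of two types: (a)~evaluating the $p_i$ and the associated condition estimators at a grid of size $2^{\mathcal{O}(\size(p,\Phi)^{1+2/D})}$, and (b)~computing the homology of the nerve of a cover assembled from this grid. Step~(a) is embarrassingly parallel across grid points, and a single polynomial evaluation in $n$ variables of degree $D$ admits Brent-style circuits of depth $\mathcal{O}(\log(nD))$. Step~(b) reduces to pairwise incidence tests, again parallel across pairs, followed by Smith normal form on integer boundary matrices of size polynomial in $\size(p,\Phi)$, for which classical parallel algorithms of polylogarithmic depth are known. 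Combining these depth bounds with the probability estimate of Theorem~\ref{thm:main_result}(iii) yields parallel time $\size(p,\Phi)^{\mathcal{O}(1)}$ off a set of measure at most $2^{-\size(p,\Phi)}$, which is weak parallel polynomial time. The main obstacle I anticipate, and the reason the precise statement is deferred to~\S\ref{sec:parallel}, is the bookkeeping needed to control the depth of the Smith normal form step and to synchronize the numerical and combinatorial phases without inflating the processor count beyond polynomial.
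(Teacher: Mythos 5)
Your treatment of all three parts matches the paper's: part~(i) is handled by verbatim transfer from \cite{CKS16}, part~(ii) by unfolding the definition of weak exponential time from \cite{AmLo:17} against Theorem~\ref{thm:main_result}(iii), and part~(iii) by a routine-by-routine parallelization that is deferred to \S\ref{sec:parallel} (Proposition~\ref{prop:parallel}), exactly as the paper does. One small correction for part~(iii): building the \v{C}ech complex does \emph{not} reduce to pairwise incidence tests (that would give the Vietoris--Rips complex); a $k$-face requires certifying that $k+1$ balls have a common point, which the paper handles by observing that each such test is the truth of an existential formula whose parallel complexity is bounded in \cite{BaPoRo96} --- the tests remain independent across tuples, so your depth conclusion survives, but the per-test primitive is a small feasibility problem rather than a pair check.
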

\medskip

\noindent
{\bf Relations with previous work and new ideas.}\quad Our results have not grown 
in the vacuum. They owe ideas and intuitions to a number of works in
the literature. Our use of grids goes back to~\cite{CS98}.  The goal
in that paper was deciding feasibility of semialgebraic systems.
Subsequently, these ideas were extended to the problem of counting the
solutions of 0-dimensional real projective
sets~\cite{CKMW1,CKMW2,CKMW3} and, much more recently, to the
computation of homology groups. In~\cite{CKS16}, it is the homology of
real projective sets, and in~\cite{BCL17}, that of basic semialgebraic
sets. Some of the objects in these two papers ---notably the 
{\em algebraic neighborhoods} of a semialgebraic set 
given by $f=0$, $g\ge 0$ that were introduced in the last one 
(see~\S\ref{se:alg-neighborhoods} below)--- play a crucial 
role in our development. Yet, the road-map to compute the homology groups of the 
closed set $W$ at hand passes, in both papers, through computing a  
covering $\mcU$ of $W$ by open balls of the same radius such that the nearest-point 
map $\mcU\to W$ induces a deformation retraction. When $W$ is a general 
closed semialgebraic set,
however, such a covering may well not exist as the nearest point may be undefined 
at points arbitrarily close to $W$. A simple example of such a set $W$ 
is given in the following picture. 

\begin{figure}[h]
\begin{center}
\begin{tikzpicture}[scale=0.75,point/.style={draw,minimum size=0pt,
    inner sep=1pt,circle,fill=black}]
\fill[black!10!white] (-3.5,-2) -- (-3.5,2) -- (0,0);
\fill[black!10!white] (3.5,-2) -- (3.5,2) -- (0,0);
\draw (-3.5,-2) -- (3.5,2);
\draw (-3.5,2) -- (3.5,-2);
\end{tikzpicture}
\end{center}
\end{figure}
We therefore need to proceed differently (see Remark~\ref{rem:wfs} below for more on 
this need). Our idea is to aim for a covering $\mcU$ 
which is only {\sl homologically} equivalent to $S$. To obtain it, we decompose $W$ 
as the union of sets $S_i$ given in terms of intersections only (basic semialgebraic sets)
and repeatedly use Mayer-Vietoris sequences to recover the homology of $W$ from the 
homology of these pieces. This requires to consider a family of 
algebraic neighborhoods for each $S_i$ and to use homological algebra to establish 
isomorphisms between the homology groups of the $S_i$, those of their algebraic 
neighborhoods, and those of their coverings $\mcU_i$. One key ingredient to make this 
possible is the Semialgebraic Triangulation Theorem. Another key ingredient is 
ensuring that all the algebraic neighborhoods above are 
homotopically equivalent to their corresponding $S_i$ and probably the major technical 
effort in our agenda is to estimate a size (or tolerance) for these algebraic 
neighborhoods that guarantees this equivalence. 
We do this (in Section~\ref{se:geometry}) in terms of the condition number. Our estimate 
quantifies the results of Durfee~\cite{Durfee1983}. Its proof relies on an explicit 
construction of Whitney stratifications and submersions for which Thom's First 
Isotopy Lemma~\cite{thom:69}  
can be applied. This use of semialgebraic geometry, as well as of 
differential and algebraic topology, sets our arguments 
apart from those in~\cite{CKS16} and~\cite{BCL17}. 
\medskip

\noindent
{\bf Future work.}\quad We are currently working in two directions extending this 
paper. On the one hand, to design an algorithm that works for arbitrary (i.e., not 
necessarily closed and given by lax formulas) semialgebraic sets. On the other hand, 
to show that the exponential weak complexity holds for a class of probability 
distributions more general than the ones we consider here.
\medskip
\goodbreak

{\small
\tableofcontents
}
\medskip

\noindent
{\bf Acknowledgments.}\quad We are grateful to Saugata Basu, Pierre Lairez 
and Nicolai Vorobjov for helpful discussions. 
\section{Overview of the Algorithm}\label{sec:overview}

In this section we describe, with broad strokes, the various steps and ingredients 
of algorithm {\sf Homology}. 

\subsection{Boolean combinations and propositional logic}\label{sec:logic}

There is a close relationship between lax Boolean combinations of equalities and 
inequalities and propositional logic. Indeed, any such Boolean combination 
over $p\in\Pd[q]$ corresponds to a propositional formula~$\Phi$ over $3q$ 
propositional variables 
$w^{\propto_j}_j$, $\propto_j\in\{\le,=,\ge\}$, $j\in\{1,\ldots,q\}$, built using 
the connectives $\vee$ and $\wedge$. 
Our Boolean combination is obtained by replacing $w^{\propto_j}_j$ by 
$\{p_j(x)\propto_j 0\}$ as well as $\vee$ by $\cup$ and $\wedge$ by $\cap$. We 
will freely use this correspondence all along this paper. In particular, we 
will use the expression {\em lax formula over $p\in\Pd[q]$} to denote a propositional 
formula as described above. 

A class of formulas of particular importance is that of {\em purely conjunctive} 
(usually referred to as {\em clauses} in the context of mathematical logic). Such  
formulas have the form $\bigwedge v_i$ where $v_i$ is either a variable or the 
negation of a variable (in the case of lax formulas only the first case is 
possible). Subsets of $\bbR^n$ defined by formulas of this kind over tuples 
$p\in\Pd[q]$ are called {\em basic semialgebraic}. 

A formula $\Psi$ is said to be in {\em Disjunctive Normal Form} (or in DNF for 
short) when it has the form 
$$
   \Psi\equiv \bigvee_{i\in I} \psi_i
$$
with $\psi_i$ purely conjunctive. It is a well-known fact 
(see, e.g.,~\cite[Theorem~3 in Section~2.3]{Hodel}) that for 
every propositional formula $\Phi$ there exists an 
equivalent formula $\Psi$ in DNF in the same set of variables. 
The same holds true when restricted to lax formulas. 
In our context this implies that, for all $p\in\Pd[q]$, 
the sets $W(p,\Phi)$ and $W(p,\Psi)$ coincide. 

\subsection{Homogenization}\label{sec:homog}

As before, let $\bfd =(d_1,\ldots,d_q)$ be a $q$-tuple of positive integers. 
We denote by $\Hd[q]$ the 
vector space of $q$-tuples $f=(f_1,\ldots,f_q)$ of  homogeneous polynomials 
$f_i\in\bbR[X_0,X_1,\ldots,X_n]$ of degree~$d_i$. 
We put $\bfd^* := (1,\bfd)$.
The {\em homogenization map} $\Hm:\Pd[q]\rightarrow 
\mcH_{\bfd^*}[q+1]$ is defined by
\begin{equation*}
  p\mapsto \Hm(p):=(\|p\|X_0,p_1\hm,\ldots,p_q\hm) ,
\end{equation*}
where $p_i\hm:=p_i\left(X_1/X_0,\ldots,X_n/X_0\right)X_0^{d_i}$ denotes 
the homogenization of $p_i$ and $\|p\|$ stands for the Weyl norm of 
the tuple~$p$ defined in \S\ref{sec:Weyl}. 

Any formula $\Phi$ over $f\in\Hd[q]$ 
defines a semialgebraic subset $\Ap(f,\Phi)$ of the sphere~$\bbS^n$. 
It will be convenient to call these sets {\em spherical semialgebraic}. 
In order to simplify the notation, we will also write $\Ap(f=0)$ etc.\ with 
the obvious meaning. 

The following result relates, for $p\in\Pd[q]$ and a formula $\Phi$
over $p$, the topology of the semialgebraic subset $W(p,\Phi)$ of the 
Euclidean space~$\bbR^n$ 
with that of the intersection of the spherical semialgebraic subset $\Ap(\Hm(p),\Phi)$
with the halfspace $X_0\ge 0$. We note that such 
a result would be straightforward if one intersected with $X_0>0$ instead.

\begin{prop}\label{generaltospherical}
Let $p\in\Pd[q]$ such that $\kappaff(p)<\infty$, let $\Phi$ be a lax formula over~$p$ 
and denote by $\Phi\hm$ the formula over $\Hm(p)\in\mcH_{\bfd^*}[q+1]$ given by
\[
  \Phi\hm:=\Phi(p_1\hm,\ldots,p_q\hm)\wedge \big(\|p\|X_0\ge 0 \big).
\]
Then the spaces $W(p,\Phi)$ and $\Ap(\Hm(p),\Phi\hm))$ are homotopically equivalent.
\end{prop}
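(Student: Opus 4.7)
The plan is to exhibit the homotopy equivalence through the natural embedding $\iota \colon \bbR^n \to \bbS^n$ given by $\iota(x) = (1,x)/\sqrt{1+\|x\|^2}$, which is a homeomorphism of $\bbR^n$ onto the open upper hemisphere $\{X_0 > 0\}$. Writing $A := \Ap(\Hm(p), \Phi\hm)$ for brevity, I would first verify that $\iota$ restricts to a homeomorphism $W(p,\Phi) \xrightarrow{\cong} A \cap \{X_0 > 0\}$. This follows from the homogeneity of the $p_j\hm$: for $(X_0, X) \in \bbS^n$ with $X_0 > 0$, one has $p_j\hm(X_0, X) = X_0^{d_j}\, p_j(X/X_0)$, which shares its sign with $p_j(X/X_0)$ since $X_0^{d_j} > 0$. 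The additional atom $\|p\|X_0 \ge 0$ of $\Phi\hm$ is automatic on the open hemisphere, using $\|p\| > 0$, which follows from $\kappaff(p) < \infty$.

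It then suffices to prove that the inclusion $A \cap \{X_0 > 0\} \hookrightarrow A$ is a homotopy equivalence. Writing $E := A \cap \{X_0 = 0\}$, the missing part $E$ sits entirely on the equator $\bbS^{n-1}$. The strategy is to construct a deformation retraction of $A$ onto a compact subset $K \subset A \cap \{X_0 > 0\}$ that is also a deformation retract of $A \cap \{X_0 > 0\}$; composing the two retractions (and applying the chart identification) yields the desired chain $W(p,\Phi) \simeq A \cap \{X_0 > 0\} \simeq A$.

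To produce the deformation retraction, I would endow $\bbS^n$ with a Whitney stratification compatible with $A$ and with the equator $\{X_0 = 0\}$, so that the strata of $A$ refine the locally closed pieces cut out by the atoms of $\Phi\hm$. I would then consider the gradient of the height function $X_0$ (in the round metric on $\bbS^n$), projected stratum-by-stratum to obtain a stratified vector field $\xi$ on $A$ pointing into the open hemisphere. Integrating $\xi$ and combining this flow with Thom's First Isotopy Lemma yields, for some small $\epsilon > 0$, a trivialization of $A \cap \{0 \le X_0 \le \epsilon\}$ as a stratified product $E \times [0,\epsilon]$, from which the required retraction onto $K = A \cap \{X_0 \ge \epsilon\}$ follows.

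The main obstacle is verifying that the stratification is Whitney regular along $E$ and that $X_0$ restricts to a stratified submersion on a neighborhood of $E$ in $A$. This is precisely where the hypothesis $\kappaff(p) < \infty$ enters: finiteness of the affine condition number prevents the zero loci of the $p_j\hm$ on $\bbS^n$ from becoming tangent to the equator, ruling out pathologies such as isolated components of $A$ on the equator disconnected from $A \cap \{X_0 > 0\}$ (as happens, for instance, for $p_1(X_1, X_2) = X_1 + X_2^2$ with $\Phi \equiv (p_1 = 0)$, where the leading form $X_2^2$ is tangent to $\{X_0 = 0\}$ and $\kappaff(p) = \infty$). Once this transversality is in hand, the stratified flow is well-defined and the proof concludes.
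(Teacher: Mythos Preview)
Your reduction to showing that the inclusion $A\cap\{X_0>0\}\hookrightarrow A$ is a homotopy equivalence is exactly the paper's first step, and your instinct that the finiteness of $\kappaff(p)$ is what rules out tangency with the equator is correct. From there, however, the paper proceeds differently and more concretely.

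Rather than setting up a Whitney stratification and invoking Thom's First Isotopy Lemma, the paper shows that the equatorial part $E=\Ap(\Hm(p),\Phi^{=})$ is \emph{collared} in $A=\Ap(\Hm(p),\Phi\hm)$, and then applies the elementary fact (\cite[Lemma~4.13]{BCL17}) that removing a collared subset does not change the homotopy type. By Brown's Collaring Theorem it suffices to produce a local collar near each $x\in E$, and this is done by an explicit coordinate change: since $\kappaff(p)=\kappabar(\Hm(p))<\infty$, the trivializing-coordinates Lemma~\ref{lem:trivialcoordinates} (an application of the Implicit Function Theorem) gives a chart around $x$ in which the components of $\Hm(p)$ that are small at $x$---in particular $\|p\|X_0$---become coordinate functions $U_0,U_i$. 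In these coordinates, the formula $\Phi\hm$ reads $\Xi\wedge(U_0\ge 0)$ and $\Phi^{=}$ reads $\Xi\wedge(U_0=0)$, where $\Xi$ does not involve $U_0$; the collar is then simply $(t,(z,u))\mapsto(z,u_0+rt,u_{S\setminus 0})$.

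Your stratified-flow approach would also work, but what you describe is really the \emph{proof} of Thom's lemma (constructing a controlled vector field and integrating it) rather than its application; if you invoke the lemma as a black box you only need to check that $X_0$ is a proper stratified submersion on $A\cap\{0<X_0<\epsilon\}$. Verifying that submersion condition on a suitable stratification is precisely the content of Lemma~\ref{lem:trivialcoordinates}, so the two arguments converge at the same technical core. The paper's route via collaring avoids having to organize the local product structures into a global Whitney stratification and sidesteps the Mather--Thom machinery entirely for this particular step.
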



We will prove Proposition~\ref{generaltospherical}  in~\S\ref{sec:affine}. 
It allows us to assume, in all 
that follows, that we are dealing with spherical semialgebraic sets 
given by Boolean combinations of homogeneous polynomials. We will 
freely use in this new context the terminology introduced 
in~\S\ref{sec:logic}. 

\subsection{Estimation of the condition number}

In \S\ref{sec:kappabar} we will associate a condition number
$\kappabar(f)\in [1,\infty]$ to a tuple 
$f\in\Hd[q]$ whose inverse measures how near are the intersections between the 
hypersurfaces given by $f$ from being non-transversal (the condition
number $\kappaff(p)$ 
in Theorem~\ref{thm:main_result} is actually $\kappabar(\Hm(p))$).  
This condition number provides information on the geometry of every possible 
spherical semialgebraic set built from $f$. Tuples $f$ for which 
$\kappabar(f)=\infty$ are said to be {\em ill-posed}. They are precisely those 
tuples for which there exists a formula $\Phi$ such that arbitrary 
small perturbations of $f$ may change the topology of $\Ap(f,\Phi)$. 
The set $\overline{\Sigma}_{\bfd}[q]$ of ill-posed tuples has positive codimension in 
$\Hd[q]$ and $\kappabar(f)$ estimates how far is $f$ from $\overline{\Sigma}_{\bfd}[q]$. 

The first substantial computational effort performed by {\sf Homology} 
is to estimate the condition number $\kappabar(f)$ of a tuple 
$f\in\Hd[q]$. The following result, which we will prove in~\S\ref{sec:est-kappa}, 
deals with this task.

\begin{prop}\label{prop:kappa-est}
There is an algorithm {\sc $\kappabar$-Estimate}
that, given $f\in\Hd[q]$, returns a number~$\sfK$ such that 
$$
   0.99\,\kappabar(f) \ \leq\ \sfK \ \leq\ \kappabar(f) 
$$
if $\kappabar(f)<\infty$,  
or loops forever otherwise. The cost of this algorithm is bounded 
by $\big(qnD\kappabar(f)\big)^{\Oh(n)}$.   
\end{prop}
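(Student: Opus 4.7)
The plan is to use adaptive grid refinement on the sphere $\bbS^n$ combined with a Lipschitz estimate on a pointwise version of the condition number. The usual definition of $\kappabar$ in this setting presents it as $\kappabar(f) = \sup_{x \in \bbS^n} \kappabar(f,x)$ for a suitable pointwise quantity. Setting $\mu(x) := 1/\kappabar(f,x)$, the first step is to establish that $\mu$ is Lipschitz on $\bbS^n$ with constant $L = (nD)^{\mcO(1)}$. This Lipschitz bound is the bridge from discrete grid evaluations to rigorous two-sided bounds on the continuous infimum $\mu^* := \inf_{x} \mu(x) = 1/\kappabar(f)$, and I expect it to be the main technical ingredient.

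Given such a Lipschitz bound, the algorithm {\sc $\kappabar$-Estimate} is a doubling loop. At stage $k = 1, 2, \ldots$, maintain a grid $\mcG_k \subset \bbS^n$ of covering radius $\eta_k := 2^{-k}$, compute $m_k := \min_{g \in \mcG_k} \mu(g)$, and test whether $L\eta_k \leq 0.01\, m_k$. If yes, return $\sfK := 1/m_k$; otherwise refine $\mcG_k$ to $\mcG_{k+1}$ of covering radius $\eta_{k+1} = \eta_k/2$ and iterate. The Lipschitz inequality yields $\mu^* \leq m_k \leq \mu^* + L\eta_k$, so when the test triggers one has $m_k \leq \mu^*/(1-0.01) \leq \mu^*/0.99$, hence $0.99\,\kappabar(f) \leq 1/m_k \leq \kappabar(f)$ as required.

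If $\kappabar(f) < \infty$ then $\mu^* > 0$, and once $\eta_k \leq 0.005\,\mu^*/L$ the test fires, which happens after $\mcO(\log(L\kappabar(f)))$ iterations. If $\kappabar(f) = \infty$ then $\mu^* = 0$ and the Lipschitz bound only gives $m_k \leq L\eta_k$, forcing $L\eta_k/m_k \geq 1$ forever, so the algorithm loops, as allowed by the statement. For the complexity, a standard $\eta$-net of $\bbS^n$ has $(1/\eta)^{\mcO(n)}$ points, so $|\mcG_k| = (nD\kappabar(f))^{\mcO(n)}$ at termination; evaluating $\mu$ at one grid point reduces to a smallest-singular-value computation on matrices built from the $f_i$ and their gradients, at cost $(qnD)^{\mcO(1)}$. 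Since grid sizes grow geometrically, the total cost is dominated by the last stage and comes out to $(qnD\kappabar(f))^{\mcO(n)}$.

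The main obstacle will be establishing the Lipschitz constant $L = (nD)^{\mcO(1)}$. Since $\kappabar(f,x)$ typically involves smallest singular values of matrices built from gradients of subsets $f_I \subseteq \{f_1,\ldots,f_q\}$ at $x$, the required Lipschitz control must be extracted from second-order bounds on the $f_i$ in the Weyl norm, and done uniformly in the subset $I$ without a union bound over $2^q$ subsets infecting the final constant; calibrating the per-subset estimate and aggregating it cleanly across the $\sup$ defining $\kappabar(f)$ is the principal technical work to be carried out in~\S\ref{sec:est-kappa}.
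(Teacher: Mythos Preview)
Your proposal is correct and follows essentially the same route as the paper: adaptive grid refinement on $\bbS^n$, a Lipschitz bound on the reciprocal of the pointwise condition, and a stopping test comparing grid resolution to the current estimate. Two small calibrations are worth making. First, the Lipschitz constant is simply $2D$ (cited from \cite[Proposition~4.7]{BCL17} for each $\kappa(f^L,\cdot)^{-1}$, and the minimum of $2D$-Lipschitz functions is $2D$-Lipschitz), so your worry about a $2^q$ blowup in the constant is unfounded; the $q$-dependence enters only through the number of subsets $L$ with $|L|\le n+1$, which is at most $(q+1)^{n+1}$ and contributes to the per-point evaluation cost rather than to $L$. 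Second, that per-point cost is therefore $q^{\mcO(n)}(nD)^{\mcO(n)}$ rather than $(qnD)^{\mcO(1)}$, but this still lands you at the stated bound $(qnD\kappabar(f))^{\mcO(n)}$.
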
 

\subsection{Homologically equivalent complexes}\label{se:HEC}

The master plan to compute the homology of $\Ap(f,\Phi)$ passes through 
computing a simplicial complex $\fkC$ homologically equivalent to
$\Ap(f,\Phi)$. In the basic case, that is, when 
$\Phi$ is purely conjunctive, 
the construction of $\fkC$ is based on Theorem~\ref{teo:bcl} below.
Recall that the {\em Hausdorff distance} between 
two nonempty compact sets $W,V\subseteq\bbR^{n+1}$ is given by
$$
   d_H(W,V):=\max\Big\{\max_{v\in V} d(W,v),
   \max_{w\in W} d(w,V)\Big\}
$$
where $d$ denotes Euclidean distance in $\bbR^{n+1}$. 
If either $V$ or $W$ is empty then we take $d_H(V,W):=\infty$. 
We denote by $B(x,r)$ the Euclidean open ball 
with center $x$ and radius~$r$. 
Moreover, for a closed subset $X\subseteq\bbS^n$, 
we define the {\em open $r$-neighborhood of $X$} in $\bbR^{n+1}$,
\begin{equation}\label{eq:U}
\mcU(X,r):=\bigcup_{x\in X} B(x,r).
\end{equation} 
Similarly, we denote by $\mcU_{\bbS}(X,r)$ 
the {\em open (spherical) $r$-neighborhood} of $X$ in $\bbS^n$,  
which is defined with respect to angular distance. 
Clearly, 
\begin{equation}\label{eq:USU}
\mcU_{\bbS}(X,r)\subseteq  \mcU(X,r) .
\end{equation}
The following result~\cite[Theorems~2.8 and~4.12]{BCL17} 
is a variant 
of a seminal result by Niyogi, Smale, and Weinberger~\cite{NiSmWe2008}. 

\begin{theo}[Basic Homotopy Witness Theorem]\label{teo:bcl}
Let $f\in\Hd[q]$ and $\phi$ be purely conjunctive.
Moreover, let $\mcX\subseteq\bbS^n$ be a closed subset 
and $\varepsilon>0$ be such that
$$
   3 d_H\big(\mcX, \Ap(f,\phi)\big) 
   < \varepsilon <\frac{1}{14 D^{\frac32}\kappabar(f)}.
$$ 
Then the inclusion $\Ap(f,\phi) \hookrightarrow \mcU(\mcX,\varepsilon)$ 
is a homotopy equivalence.\eproof
\end{theo}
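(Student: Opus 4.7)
The strategy is the Niyogi--Smale--Weinberger paradigm, adapted to basic semialgebraic sets. The central task is to prove that, in the regime $\varepsilon<1/(14 D^{3/2}\kappabar(f))$, the set $\Ap(f,\phi)$ has Euclidean ``reach'' comfortably larger than $\varepsilon$; homotopy information can then be transferred from $\Ap(f,\phi)$ to $\mcU(\mcX,\varepsilon)$ via a Hausdorff sandwich.

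The first and main step is a tubular-neighborhood lemma: there is a universal constant $c$ such that for every $\rho<c/(D^{3/2}\kappabar(f))$, the nearest-point projection $\pi\colon\mcU(\Ap(f,\phi),\rho)\to\Ap(f,\phi)$ is well-defined and continuous, and the straight-line homotopy $H_t(y)=(1-t)y+t\pi(y)$ exhibits $\Ap(f,\phi)$ as a deformation retract of $\mcU(\Ap(f,\phi),\rho)$. Since $\phi$ is purely conjunctive, $\Ap(f,\phi)$ is a stratified set whose strata are cut out by some $f_i$ vanishing and the remaining ones having prescribed strict sign. The condition number $\kappabar(f)$ quantifies the transversality of the hypersurfaces $\Ap(f_i=0)$: perturbations of $f$ of Weyl-norm at most $1/\kappabar(f)$ preserve the combinatorial structure of the stratification. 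Converting this algebraic transversality into a metric reach requires controlling (i) the curvature of each $\Ap(f_i=0)$, for which the Weyl inner product yields bounds of order $D$ on Hessians and $\sqrt{D}$ on gradients restricted to $\bbS^n$, and (ii) the opening angles at which different hypersurfaces meet, which is directly what $\kappabar$ bounds. The exponent $3/2$ is the compound effect of these two contributions.

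Given the reach lemma, the theorem follows from a short sandwich argument. The hypothesis $3 d_H(\mcX,\Ap(f,\phi))<\varepsilon$ together with the triangle inequality yields
$$
\Ap(f,\phi)\ \subseteq\ \mcU(\mcX,\varepsilon)\ \subseteq\ \mcU\big(\Ap(f,\phi),\tfrac{4}{3}\varepsilon\big),
$$
and the upper bound on $\varepsilon$ places $\tfrac{4}{3}\varepsilon$ well within the reach tube (this is how the constant $14$ is calibrated). The tubular-neighborhood lemma then makes $\pi$ a deformation retraction of $\mcU(\Ap(f,\phi),\tfrac{4}{3}\varepsilon)$ onto $\Ap(f,\phi)$ and, restricted to $\mcU(\mcX,\varepsilon)$, it is again a deformation retraction onto $\Ap(f,\phi)$; the straight-line homotopies remain legitimate because every relevant segment stays inside the reach tube. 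Hence the inclusion $\Ap(f,\phi)\hookrightarrow\mcU(\mcX,\varepsilon)$ is a homotopy equivalence, as required.

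The hard part is, without question, the reach lemma: it demands a clean quantitative passage from an algebraic condition number defined in polynomial space with the Weyl norm to the metric reach of a stratified subset of the sphere, with explicit universal constants, and the two failure modes (smooth curvature of each $\Ap(f_i=0)$ and the corner geometry where several hypersurfaces meet) must be handled simultaneously. Once that quantitative tubular-neighborhood statement is in place, everything downstream is elementary homotopy theory.
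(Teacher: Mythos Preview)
Your outline is correct and matches the approach the paper attributes to this result: the theorem is not proved in this paper but quoted from~\cite[Theorems~2.8 and~4.12]{BCL17}, and Remark~\ref{rem:wfs} confirms that the cited proof proceeds exactly as you describe---a quantitative reach bound for $\Ap(f,\phi)$ in terms of $\kappabar(f)$ (this is~\cite[Thm.~4.12]{BCL17}) followed by the Niyogi--Smale--Weinberger nearest-point retraction and sandwich argument. Your identification of the reach lemma as the hard part, and of the two failure modes (curvature of individual hypersurfaces versus corner angles at intersections) as the sources of the $D^{3/2}$ factor, is accurate.
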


For a finite $\mcX$ satisfying the hypothesis of 
Theorem~\ref{teo:bcl} we can take $\fkC$ to be the {\em \v{C}ech complex} 
$\cech{\varepsilon}{\mcX}$ 
associated to $(\mcX,\varepsilon)$ whose $k$-faces, we 
recall~\cite[p.~60]{edelsbrunner_harer:2010}, 
are the sets of $k+1$ points $\{x_0,\ldots,x_k\}\subseteq\mcX$ such that
$\cap_{i\leq k} B(x_i,\varepsilon)\neq \varnothing$. The Nerve
Theorem~\cite[Corollary~4G.3]{hatcher} then states that the simplicial complex~$\fkC$ is
homotopically equivalent to $\mcU(\mcX,\varepsilon)$ and, 
by Theorem~\ref{teo:bcl}, 
to $\Ap(f,\phi)$ as well.  To compute the homology of $\Ap(f,\phi)$ it is therefore 
enough to construct a pair $(\mcX,\varepsilon)$ satisfying the
inequalities in Theorem~\ref{teo:bcl}, then build the complex $\fkC$, and 
finally compute the homology of $\fkC$. 
 
The retraction in the proof of Theorem~\ref{teo:bcl} is a nearest-point 
retraction. As we saw in the Introduction, these retractions do not 
necessarily exist for arbitrary closed semialgebraic sets. This is why 
we won't attempt to obtain a complex homotopically equivalent to $\Ap(f,\Phi)$. 
Instead, we will show that an appropriate {\em Boolean combination of 
\v{C}ech complexes} achieves {\sl homological equivalence}. 
We briefly describe how this is done.

Fix a finite set of points $\mcG$ in $\bbS^n$ and $\varepsilon>0$. 
For $\mcX_1,\mcX_2\subseteq\mcG$ we define the {\em intersection}  
$\cech{\varepsilon}{\mcX_1}\cap \cech{\varepsilon}{\mcX_2}$ 
to be the simplicial complex whose $k$-faces are the sets of points 
$\{x_0,\ldots,x_k\}$ which are $k$-faces of both
$\cech{\varepsilon}{\mcX_1}$ and $\cech{\varepsilon}{\mcX_2}$. 
It is clear that  
$\cech{\varepsilon}{\mcX_1}\cap \cech{\varepsilon}{\mcX_2}
= \cech{\varepsilon}{\mcX_1\cap \mcX_2}$. 
Similarly, we define the {\em union} 
$\cech{\varepsilon}{\mcX_1}\cup \cech{\varepsilon}{\mcX_2}$ 
to be the simplicial complex whose $k$-faces are the sets of points 
$\{x_0,\ldots,x_k\}$ which are $k$-faces of either 
$\cech{\varepsilon}{\mcX_1}$ or $\cech{\varepsilon}{\mcX_2}$. 
We observe that, in contrast with the behavior for intersections, 
we now only have 
$\cech{\varepsilon}{\mcX_1}\cup \cech{\varepsilon}{\mcX_2}
\subseteq \cech{\varepsilon}{\mcX_1\cup \mcX_2}$. 
The union complex is not necessarily a \v{C}ech complex over a subset of $\mcG$. 

Given a lax formula $\Phi$ over $f\in\Hd[q]$ and finite sets 
$\mcX_j^{\leq},\mcX_j^{=},\mcX_j^{\geq}\subseteq\bbS^n$, for $j=1,\ldots,q$, 
associated to the $3q$ {\em atomic} sets 
\begin{align}\label{eq:atomic}
   S_j^\leq &:=\Ap(f_j\leq 0),\nonumber\\ 
   S_j^=&:=\Ap(f_j=0), \\ 
   S_j^\geq&:=\Ap(f_j\geq 0),\nonumber
\end{align}
we can then consider the simplicial complex 
$\Phi\left(\cech{\varepsilon}{\mcX_1^\leq},\ldots, 
\cech{\varepsilon}{\mcX_{q}^\geq}\!\right)$, recursively built from 
the $\cech{\varepsilon}{\mcX_j^{\propto_j}}$ in the same way $\Ap(f,\Phi)$ 
is built from the $S_j^{\propto_j}$. 

The following result is our extension of Theorem~\ref{teo:bcl}. 
We will will prove it in~\S\ref{sec:complexes}.

\begin{theo}[Homology Witness Theorem]\label{theo:lax-case} 
Let $f\in\Hd[q]$ and $\varepsilon>0$. 
Moreover, for $j=1,\ldots,q$, let 
$\mcX_j^\leq,\mcX_j^=,\mcX_j^\geq\subseteq\bbS^n$ be closed subsets
such that for all $j$, $\mcX_j^\leq\cap \mcX_j^\geq=\mcX_j^=$ and such that for all $J\subseteq\{1,\ldots,q\}$ and all
$\propto\in\{\leq,=,\geq\}^J$, we have 
$$
   3 d_H\big(\cap_{j\in J}\mcX_j^{\propto_j}, \cap_{j\in J}S_j^{\propto_j}\big) 
   \ \leq\ \varepsilon \ \leq\ \frac{1}{14 D^{\frac32}   \kappabar(f)}.
$$ 
Then, for all lax formulas $\Phi$ over $f$, 
the set $\Ap(f,\Phi)$ and the simplicial complex 
$$
 \fkC =\Phi\Big(\cech{\varepsilon}{\mcX_1^{\leq}},\cech{\varepsilon}{\mcX_1^{=}},\cech{\varepsilon}{\mcX_1^{\geq}},\ldots, 
 \cech{\varepsilon}{\mcX_q^{\leq}},\cech{\varepsilon}{\mcX_q^{=}},\cech{\varepsilon}{\mcX_q^{\geq}}\!\Big)
$$ 
have the same homology.
\end{theo}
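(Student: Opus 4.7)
The plan is to reduce $\Phi$ to disjunctive normal form (DNF) and then argue by induction on the number of disjuncts, using Mayer--Vietoris to compare long exact sequences on both sides.

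First, by \S\ref{sec:logic} we may assume $\Phi \equiv \bigvee_{i \in I} \psi_i$ with each $\psi_i = \bigwedge_{j \in J_i} w_j^{\propto_{i,j}}$ purely conjunctive, so $\Ap(f,\Phi) = \bigcup_{i \in I} \Ap(f,\psi_i)$. Since the recursive construction of $\fkC$ commutes with unions and intersections of simplicial complexes, $\fkC = \bigcup_i \fkC_{\psi_i}$; and the identity $\cech{\varepsilon}{\mcY_1} \cap \cech{\varepsilon}{\mcY_2} = \cech{\varepsilon}{\mcY_1 \cap \mcY_2}$ noted in \S\ref{se:HEC} gives $\fkC_{\psi_i} = \cech{\varepsilon}{\mcX_{\psi_i}}$, where $\mcX_{\psi_i} := \bigcap_{j \in J_i} \mcX_j^{\propto_{i,j}}$.

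For the base case of a purely conjunctive $\psi_i$, the hypothesis of the theorem applied to $J_i$ with sign pattern $(\propto_{i,j})_{j}$ gives $3 d_H(\mcX_{\psi_i}, \Ap(f,\psi_i)) \leq \varepsilon \leq (14 D^{3/2} \kappabar(f))^{-1}$. Theorem~\ref{teo:bcl} then provides a homotopy equivalence $\Ap(f,\psi_i) \hookrightarrow \mcU(\mcX_{\psi_i}, \varepsilon)$, and the Nerve Theorem supplies a homotopy equivalence $\mcU(\mcX_{\psi_i}, \varepsilon) \simeq \cech{\varepsilon}{\mcX_{\psi_i}} = \fkC_{\psi_i}$. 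The same chain applies to any conjunction $\bigwedge_{i \in I'} \psi_i$ for $I' \subseteq I$: such a conjunction is again purely conjunctive, its approximant is $\bigcap_{i \in I'} \mcX_{\psi_i}$, itself a nested intersection of $\mcX_j^{\propto}$'s covered by the theorem's hypothesis. Both equivalences are natural with respect to inclusions of purely conjunctive formulas.

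For the inductive step on $|I|$, fix one disjunct $\psi_1$ and set $A := \Ap(f,\psi_1)$, $B := \Ap(f, \bigvee_{i>1} \psi_i)$, $A' := \fkC_{\psi_1}$, $B' := \fkC_{\bigvee_{i>1}\psi_i}$. Then $A \cup B = \Ap(f,\Phi)$, $A' \cup B' = \fkC$, and distributing conjunction over disjunction gives
$$
A \cap B = \bigcup_{i>1} \Ap(f,\psi_1 \wedge \psi_i) = \Ap\Bigl(f, \bigvee_{i>1}(\psi_1 \wedge \psi_i)\Bigr),
$$
a union of $|I|-1$ purely conjunctive pieces, and similarly $A' \cap B' = \bigcup_{i>1} \fkC_{\psi_1 \wedge \psi_i}$. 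By the inductive hypothesis $H_*(B) \cong H_*(B')$ and $H_*(A \cap B) \cong H_*(A' \cap B')$, while $H_*(A) \cong H_*(A')$ is the base case. Applying Mayer--Vietoris to both triads and using the naturality from the previous paragraph assembles a commutative ladder of long exact sequences whose vertical arrows are isomorphisms except possibly at $H_*(A \cup B) \to H_*(A' \cup B')$; the five-lemma supplies the remaining isomorphism.

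The hard part will be rigorously establishing naturality of the ladder and the applicability of topological Mayer--Vietoris to our closed decomposition $A \cup B$. For the topological side I would invoke the Semialgebraic Triangulation Theorem to triangulate $\Ap(f,\Phi)$ compatibly with each $\Ap(f,\psi_i)$ and all their pairwise intersections; then $A$, $B$, $A \cap B$ become subcomplexes of a common triangulation and Mayer--Vietoris for a simplicial pair applies directly. For naturality, the Basic Witness Theorem equivalence is induced by an inclusion and hence manifestly natural, whereas the Nerve Theorem equivalence can be realised functorially via a canonical partition-of-unity model; propagating naturality through the induction is then routine.
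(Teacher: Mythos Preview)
Your overall strategy---DNF reduction, base case via the Basic Witness Theorem, then inductive Mayer--Vietoris with the five-lemma---is the right shape, but the inductive step has a real gap: you never produce the \emph{map} that assembles the commutative ladder. The five-lemma does not apply to two parallel long exact sequences whose corresponding terms happen to be abstractly isomorphic; you need a map of triads $(A,B,A\cap B)\to (A',B',A'\cap B')$ (or a zigzag through a third triad) so that the squares involving the connecting homomorphisms commute. Your appeal to ``naturality'' hides this: the Basic Witness equivalence is an inclusion $\Ap(f,\psi)\hookrightarrow \mcU(\mcX_\psi,\varepsilon)$ and the Nerve equivalence a map $[\cech{\varepsilon}{\mcX_\psi}]\to \mcU(\mcX_\psi,\varepsilon)$; neither gives a map between $\Ap(f,\Phi)$ and $\fkC$. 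One might hope to use $\bigcup_i \mcU(\mcX_{\psi_i},\varepsilon)$ as an intermediate space, but then the intersections in the inclusion--exclusion are $\bigcap_{j\in J}\mcU(\mcX_{\psi_j},\varepsilon)$, which is \emph{not} $\mcU(\bigcap_j\mcX_{\psi_j},\varepsilon)$, so neither Theorem~\ref{teo:bcl} nor the Nerve Theorem applies to these pieces.

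The paper's proof fixes exactly this by introducing a different intermediate space: the open algebraic neighborhood $\Ap_\rho^\circ(f,\Phi)$ (for a suitable $\rho$), together with its cone in $\bbR^{n+1}$. Algebraic neighborhoods are defined formula-wise, so they commute with conjunction, $\bigcap_j \Ap_\rho^\circ(f,\psi_j)=\Ap_\rho^\circ(f,\wedge_j\psi_j)$, and the Quantitative Durfee Theorem (Theorem~\ref{semialgebraiccaseneighborhoodshomotopy}) makes each inclusion $\Ap(f,\wedge_j\psi_j)\hookrightarrow \Ap_\rho^\circ(f,\wedge_j\psi_j)$ a homotopy equivalence. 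On the other side, the Explicit Homological Nerve Theorem (Theorem~\ref{nervetheorem}) supplies a single affine map $\pi$ from $[\fkC]$ into $\widehat{\Ap}_\rho^\circ(f,\Phi)$ that restricts correctly to every subcomplex. With these two honest maps into a common target, the Homological Inclusion--Exclusion Transfer (Corollary~\ref{cor:IE})---which packages the iterated five-lemma you had in mind---goes through. Your sketch is missing precisely this bridge.
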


\begin{remark}\label{rem:wfs}
The techniques used to prove Theorem~\ref{teo:bcl} rely on the notion of 
{\em reach} (or {\em feature size})
$\tau(X)$ of a closed set $X$ in Euclidean space~\cite[\S4]{Federer:59}.  
This so because a positive reach of $X$ guarantees a nearest-point retraction onto 
$X$ from sufficiently small neigborhoods of $X$~\cite[Prop.~7.1]{NiSmWe2008},  
and a finite $\kappabar(f)$ guarantees a positive reach of 
$\Ap(f,\phi)$ for all purely conjunctive $\phi$~\cite[Thm.~4.12]{BCL17}. 
Unfortunately, for sets $X$ as the one drawn in the 
Introduction, the reach is zero. 
One may think that the use of the weak feature size $\tau_w$ and its associated retractions (see~\cite{ChCoLi:09,ChLi:05WFS}) 
could be an appropriate replacement of the reach for arbitrary closed semialgebraic sets,  
since $\tau_w$ is guaranteed to be positive on these sets. However, the proof 
of this positivity (see~\cite{ChLi:05axis}) does not give any effective way of 
bounding~$\tau_w$. Indeed, it is still an open problem to 
bound $\tau_w\left(\cup_{i=1}^mS_i\right)$ in terms of the $\tau_w(S_i)$ and of 
geometric quantities capturing the relative position of the $S_i$.
\end{remark}

\subsection{Algebraic neighborhoods of spherical semialgebraic sets}
\label{se:alg-neighborhoods}

Theorem~\ref{theo:lax-case} ensures that a collection of {\em point clouds} (finite sets 
of points) $\{\mcX_i^{\propto_i}\}$ sufficiently near to the sets $S_i^{\propto_i}$ 
allows us to build a simplicial complex homologically equivalent to $\Ap(f,\Phi)$. 
The difficulty we now face is, given a candidate set $\mcX_i^{\propto_i}$, 
how to estimate the Hausdorff distance between $\mcX_i^{\propto_i}$ and $S_i^{\propto_i}$. 
It was to solve this problem that algebraic neighborhoods were introduced 
in~\cite{BCL17}.

Algebraic neighborhoods of closed semialgebraic sets 
are obtained by relaxing the equalities and inequalities in their description.
More concretely, given a lax formula $\Phi$ over $f\in\Hd[q]$, 
the \textit{algebraic neighborhood $\Ap_r(f,\Phi)$ 
of $\Ap(f,\Phi)$ with tolerance $r$} is the spherical set defined by 
replacing the atoms $f_i=0$ by $|f_i(x)|\leq r \|f_i\|$, the atoms 
$f_i\geq 0$ by $f_i(x)\geq -r\|f_i\|$ 
and the atoms $f_i\leq 0$ by $f_i(x)\leq r\|f_i\|$. 
The \textit{open algebraic neighborhood $\Ap_r^\circ(f,\Phi)$ of $\Ap(f,\Phi)$ with 
tolerance $r$} is similarly defined but with strict inequalities. 

A crucial difference between $r$-neighborhoods and algebraic neighborhoods of 
$\Ap(f,\Phi)$ is that, for a given $x\in\bbS^n$, it is computationally trivial to check 
membership to the latter and computationally expensive to do so for the former. 
But to use algebraic neighborhoods to bound Euclidean distances we need to 
understand how do $r$-neighborhoods and algebraic neighborhoods  of $\Ap(f,\Phi)$ relate. 
The following inclusion is a simple consequence of the Exclusion Lemma (Lemma~3.1 
in~\cite{CKMW1}, see also~\cite[Prop.~4.17]{BCL17}) which goes back 
to~\cite{CS98} 
\begin{equation}\label{eq:Prop_c}
 \mcU_{\bbS}(\Ap(f,\Phi),r)\subseteq \Ap_{D^{1/2}r}^\circ(f,\Phi),
\end{equation}
where, we recall, $D=\max_i \deg f_i$. An inclusion in the other direction, now involving 
the condition of $f$, was shown in ~\cite[Thm.~4.19]{BCL17}. 

\begin{prop}\label{semialgebraiccaseneighborhoods}
Let $f\in\Hd[q]$ and $r>0$ be such that $13\,D^{3/2}\kappabar(f)^2 r<1$. 
Then, for every purely conjunctive formula $\phi$ over $f$, 
\begin{equation}\tag*{\qed}
 \Ap_r^\circ(f,\phi)\subseteq \mcU_{\bbS}(\Ap(f,\phi),3\,\kappabar(f)r).
\end{equation} 
\end{prop}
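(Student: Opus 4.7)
The plan is to prove this by showing that for each $x \in \Ap_r^\circ(f,\phi)$ there exists $y \in \Ap(f,\phi)$ with spherical distance $d_{\bbS}(x,y) < 3\kappabar(f) r$, obtained as the limit of a Newton-type iteration on the sphere whose convergence is controlled by $\kappabar(f)$. Write $\phi \equiv \bigwedge_{j \in J}(f_j \propto_j 0)$ and split $J$ into the \emph{active} indices $J_a$ (those for which the strict condition fails, so that $|f_j(x)| < r\|f_j\|$) and the \emph{inactive} ones $J\setminus J_a$ (strictly satisfied with a positive margin). The problem reduces to (i) producing $y\in\bbS^n$ on the common zero set $\{f_j=0\}_{j\in J_a}$, which automatically satisfies every active constraint since the active inequalities were relaxed precisely to a boundary they can now reach, and (ii) verifying that the inactive constraints persist at $y$. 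This is the same strategy used in~\cite[Thm.~4.19]{BCL17}, on which the statement is based.

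For (i), I would apply a projected Newton iteration on $\bbS^n$ to the system $(f_j)_{j \in J_a}$. The condition number $\kappabar(f)$ is designed precisely so that its inverse bounds, up to a constant, the smallest singular value of the Weyl-normalized Jacobian of any subtuple of $(f_j)$, restricted to the tangent space of $\bbS^n$. Since the residual at $x$ satisfies $|f_j(x)| < r\|f_j\|$, a Newton step moves $x$ by at most a constant times $\kappabar(f)r$. A Kantorovich-type convergence test then requires this step to be small compared to the radius of validity of the linearization, which on the sphere involves a Hessian bound of order $D^{3/2}\|f_j\|$ coming from standard Bombieri-type inequalities for the Weyl norm. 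The hypothesis $13\,D^{3/2}\kappabar(f)^2 r < 1$ is exactly what is needed to make the iteration quadratically contractive, and summing the resulting geometric series of Newton steps produces a limit $y$ with $d_{\bbS}(x,y) < 3\kappabar(f) r$.

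For (ii), the inactive constraints hold at $x$ with margin at least $r\|f_j\|$ (the border of $\Ap_r^\circ(f,\phi)$), while by the Lipschitz bound $\sqrt{D}\,\|f_j\|$ for $f_j$ on $\bbS^n$, moving by $3\kappabar(f)r$ shifts $f_j$ by at most $3\sqrt{D}\,\kappabar(f)\,r\,\|f_j\|$; the same hypothesis keeps this below the margin. The main technical obstacle is step (i): making the Newton argument on $\bbS^n$ watertight, tracking the two separate roles of $\kappabar(f)$ (one factor to invert the Jacobian, one for the self-bootstrapping Kantorovich estimate) and extracting the clean constant $3$ in the conclusion. This is where the finer estimates of~\cite{BCL17} do the real work; in practice I would either cite Theorem~4.19 there directly or reproduce its calculation verbatim after restricting to a purely conjunctive $\phi$.
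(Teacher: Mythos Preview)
The paper does not prove this proposition at all: it is imported verbatim from \cite[Thm.~4.19]{BCL17}, with Remark~\ref{rem:2conditions} supplying the inequality $\kappa_*(f_I,f_J)\le\kappabar(f)$ so that the statement there (phrased with $\kappa_*$) transfers to $\kappabar$. Your closing suggestion --- cite that theorem directly --- is precisely what the authors do, and is the right move here.

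Your sketched argument, however, has a real gap in step~(ii). With your active/inactive split at threshold $r\|f_j\|$, an inactive inequality constraint, say $f_j\ge 0$, is guaranteed only the margin $f_j(x)\ge r\|f_j\|$. After moving to $y$ with $d_\bbS(x,y)<3\kappabar(f)r$, the Lipschitz bound you quote gives
\[
  |f_j(y)-f_j(x)|\le 3\sqrt{D}\,\kappabar(f)\,r\,\|f_j\|,
\]
so preserving $f_j(y)\ge 0$ would require $3\sqrt{D}\,\kappabar(f)<1$. This is a condition on $\kappabar(f)$ alone, independent of $r$, and since $\kappabar(f)\ge 1$ it is \emph{never} satisfied. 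The hypothesis $13D^{3/2}\kappabar(f)^2 r<1$ does not rescue you: shrinking $r$ scales both the margin and the perturbation by the same factor, so their ratio stays $3\sqrt{D}\,\kappabar(f)$.

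The repair is not just a change of constants. One has to couple the active-set choice and the Newton direction so that the step toward the zero set of the active constraints does not push you across the inactive half-spaces; this is where the second factor of $\kappabar(f)$ in the hypothesis actually enters. Getting the constants $13$ and $3$ out of that requires the detailed estimates of \cite[\S4]{BCL17}, which is why the present paper simply cites the result rather than reproving it.
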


Proposition~\ref{semialgebraiccaseneighborhoods}, together with the choice of the 
$\mcX_i^{\propto_i}$ from a grid $\mcG$ sufficiently dense in $\bbS^n$, 
allows one to certify, in an 
efficient manner, that the hypothesis of Theorem~\ref{theo:lax-case} is satisfied. 
To prove this theorem, in addition, the following fundamental 
property of algebraic neighborhoods is used. 

\begin{theo}[Quantitative Durfee's Theorem]\label{semialgebraiccaseneighborhoodshomotopy}
Let $f\in\Hd[q]$ and $r>0$ be such that $\sqrt{2}\kappabar(f)r<1$. Then, 
for every purely conjunctive lax formula $\phi$ over $f$, the inclusions in
\begin{equation*}
  \begin{tikzcd}
  \Ap(f,\phi)\arrow[r,hook]  \arrow[dr,hook]
  & \Ap_r^\circ(f,\phi)\arrow[d,hook]\\
  & \Ap_{r}(f,\phi)
\end{tikzcd}
\end{equation*}
are homotopy equivalences.
\end{theo}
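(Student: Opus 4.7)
The plan is to treat both inclusions uniformly by building, for a parameter $t$ sweeping $[0,r]$, a topologically trivial family of closed semialgebraic sets $\{\Ap_t(f,\phi)\}_{t \in [0,r]}$ (with its open counterpart) and transferring the trivialization to a pair of deformation retractions onto $\Ap(f,\phi)=\Ap_0(f,\phi)$. The natural tool, as hinted in the introduction, is Thom's First Isotopy Lemma applied to a carefully constructed Whitney stratification whose regularity is controlled quantitatively by $\kappabar(f)$.

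Writing the purely conjunctive $\phi$ as a conjunction of atoms $\{f_j \propto_j 0\}_{j \in J}$ with $\propto_j \in \{\leq, =, \geq\}$, I would introduce the normalized evaluation map
\[
F : \bbS^n \to \bbR^{|J|}, \qquad F_j(x) := f_j(x)/\|f_j\|,
\]
so that each $\Ap_t(f,\phi)$ is the preimage under $F$ of an axis-aligned closed region $R_t \subseteq \bbR^{|J|}$ (a product of intervals $[-t,t]$ for the $=$-atoms and of half-lines $[-t,\infty)$ or $(-\infty,t]$ for the $\geq$- and $\leq$-atoms). The family $\{R_t\}_{t \in [0,r]}$ is linearly trivialized in the target by the obvious radial contraction, so it suffices to lift this trivialization to a controlled stratified vector field on $\bbS^n$ and integrate it.

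The core technical step is to exhibit a Whitney stratification of $\bbS^n$ adapted to the hypersurfaces $\{x : f_j(x) = s\|f_j\|\}$ for $j \in J$ and $s \in [-r,r]$, with two properties: first, $F$ should restrict to a submersion on each stratum; second, each stratum should be Whitney (b)-regular with respect to its incident strata. Both properties require, for every $x \in \Ap_r(f,\phi)$ and every subset $K \subseteq J$ of atoms active at~$x$ (in the relaxed sense), a uniform lower bound on the smallest singular value of the spherical Jacobian of $(F_j)_{j \in K}$; this is precisely the geometric content that $\kappabar(f)$ is designed to measure in~\S\ref{sec:kappabar}. The quantitative hypothesis $\sqrt{2}\,\kappabar(f)\,r < 1$ is to be read as the threshold below which these singular-value estimates close up simultaneously at every stratum, so that no "collision" of relaxed hypersurfaces can occur inside $\Ap_r(f,\phi)$.

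Granted this stratification, Thom's First Isotopy Lemma produces a stratified isotopy $\Psi_t : \Ap_r(f,\phi) \to \Ap_{r-t}(f,\phi)$ whose $F$-projection is the target trivialization; by construction $\Psi_t$ restricts to an isotopy of the open algebraic neighborhoods since the open/closed distinction corresponds to open/closed strata in $R_t$ that are each preserved by the flow. Setting $t=r$ yields deformation retractions of $\Ap_r(f,\phi)$ and of $\Ap_r^\circ(f,\phi)$ onto $\Ap(f,\phi)$ compatible with the inclusions of the diagram, which is exactly what is required. The main obstacle is the quantitative verification of Whitney (b)-regularity with the explicit constant $\sqrt{2}$: the submersion estimate is a relatively direct consequence of the definition of $\kappabar(f)$, but (b)-regularity involves the limiting behavior of secants between strata, and tracking how $\kappabar(f)$ controls the angle between such secants and the tangent spaces of the relaxed hypersurfaces is where the delicate constant gets pinned down, making this bound the quantitative refinement of Durfee~\cite{Durfee1983}.
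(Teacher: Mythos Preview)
Your plan has the right ingredients but a real gap at the endpoint $t=0$, and it misidentifies where the constant $\sqrt{2}$ enters.

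You propose to obtain, via Thom's First Isotopy Lemma, a stratified isotopy $\Psi_t:\Ap_r(f,\phi)\to\Ap_{r-t}(f,\phi)$ for $t$ ranging over all of $[0,r]$. But Thom's lemma (as stated in the paper, Theorem~\ref{thomfirstlemmaB}) requires that on every stratum the map be a surjective submersion onto the base. Whatever map you feed into Thom---whether the scalar $\alpha(x)=\max_j |F_j(x)|$ with the appropriate sign conventions, or the projection from the total space $\{(x,t):x\in\Ap_t(f,\phi)\}$ to $[0,r]$---the strata that sit inside $\Ap(f,\phi)$ (the fiber over $0$) map constantly to $0$, so they are neither submersive nor surjective onto the base. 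Your formulation with $F:\bbS^n\to\bbR^{|J|}$ does not escape this: on the stratum where $F_j=0$ for $j\in K$, the image of $F$ lies in a proper coordinate subspace, so $F$ is not a submersion onto $\bbR^{|J|}$ there, and you cannot lift the target vector field across that stratum. The paper addresses exactly this point: it applies Thom only on $\mcM=\Ap_\rho^\circ(f,\phi)\setminus\Ap(f,\phi)$ with target $(0,\rho)$ (Proposition~\ref{lem:stratification} and Proposition~\ref{prop:retractionsemialgebraicnhoods}), obtaining homotopy equivalences $\Ap_{r'}(f,\phi)\hookrightarrow\Ap_r(f,\phi)$ only for $r'>0$, and then closes the gap to $r'=0$ by a separate argument using semialgebraic triangulation, a sandwich $\Ap(f,\phi)\subset\Ap_{r'}(f,\phi)\subset U\subset\Ap_r(f,\phi)$ for a CW-neighborhood $U$, and Whitehead's theorem. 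The remark following Proposition~\ref{prop:retractionsemialgebraicnhoods} makes this limitation explicit.

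Your diagnosis of the ``main obstacle'' is also off. Whitney~(b) regularity is not where the constant $\sqrt{2}$ is spent: in the paper's construction the stratification is, in trivializing local coordinates (Lemma~\ref{lem:trivialcoordinates}), a \emph{semilinear} arrangement (Lemma~\ref{lem:explicitstratification}), for which condition~(b) is elementary linear algebra and involves no constants at all. The hypothesis $\sqrt{2}\,\kappabar(f)r<1$ is used solely to invoke Proposition~\ref{boundamubykappa}, which guarantees that $\diff_xf^S$ is surjective so that the trivializing chart in Lemma~\ref{lem:trivialcoordinates} exists; once those charts exist, both the submersion property and Whitney regularity follow for free from the semilinear picture.
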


We will devote all of Section~\ref{se:geometry} to prove Theorem~\ref{semialgebraiccaseneighborhoodshomotopy}.

\subsection{Computation of homology groups}

Once in the possession of the complex $\fkC$ homologically equivalent to 
$\Ap(f,\Phi)$, the computation of the homology of the latter reduces to 
doing so for the former. Algorithmic procedures for this task are 
well-known. We briefly describe them (and recall their complexity) 
in~\S\ref{sec:groups}.

\subsection{Probabilistic estimates}

The primary complexity analysis of the algorithm {\sf Homology} is condition-based. 
Cost bounds on input $(p,\Phi)$ depend on the condition $\kappaff(p)$ (or on 
$\kappabar(f)$, as in the statement of Proposition~\ref{prop:kappa-est}). This 
complexity analysis takes form in part~(i) of Theorem~\ref{thm:main_result}.

To obtain parts~(ii) and (iii) one needs to estimate the probability tail 
of $\kappaff(p)$. To do so, we will bound $\kappaff(p)$ in terms of the 
normalized distance from $p$ to the set 
$\overline{\Sigma}^{\mathrm{aff}}_{\bfd}[q]
:=\sfH^{-1}(\overline{\Sigma}_{\bfd}[q])\subseteq\Pd[q]$  
of ill-posed tuples. 
This set is included in an algebraic cone~$V$ of codimension~1 
whose degree is bounded by an explicit function of $n,D$ and $q$. The tail 
$\prob_{p\in\Pd[q]}\{\kappaff(p)\ge t\}$ is consequently bounded in terms 
of the volume of the $\frac1t$-neighborhood of $V\cap\bbS(\Pd[q])$. 
A general result estimating 
this volume in terms of $N:=\dim\Pd[q]$, the degree of $V$, and its codimension  
is given in~\cite{BuCuLo:07}. We employ this result to estimate the tail 
of $\kappaff$ and use this estimate to obtain the desired weak cost bounds. 
This is carried out in Section~\ref{sec:final}.
\section{Condition and stability of the description}

Our algorithm's design and analysis are condition-based. To carry them 
out we will define an appropriate condition number, $\kappabar(f)$, and
show some of its main properties. This quantity follows a
lineage of condition numbers (for different problems)
going back to von Neumann and Goldstine~\cite{vNGo47} and Turing~\cite{Turing48}. 

We begin endowing the vector space $\Hd[q]$ with an inner product. 

\subsection{The Weyl inner product}\label{sec:Weyl}

For two homogeneous polynomials $g=\sum_{\alpha}g_\alpha X^\alpha$ and
$h=\sum_{\alpha}h_\alpha X^\alpha$ of degree $d$, the {\em Weyl inner
product} is given by
\begin{equation*}
 \langle g,h\rangle:=\sum_{\alpha}\binom{d}{\alpha}^{-1}g_\alpha h_\alpha,
\end{equation*}
where $\binom{d}{\alpha}:=\frac{d!}{\alpha_0!\cdots \alpha_d!}$ is the
multinomial coefficient. We extend this to pairs $f,f'\in\Hd[q]$ in
the usual way,
\begin{equation}\label{eq:Weyl2}
  \langle f,f'\rangle:=\sum_{i=1}^q\langle f_i,f_i'\rangle.
\end{equation}
The most important feature of this inner product is that it is
invariant under orthogonal changes of coordinates, i.e., that for each
$u\in \msO(n+1)$ and $f,g\in\mathcal{H}_d[q]$, $\langle
f,g\rangle=\langle f\circ u,g\circ u\rangle$. In addition, it is the
only inner product in $\Hd[q]$ invariant under orthogonal
transformations, up to renormalization in each component of $\Hd[q]$,
that extends to a Hermitian inner product invariant under unitary
transformations in the complex analog of $\Hd[q]$;
cf.~\cite[Rem.~16.4]{Condition}. 

The definition of the Weyl inner product in $\Hd[q]$ naturally 
translates to $\Pd[q]$: for $p,s\in\Pd[q]$ we define 
$\langle p,s\rangle:=\langle p\hm,s\hm\rangle$, where homogeneization 
is componentwise.

\subsection{The \texorpdfstring{$\mu$}{mu}-condition}\label{sec:mu}

We will look at elements $f\in\Hd[q]$ as polynomial maps
$f:\bbS^n\rightarrow \bbR^q$.  For $x\in \bbS^n$, we will denote by
$\diff_xf$ the tangent map $\Tg_x\bbS^n\rightarrow \bbR^q$. This
is nothing more than the restriction to the linear subspace
$\Tg_x\bbS^n$ of the usual derivative map of $f$ at $x$.

The \textit{$\mu$-condition} of $f$ at $x\in\bbS^n$ is given by
\begin{equation}
  \mu(f,x):=\|f\|\|\diff_xf^{\dagger}\Delta\|,
\end{equation}
where $\diff_xf^{\dagger}$ is the Moore-Penrose inverse of $\diff_xf$, 
$\Delta$ is the normalization matrix given by
\[
  \Delta:=\begin{pmatrix}
  \sqrt{d_1}&&\\&\ddots&\\&&\sqrt{d_q}
\end{pmatrix},
\]
and the norm $\|\diff_xf^{\dagger}\Delta\|$ is the spectral norm. By
convention, we take $\mu(f,x)$ to be $\infty$ when $\diff_xf$ is not
surjective.  One should see the inverse of $\mu(f,x)$ as a measure
of how near from being non-surjective the tangent map $\diff_xf$ is. 
The extra parameters $\|f\|$ and
$\Delta$ are there to ensure scalability as well as the equalities in
Theorem~\ref{homconditionthm}.

\begin{remark}
The $\mu$ condition number was introduced by Shub and Smale in their
``B\'ezout series"~\cite{Bez1,Bez2,Bez3,Bez4,Bez5}. It plays a crucial
role in the solution of Smale's 17th
problem~\cite{Smale98,BePa:09,BuCu11,Lairez17}.  The version
$\mu(f,x)$ slightly differs from the one in these references; it is
instead the minor variation introduced as $\mu_{\mathrm{proj}}$
in~\cite{BCL17} which allows, as shown in~\cite{BCL17}, an elegant
Condition Number Theorem (Theorem~\ref{homconditionthm} below).
\end{remark}

\subsection{The \texorpdfstring{$\kappa$}{kappa}-condition}

The quantity $\mu(f,x)$ is a good measure of how well-conditioned a
zero $x$ of $f$ is. A large value of $\mu(f,x)$ when $f(x)\neq 0$ 
indicates, when working over the complex numbers, that there exists 
a small perturbation of $f$ having an ill-posed zero. The fact that 
this is no longer true in the real case led 
to the introduction, in~\cite{Cucker99b}, of the following condition
number.

\begin{defi}\label{conddefi}
We define the 
{\em real homogeneous condition number 
of  $f\in\Hd[q]$ at  $x\in\bbS^n$} as 
\begin{equation*}
  \kappa(f,x):=\frac{1}{\sqrt{\frac{1}{\mu(f,x)^2}+\frac{\|f(x)\|^2}{\|f\|^2}}},
\end{equation*}
where we use the usual conventions of infinity together with 
$\infty^{-1}=0$ and its reciprocal. We further define 
the {\em real homogeneous condition number of $f$} by 
\begin{equation*}
\kappa(f):=\max_{x\in\bbS^n}\,\kappa(f,x).
\end{equation*}
\end{defi}

\begin{remark}
For $q>n$ and $f\in\Hd[q]$, the system $f=0$ is overdetermined. This
implies that $\diff_xf$ cannot be surjective at any $x\in\bbS^n$: for all
$x\in\bbS^n$, $\mu(f,x)=\infty$ and
$\kappa(f,x)=\frac{\|f\|}{\|f(x)\|}$. In particular,
$\kappa(f)<\infty$ if and only if $\Ap(f=0)$ is empty.
\end{remark}

The condition number $\kappa$ satisfies a Condition Number
Theorem. That is, its inverse tells us how near $f\in\Hd[q]$ is from
being ill-posed. To be precise, note that $\kappa(f)=\infty$ if and
only if there is some $x\in \Ap(f=0)$ such that $\diff_x f$ is not
surjective. This motivates to define the \textit{set of systems
ill-posed at} $x\in\bbS^n$ as
\begin{equation*}
  \Sigma_{\bfd}[q]_x:=
 \{f\in\Hd[q]\,|\,f(x)=0\text{ and }\diff_xf\text{ is not surjective}\},
\end{equation*}
and the \textit{set of ill-posed systems} as
$\Sigma_{\bfd}[q]:=\bigcup_{x\in\bbS^n}\Sigma_{\bfd}[q]_x$. One should
notice that $f\notin \Sigma_{\bfd}[q]$ if and only if 0 is a regular value of
$f$ which, by the Implicit Function Theorem, is enough to guarantee
that $\Ap(f=0)$ is smooth. The Condition Number Theorem for $\kappa$ 
is then the following.

\begin{theo}{\rm \cite[Theorem 2.19]{BCL17}}\label{homconditionthm}
For all $f\in\Hd[q]$ and $x\in\bbS^n$,
\[
  \kappa(f,x)=\frac{\|f\|}{d(f,\Sigma_{\bfd}[q]_x)}\quad\text{and}\quad
  \kappa(f)=\frac{\|f\|}{d(f,\Sigma_{\bfd}[q])} ,
\]
where $d$ is the distance induced by the Weyl inner product.\eproof
\end{theo}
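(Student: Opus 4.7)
My plan is to prove the pointwise identity $\kappa(f,x) = \|f\|/d(f, \Sigma_{\bfd}[q]_x)$ by a Pythagorean decomposition of $\Hd[q]$, and then derive the global identity by minimizing over $x \in \bbS^n$. The shape of the formula $\kappa(f,x)^{-2} = \mu(f,x)^{-2} + \|f(x)\|^2/\|f\|^2$ strongly suggests an orthogonal split: one summand should measure the cost of perturbing $f$ so that $f(x) = 0$, the other the cost of making $\diff_x f$ non-surjective, and under the Weyl inner product these two perturbations should be orthogonal.

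The decomposition is engineered from the reproducing kernels of the Weyl inner product on $\mathcal{H}_d$. I would first check that $R_x(X) := \langle x, X\rangle^d$ satisfies $\langle g, R_x\rangle = g(x)$ with $\|R_x\| = 1$, and that for a unit vector $v \in \Tg_x\bbS^n$ the polynomial $T_{x,v}(X) := \langle x, X\rangle^{d-1}\langle v, X\rangle$ satisfies $\langle g, T_{x,v}\rangle = \frac{1}{d}\diff_x g(v)$ with $\|T_{x,v}\|^2 = 1/d$, and that $\langle R_x, T_{x,v}\rangle = 0$ and $\langle T_{x,v}, T_{x,w}\rangle = \frac{1}{d}\langle v, w\rangle$. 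Fixing an orthonormal basis $v_1, \ldots, v_n$ of $\Tg_x\bbS^n$ and performing this construction componentwise yields an orthonormal family $\{R^{(i)}_x,\, \sqrt{d_i}\, T^{(i)}_{x, v_j}\}_{i,j}$ of size $q(n+1)$ in $\Hd[q]$; let $W_x$ be its span. The key consequence is that for every $h \in \Hd[q]$, its orthogonal projection $h_W$ onto $W_x$ has squared Weyl norm $\|h(x)\|^2 + \|\Delta^{-1}\diff_x h\|_F^2$, while every element of $W_x^\perp$ automatically lies in $\Sigma_{\bfd}[q]_x$, since it vanishes at $x$ together with its derivative.

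Because membership in $\Sigma_{\bfd}[q]_x$ depends only on the 1-jet $(g(x), \diff_x g)$, I can replace the minimization $d(f, \Sigma_{\bfd}[q]_x) = \inf\{\|h\| : f - h \in \Sigma_{\bfd}[q]_x\}$ by a finite-dimensional problem on $W_x$: it suffices to take $h \in W_x$ with $h(x) = f(x)$ and $B := \Delta^{-1}\diff_x h$ chosen so that $\Delta^{-1}\diff_x f - B$ is rank-deficient, minimizing $\|h_W\|^2 = \|f(x)\|^2 + \|B\|_F^2$. The first summand is fixed, and by the Frobenius version of the Eckart--Young theorem, where the optimizer is rank one so that Frobenius and spectral norms coincide, the second minimizes to $\sigma_{\min}(\Delta^{-1}\diff_x f)^2 = \|f\|^2/\mu(f,x)^2$. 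Adding these gives $d(f, \Sigma_{\bfd}[q]_x)^2 = \|f(x)\|^2 + \|f\|^2/\mu(f,x)^2 = \|f\|^2/\kappa(f,x)^2$, which is the pointwise identity. The overdetermined boundary case $q > n$ falls out cleanly since then $\diff_x f$ is never surjective, the $B$-minimization is vacuous, and both sides reduce to $\|f(x)\|$.

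The global formula then follows from $d(f, \Sigma_{\bfd}[q]) = \inf_{x \in \bbS^n} d(f, \Sigma_{\bfd}[q]_x)$: using the pointwise identity together with the continuity of $x \mapsto \kappa(f,x)$ on $\bbS^n$ and compactness, I can replace the infimum by a minimum and obtain $d(f, \Sigma_{\bfd}[q]) = \|f\|/\max_{x} \kappa(f,x) = \|f\|/\kappa(f)$. The main obstacle is the careful verification of the orthonormality of $\{R^{(i)}_x,\, \sqrt{d_i}\, T^{(i)}_{x, v_j}\}$ and the identification of the Weyl norm restricted to $W_x$ with the Frobenius norm on jet data; these rely crucially on the specific $\binom{d}{\alpha}^{-1}$ normalization of the Weyl inner product and must be performed with enough care that the Eckart--Young step can be invoked cleanly.
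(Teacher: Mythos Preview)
The paper does not give a proof of this statement; it is quoted from \cite[Theorem~2.19]{BCL17} and marked with \verb|\eproof|. Your proposal is correct and is essentially the standard argument used in the cited reference (and in Chapter~19 of the \emph{Condition} book): split $\Hd[q]$ via the Weyl reproducing kernels $R_x=\langle x,X\rangle^{d_i}$ and $T_{x,v}=\langle x,X\rangle^{d_i-1}\langle v,X\rangle$, reduce the distance computation to a finite-dimensional minimization on $W_x$, and apply Eckart--Young to obtain $\sigma_{\min}(\Delta^{-1}\diff_xf)=\|f\|/\mu(f,x)$. The one place to be slightly careful is the identification $(\Delta^{-1}\diff_xf)^\dagger=(\diff_xf)^\dagger\Delta$, which holds because $\Delta$ is invertible and $\diff_xf$ is surjective in the relevant case; you use this implicitly when passing from $\sigma_{\min}(\Delta^{-1}\diff_xf)$ to $\|f\|/\mu(f,x)$.
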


\begin{cor}
For every $f\in\Hd[q]$ and $x\in \bbS^n$, $\kappa(f,x)\geq 1$. \eproof
\end{cor}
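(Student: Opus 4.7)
My plan is to deduce the corollary directly from Theorem~\ref{homconditionthm}. That theorem gives the identity $\kappa(f,x)=\|f\|/d(f,\Sigma_{\bfd}[q]_x)$, so the desired inequality $\kappa(f,x)\ge 1$ reduces to the bound $d(f,\Sigma_{\bfd}[q]_x)\le \|f\|$. To establish this bound it is enough to exhibit a single element of $\Sigma_{\bfd}[q]_x$ at distance at most $\|f\|$ from $f$, and the obvious candidate is the zero tuple $0\in\Hd[q]$, for which $\|f-0\|=\|f\|$.

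It remains to verify that $0$ really lies in $\Sigma_{\bfd}[q]_x$. This amounts to checking the two defining conditions: first, $0(x)=0$ trivially; second, the tangent map $\diff_x 0:\Tg_x\bbS^n\to\bbR^q$ is the zero linear map, which fails to be surjective because $q\ge 1$. Hence $0\in\Sigma_{\bfd}[q]_x$, and the Condition Number Theorem yields $\kappa(f,x)\ge \|f\|/\|f\|=1$ (with the convention that if $\kappa(f,x)=\infty$ there is nothing to prove).

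There is no real obstacle in this argument; the work has already been done in Theorem~\ref{homconditionthm}. For contrast, a direct proof from Definition~\ref{conddefi} would require the inequality $1/\mu(f,x)^2+\|f(x)\|^2/\|f\|^2\le 1$, which is not transparent: the pointwise Weyl-evaluation bound $\|f(x)\|\le\|f\|$ on the sphere by itself is insufficient, since it does not control the $1/\mu(f,x)^2$ term. Routing through the geometric characterization of $\kappa$ as the reciprocal of a normalized distance to ill-posedness is what makes the argument a one-liner.
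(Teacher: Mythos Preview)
Your proof is correct and is precisely the argument the paper intends: the corollary is stated immediately after Theorem~\ref{homconditionthm} with no proof beyond the \qed\ symbol, and the obvious reading is exactly your observation that $0\in\Sigma_{\bfd}[q]_x$ forces $d(f,\Sigma_{\bfd}[q]_x)\le\|f\|$. There is nothing to add.
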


The following bound on $\mu$ in terms of $\kappa$ 
relates the values of $\mu$ and $\kappa$ near the zero set. 
It provides an important guarantee of the surjectivity of $\diff_xf$. 

\begin{prop}\label{boundamubykappa}
If $f\in\Hd[q]$, $x\in\bbS^n$ and $\sqrt{2}\kappa(f,x)\frac{\|f(x)\|}{\|f\|}<1$, then
\[\mu(f,x)\leq \sqrt{2}\kappa(f,x)\text{.}\]
Moreover, $\diff_xf$ is surjective.
\end{prop}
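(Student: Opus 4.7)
The proof should be a short algebraic manipulation of the definition of $\kappa(f,x)$ together with the bookkeeping convention that $\mu(f,x) = \infty$ precisely when $\diff_xf$ fails to be surjective.

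The plan is to start from the defining identity
\[
  \frac{1}{\kappa(f,x)^2} \;=\; \frac{1}{\mu(f,x)^2} + \frac{\|f(x)\|^2}{\|f\|^2},
\]
which follows by squaring the formula in Definition~\ref{conddefi}. Solving for $\mu(f,x)^{-2}$ gives
\[
  \frac{1}{\mu(f,x)^2} \;=\; \frac{1}{\kappa(f,x)^2} - \frac{\|f(x)\|^2}{\|f\|^2}.
\]
Next I would use the hypothesis $\sqrt{2}\,\kappa(f,x)\tfrac{\|f(x)\|}{\|f\|}<1$, squared, to rewrite
\[
  \frac{\|f(x)\|^2}{\|f\|^2} \;<\; \frac{1}{2\,\kappa(f,x)^2},
\]
and substitute this into the previous identity. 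This yields
\[
  \frac{1}{\mu(f,x)^2} \;>\; \frac{1}{\kappa(f,x)^2} - \frac{1}{2\,\kappa(f,x)^2}
  \;=\; \frac{1}{2\,\kappa(f,x)^2} \;>\; 0,
\]
which rearranges to the desired bound $\mu(f,x) < \sqrt{2}\,\kappa(f,x)$.

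For the second assertion, the strict positivity of $\mu(f,x)^{-2}$ just established shows that $\mu(f,x)$ is finite. Since the definition of $\mu(f,x)$ sets it to $\infty$ exactly when $\diff_xf$ is not surjective, we conclude that $\diff_xf$ must be surjective, as required.

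No step here is a genuine obstacle; the only subtlety is making sure the convention $\infty^{-1}=0$ is handled correctly when invoking the definition of $\kappa(f,x)$, and that the strict inequality in the hypothesis propagates correctly so that the lower bound on $\mu(f,x)^{-2}$ is strictly positive (hence $\mu(f,x)$ is finite, giving surjectivity of $\diff_xf$).
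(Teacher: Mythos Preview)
Your proof is correct and follows essentially the same approach as the paper: both start from the defining identity $\kappa(f,x)^{-2}=\mu(f,x)^{-2}+\|f(x)\|^2/\|f\|^2$, use the hypothesis (squared) to control the second summand by $\tfrac{1}{2}\kappa(f,x)^{-2}$, and conclude that $\mu(f,x)^{-2}\ge \tfrac{1}{2}\kappa(f,x)^{-2}>0$. The paper phrases the middle step via the elementary bound $a+b\le 2\max\{a,b\}$ rather than by direct substitution, but this is a cosmetic difference only.
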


\begin{proof}
By the definition of $\kappa$,
\[
   \frac{1}{\kappa(f,x)^2}=\frac{1}{\mu(f,x)^2}+\frac{\|f(x)\|^2}{\|f\|^2}
   \leq 2\max\left\{\frac{1}{\mu(f,x)^2},\frac{\|f(x)\|^2}{\|f\|^2}\right\}\text{.}
\]
Since $2\frac{\|f(x)\|^2}{\|f\|^2}<\frac{1}{\kappa(f,x)^2}$ by hypothesis, we have 
$\max\left\{\frac{1}{\mu(f,x)^2},\frac{\|f(x)\|^2}{\|f\|^2}\right\}=\frac{1}{\mu(f,x)^2}$
and the desired inequality follows. 
Finally, we note that 
$\mu(f,x)$ is finite 
if and only if $\diff_xf^\dagger$ is defined if and only if $\diff_xf$ is surjective.
\end{proof}

\subsection{The intersection condition}\label{sec:kappabar}

Assume a perturbation of the coefficients of $f$ changes the topology 
of $\Ap(f,\Phi)$. Then, along the way in this perturbation, a singularity 
must occur in some boundary of $\Ap(f,\Phi)$. Because of this, it is in the 
description of the boundary pieces where the condition for computing this 
topology lies.

The Zariski closure of one such boundary piece is given by some
polynomial equalities.  We note though that, once we have $n+1$ such
equalities, the intersection will have to be empty to be well-posed,
and so, there is no need to consider intersections of more than
$n+1$ polynomials. This suggests the following definition.

\begin{defi}
Given $f\in\Hd[q]$, the \textit{real homogeneous intersection condition number} 
of~$f$ is defined as
\[
  \kappabar(f):=\max\left\{\kappa\left(f^L\right)\,|\,L\subseteq \{1,\ldots,q\},\,
   |L|\leq n+1\right\}
\]
where $f^L:=(f_i)_{i\in L}$.
\end{defi}

The following result explains the name
``intersection condition'' as it shows that, in some sense, the inverse of
$\kappabar$ measures how near are the intersections between the
hypersurfaces given by $f$ from being non-transversal.

\begin{theo}\label{kappabarfinitenesschar}
Let $f\in \Hd[q]$. Then $\kappabar(f)$ is finite if and only if $0$
is a regular value of each $f_i$, i.e., for every $x\in\Ap(f_i=0)$,
the map $\diff_xf_i:\Tg_x\bbS^n\rightarrow \mathbb{R}$ is surjective, and
any intersection between the smooth subvarieties $\Ap(f_i=0)$ is
transversal, i.e., for all $I\subseteq \{1,\ldots,q\}$ and all
$x\in\bigcap_{i\in I}\Ap(f_i=0)$,
\[
  \sum_{i\in I}\codim_{\Tg_x\bbS^n}\,\Tg_x\big(\Ap(f_i=0)\big)
  =\codim_{\Tg_x\bbS^n}\bigcap_{i\in I}\Tg_x\big(\Ap(f_i=0)\big).
\]
\end{theo}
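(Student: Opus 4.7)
The strategy is to read the statement off the Condition Number Theorem (Theorem~\ref{homconditionthm}) by unfolding the definition of $\kappabar$ and tracking separately the regimes $|L|\leq n$ and $|L|=n+1$. By definition, $\kappabar(f)<\infty$ holds iff $\kappa(f^L)<\infty$ for every $L\subseteq\{1,\ldots,q\}$ with $|L|\leq n+1$. By Theorem~\ref{homconditionthm}, the latter is equivalent to there being no $x\in\bbS^n$ with $f^L(x)=0$ at which $\diff_x f^L\colon\Tg_x\bbS^n\to\bbR^{|L|}$ fails to be surjective.

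Next, I would translate surjectivity into a statement about differentials: $\diff_x f^L$ is surjective iff the covectors $\{\diff_x f_i:i\in L\}$ are linearly independent in $\Tg_x^*\bbS^n$, a space of dimension $n$. This gives two regimes. For $|L|\leq n$, linear independence is possible, and I would check that it is exactly the transversality in the theorem, using that each $\Ap(f_i=0)$ is a smooth hypersurface at $x$ whose conormal line is spanned by $\diff_x f_i$, so the sum-of-codimensions equality holds iff the $\diff_x f_i$, $i\in L$, are linearly independent. For $|L|=n+1$ no $n+1$ covectors in an $n$-dimensional space can be linearly independent, so $\mu(f^L,x)=\infty$ at every common zero; hence $\kappa(f^L)<\infty$ if and only if $\bigcap_{i\in L}\Ap(f_i=0)=\varnothing$ (the ``if'' direction using that $\|f^L(\cdot)\|$ then stays bounded away from $0$ on the compact sphere, so $\kappa(f^L,x)=\|f^L\|/\|f^L(x)\|$ is uniformly bounded).

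Combining the two regimes: $\kappabar(f)<\infty$ is equivalent to asserting that for every $L\subseteq\{1,\ldots,q\}$ with $|L|\leq n+1$ and every $x\in\bigcap_{i\in L}\Ap(f_i=0)$, the differentials $\{\diff_x f_i:i\in L\}$ are linearly independent in $\Tg_x^*\bbS^n$. Taking $L=\{i\}$ recovers ``$0$ is a regular value of $f_i$''; for $|L|\geq 2$ this is precisely the transversality equation in the statement. It remains to eliminate the restriction $|I|\leq n+1$ in the theorem. For the forward direction, if $|I|>n+1$ and $x\in\bigcap_{i\in I}\Ap(f_i=0)$, any subset $L\subseteq I$ of size $n+1$ contradicts the $|L|=n+1$ regime, so no such $x$ exists and the transversality claim is vacuous. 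For the converse, the theorem's transversality condition applied to $|I|=n+1$ would force $n+1\leq\codim\bigcap_{i\in I}\Tg_x\Ap(f_i=0)\leq n$, a contradiction; hence the intersection is empty, matching the $|L|=n+1$ regime.

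The argument is essentially bookkeeping, and the only genuine subtlety is correctly handling $|L|=n+1$: there ``surjectivity of $\diff_x f^L$'' must be traded for ``emptiness of the common zero set'', and the contribution to $\kappa(f^L)$ comes from the $\|f^L(x)\|/\|f^L\|$ term rather than the $1/\mu(f^L,x)$ term. Once this is in place, everything else follows directly from the Condition Number Theorem and the identification of transversality of hypersurfaces with linear independence of their defining differentials.
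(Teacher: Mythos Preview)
Your proposal is correct and follows essentially the same route as the paper's sketch: reduce $\kappabar(f)<\infty$ to surjectivity of $\diff_x f^L$ at every common zero of $f^L$ for $|L|\le n+1$, then identify surjectivity with linear independence of the $\diff_x f_i$ (equivalently, transversality of the kernels $\ker\diff_x f_i=\Tg_x\Ap(f_i=0)$). You are more explicit than the paper in separating the regime $|L|=n+1$ (where surjectivity is impossible and the condition becomes emptiness of the intersection) and in spelling out the passage from $|L|\le n+1$ to arbitrary~$I$, but these are just the details behind the paper's ``It is clear that\ldots''.
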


\begin{proof}[Sketch of the proof]
It is clear that $\kappabar(f)$ is finite if and only if for every
$L\subseteq\{1,\ldots,q\}$ of size at most $n+1$, the map
$\diff_xf^L:\Tg_x\bbS^n\rightarrow \mathbb{R}^{L}$ is surjective for
each $x\in \Ap\left(\bigwedge_{i\in L}(f_i=0)\right)$. Now, this will
happen if and only if the hyperplanes $\ker \diff_xf_i$ of $\Tg_x\bbS^n$
intersect transversally, but this is exactly the claimed equality 
as $\ker \diff_xf_i=\Tg_x\Ap(f_i=0)$.
\end{proof}

\begin{remark}\label{rem:2conditions}
Consider a purely conjunctive formula 
\begin{equation}\label{eq:basic-formula}
  \bigwedge_{I\in I} (f_i=0) \wedge \bigwedge_{j\in J} (f_j\propto_j 0)
\end{equation}
where $\propto_j\in\{\leq,\geq\}$ and $I,J\subseteq\{1,\ldots,q\}$ with 
$I\cap J=\varnothing$. 
A condition number $\kappa_*(f_I,f_J)$, now depending on {\em both} $f$ and $\phi$ 
was defined in~\cite{BCL17} as follows,
$$
    \kappa_*(f_I,f_J):=\max_{\substack{L\subseteq J\\|L|\leq n-|I|+1}}
    \kappa(f_{I\cup L})
$$
where $f_I=(f_i)_{i\in I}$ and similarly for $f_J$ and $f_{I\cup L}$. 
It is immediate to verify that $\kappabar(f)=\kappa_*(\varnothing,f)$ and that  
\begin{equation*}
   \kappa_\ast(f_I,f_J)\leq \kappabar(f).
\end{equation*}
These relations allow us to use the bounds in~\cite{BCL17} replacing 
$\kappa_*$ by $\kappabar$ in them. 
\end{remark}

As a first application of the remark above we note that, although we don't 
have an exact Condition Number Theorem for 
$\kappabar$, we do have a bound. The following result is an immediate consequence 
of~\cite[Theorem~4.10]{BCL17}. 

\begin{theo}\label{boundkappabardistance}
For all $f\in\Hd[q]$ we have 
\[
   \kappabar(f)\leq \frac{\|f\|}{d\left(f,\overline{\Sigma}_{\bfd}[q]\right)},
\]
where
\[
   \overline{\Sigma}_{\bfd}[q]:=\bigcup\left\{\Sigma_{\bfd}[q]_L\,|\,
   L\subseteq \{1,\ldots,q\},\,|L|\leq n+1\right\}
\]
with
\[
  \Sigma_{\bfd}[q]_L:=
  \left\{f\in\Hd[q]\,|\,\exists \xi\in\displaystyle\bigcap_{i\in L}\Ap(f_i=0)\,:\, 
  \mathrm{rank}\,\diff_\xi f^L<|L|\right\}
\]
and $d$ is the distance induced by the inner product of $\Hd[q]$.
\eproof
\end{theo}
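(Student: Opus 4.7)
The plan is to reduce the inequality to the Condition Number Theorem~\ref{homconditionthm} applied to each subtuple $f^L$ individually, and then to translate the distance computed in the smaller space $\mcH_{(d_i)_{i\in L}}[|L|]$ into one in $\Hd[q]$ by a simple padding construction.

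First I would fix an arbitrary $L\subseteq\{1,\ldots,q\}$ with $|L|\le n+1$; by the definition of $\kappabar(f)$ it suffices to bound each $\kappa(f^L)$ by $\|f\|/d(f,\overline{\Sigma}_{\bfd}[q])$ and then take the maximum. Applying Theorem~\ref{homconditionthm} to $f^L\in\mcH_{(d_i)_{i\in L}}[|L|]$ gives
\[
  \kappa(f^L)=\frac{\|f^L\|}{d\bigl(f^L,\Sigma_{(d_i)_{i\in L}}[|L|]\bigr)}.
\]
Since the Weyl inner product on $\Hd[q]$ is an orthogonal direct sum over its $q$ components by~(\ref{eq:Weyl2}), one has $\|f^L\|\le\|f\|$, which handles the numerator.

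For the denominator, the key observation is a padding trick. Given any $g\in\Sigma_{(d_i)_{i\in L}}[|L|]$, define $\tilde g\in\Hd[q]$ by $\tilde g_i:=g_i$ for $i\in L$ and $\tilde g_i:=f_i$ for $i\notin L$. The conditions defining $\Sigma_{\bfd}[q]_L$ only reference the subtuple $\tilde g^L=g$ and its zeros, so $\tilde g\in\Sigma_{\bfd}[q]_L\subseteq\overline{\Sigma}_{\bfd}[q]$. Moreover, the orthogonal decomposition of the Weyl inner product yields $\|f-\tilde g\|=\|f^L-g\|$. Taking infima over $g$ therefore gives
\[
  d\bigl(f,\overline{\Sigma}_{\bfd}[q]\bigr)\le d\bigl(f,\Sigma_{\bfd}[q]_L\bigr)\le d\bigl(f^L,\Sigma_{(d_i)_{i\in L}}[|L|]\bigr).
\]
Combining this with the numerator bound yields $\kappa(f^L)\le\|f\|/d(f,\overline{\Sigma}_{\bfd}[q])$, and taking the maximum over all admissible $L$ completes the proof.

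I do not expect a serious obstacle: the statement is essentially a repackaging of the Condition Number Theorem for $\kappa$, and all of the work lies in checking that the ill-posedness condition transfers correctly between the subspace $\mcH_{(d_i)_{i\in L}}[|L|]$ and the ambient space $\Hd[q]$. The only place where a small care is needed is verifying that the padded tuple $\tilde g$ lies in $\Sigma_{\bfd}[q]_L$ itself (and not merely in its closure), which follows directly from the definition in the theorem's statement, since that definition only references the vanishing of the $f_i$ with $i\in L$ and the rank of $\diff_\xi f^L$.
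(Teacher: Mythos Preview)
Your proposal is correct and follows essentially the same route as the paper. The paper states the result as an immediate consequence of~\cite[Theorem~4.10]{BCL17} and omits the proof; the (commented-out) argument it contains observes exactly the product structure $\Sigma_{\bfd}[q]_L = (\Sigma_{\bfd}[q]_L)^L \times \prod_{i\notin L}\mcH_{d_i}$, which is your padding trick, then applies Theorem~\ref{homconditionthm} to each $f^L$ and takes the maximum---just as you do.
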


\section{Geometry}
\label{se:geometry}

The main goal of this section is to prove 
Quantitative Durfee's Theorem~\ref{semialgebraiccaseneighborhoodshomotopy}.

\subsection{Mather-Thom theory} 

Let us start with a motivation. 
Gradient retractions are central in Morse theory, where they are used 
to establish homotopy equivalences between fibers of Morse functions 
at pairs of regular values without critical values in between. More precisely, 
it is known that for a submersion $\alpha:\mcM\rightarrow I$
from a compact manifold $\mcM$ to an interval $I\subseteq \mathbb{R}$,
the gradient of~$\alpha$ induces a homotopy equivalence
$\alpha^{-1}(t)\subseteq \alpha^{-1}(J)$
for any subinterval $J\subseteq I$ and any $t\in J$. In more general
terms, but also using the gradient of $\alpha$ to prove it, this
translates into the following statement (a particular case of 
Ehresmann's Lemma): for a submersion
$\alpha:\mcM\rightarrow I$ from a compact manifold~$\mcM$ to an interval
$I\subseteq \bbR$, the map
$\alpha:\mcM\rightarrow I$
is a trivial fiber bundle. Recall that a {\em trivial fiber bundle} $\alpha:E\rightarrow B$ is a continuous map of topological spaces for which there is a subspace $F$ of $E$ (the {\em fiber}) and a homeomorphism $h: E\rightarrow F\times B$
such that the diagram
\[
  \begin{tikzcd}
  E \arrow[rr,"h"] \arrow[dr,"\alpha"']
  && F\times B \arrow[dl,"\pi_B"]\\
  & B&
\end{tikzcd}
\]
commutes. That is, $\alpha$ is a projection in disguise.

The extension of these results to a more general class of maps 
is part of the so-called stratified Mather-Thom theory~\cite{Mather2012}, 
which allows one to generalize the results above from 
smooth to semialgebraic, not necessarily smooth, maps. 
We next outline the main notions of this theory
(see also \cite{gibson}). 

The following definition generalizes the notion of a triangulation of $\mcM$, 
by allowing to decompose~$\mcM$ into more general pieces.

\begin{defi}\cite{Mather2012,gibson}\label{defiwhitney} 
A \textit{Whitney stratification} of a smooth manifold $\mcM$ of dimension~$m$ 
is a partition $\mcS$ of $\mcM$ into locally closed smooth submanifolds of~$\mcM$, called {\em strata}, such that:
\begin{itemize}
\item[F] (Locally finite)  Every $x\in \mcM$ has a neighborhood intersecting finitely many strata only.
\item[W] (Whitney's condition b) 
For every strata $\varsigma,\sigma\in\mcS$, 
every point $x\in\varsigma\cap \overline{\sigma}$, every sequence of points $\{x_{\ell}\}_{\ell\in\bbN}$ in $\varsigma$ converging to $x$, and every 
sequence of points 
$\{y_{\ell}\}_{\ell\in\bbN}$ in $\sigma$ converging to $x$,
we have that, in all local charts of $\mcM$ around $x$,  
\[
  \lim_{\ell\to\infty}\overline{x_{\ell},y_{\ell}}\subseteq
  \lim_{\ell\to\infty}\Tg_{y_\ell}\sigma,
\]
provided both limits exist. 
The inclusion should be interpreted in the local coordinates of the chart:
$\overline{x_{\ell},y_{\ell}}$ denotes the straight line joining $x_{\ell}$ and $y_{\ell}$, 
$\Tg_{y_{\ell}}\sigma$ denotes the affine plane 
tangent to $\sigma$ at $y_{\ell}$, 
and the limits are to be interpreted in the corresponding Grassmannians of~$\bbR^m$. 
\end{itemize}
\end{defi}

\begin{remark}
It is usual for the definition of Whitney stratification to include the so-called boundary condition which states that for every pair of strata $\varsigma,\sigma\in\mcS$, $\varsigma\cap \overline{\sigma}\ne\varnothing$ implies $\varsigma\subseteq \overline{\sigma}$. We omit it from the definition, because, as shown in~\cite[p.~16]{gibson}, this condition is not needed.
\end{remark}

A few comments are in order. 
Clearly, every smooth manifold $\mcM$ has the obvious Whitney stratification $\{\mcM\}$.
Also, according to \cite[Lemma 2.2]{Mather2012}, it is sufficient to check Whitney's 
condition~b in a local chart.
As for condition F, we won't mention it in what follows as we 
will only deal with finite stratifications. 

The necessity of Whitney's condition b is demonstrated by the following well-known example. 
Consider the stratification of $\mathbb{R}^2$ 
consisting of the point $\{0\}$, the smooth one-dimensional submanifold 
$$
 C:=\{(e^t\cos(t),e^t\sin(t))\,|\,t\in \bbR\},
$$ 
and the open subset $\sigma:=\bbR^2\setminus(\{0\}\cup C)$.
Note that $C$ is a logarithmic spiral and that the angle between 
$\overline{0,x}$ and $\Tg_xC$ is $\pi/4$ for all $x\in C$. 
This implies that 
\[
   \lim_{\ell\to\infty} \overline{0,y_\ell}\nsubseteq
   \lim_{\ell\to\infty}\Tg_{y_\ell}C
\]
for all sequences $\{y_\ell\}$ of points in $C$, whenever the two limits of lines exist. 
Therefore, Whitney's condition b is violated 
(take the constant sequence $\{0\}$ as the sequence $\{x_\ell\}$).
Indeed, the purpose of condition~b is to exclude wild variations  
such as the one of the logarithmic spiral when approaching 
the origin. 

Whitney stratifications are closed under various operations.

\begin{prop}\label{prop:productW}
{\rm \cite[Ch.~I, (1.2) and (1.4)]{gibson}}
Let $\mcW$ be a Whitney stratification of a smooth manifold $\mcM$. 
\begin{enumerate}
\item[{\rm (R)}] If $U$ is an open subset of $\mcM$, then $\mcW_{|U}:=\{\sigma\cap U\mid \sigma\cap U\ne \varnothing\}$ is a Whitney stratification of $U$.
\item[{\rm (P)}] If $\mcW'$ is a Whitney stratification of a smooth manifold $\mcM'$, then $\mcW\times \mcW':=\{\sigma\times \sigma'\mid \sigma\in\mcW,\,\sigma'\in\mcW'\}$ is a Whitney stratification of $\mcM\times\mcM'$.\eproof
\end{enumerate}
\end{prop}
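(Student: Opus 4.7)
The plan is to verify each clause of Definition~\ref{defiwhitney} directly, handling the smooth-submanifold structure first and then Whitney's condition~b.

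For part~(R), since $U$ is open in $\mcM$, each set $\sigma\cap U$ is an open subset of $\sigma$, hence a smooth submanifold of~$U$. It is also locally closed in $U$: if $\sigma=A\cap B$ with $A$ open and $B$ closed in $\mcM$, then $\sigma\cap U=(A\cap U)\cap(B\cap U)$ is the intersection of an open and a (relatively) closed set in~$U$. Local finiteness is immediate because the sets $\sigma\cap U$ form a subfamily of the traces on $U$ of the original strata. Condition~b is purely local: given the data of condition~b for $\mcW_{|U}$ at some $x\in U$, work in any chart of $\mcM$ around $x$ whose domain is contained in $U$; the condition is then exactly the condition~b for $\mcW$ in $\mcM$ at $x$, which holds by hypothesis. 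I expect this part to be essentially bookkeeping.

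For part~(P), the product $\sigma\times\sigma'$ is a smooth submanifold of $\mcM\times\mcM'$ of dimension $\dim\sigma+\dim\sigma'$, and it is locally closed because a product of locally closed sets is locally closed in the product topology. Local finiteness is clear. The content lies in verifying condition~b for a pair $\varsigma\times\varsigma',\sigma\times\sigma'$ at a point $(x,x')\in(\varsigma\times\varsigma')\cap\overline{\sigma\times\sigma'}=(\varsigma\cap\overline{\sigma})\times(\varsigma'\cap\overline{\sigma'})$. First I would work in a product chart, so that tangent spaces split as $\Tg_{(y,y')}(\sigma\times\sigma')=\Tg_y\sigma\times\Tg_{y'}\sigma'$ and the Grassmannian limit of such products, when it exists, equals the product of the limits in the factor Grassmannians (after extracting a subsequence). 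Given sequences $(x_\ell,x'_\ell)\to(x,x')$ in $\varsigma\times\varsigma'$ and $(y_\ell,y'_\ell)\to(x,x')$ in $\sigma\times\sigma'$ for which all three limits in condition~b exist, I would rescale: set $a_\ell=\|y_\ell-x_\ell\|$, $b_\ell=\|y'_\ell-x'_\ell\|$, and pass to a subsequence so that $a_\ell/\sqrt{a_\ell^2+b_\ell^2}\to\lambda$ and $b_\ell/\sqrt{a_\ell^2+b_\ell^2}\to\mu$ with $\lambda^2+\mu^2=1$, and so that the unit secant directions $(y_\ell-x_\ell)/a_\ell$ (resp. $(y'_\ell-x'_\ell)/b_\ell$), whenever $a_\ell>0$ (resp. $b_\ell>0$), converge to unit vectors $v\in\lim\Tg_{y_\ell}\sigma$ and $v'\in\lim\Tg_{y'_\ell}\sigma'$ by Whitney's condition~b applied to $\mcW$ and $\mcW'$ separately.

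Putting these pieces together, the unit direction of the secant line $\overline{(x_\ell,x'_\ell),(y_\ell,y'_\ell)}$ converges to $(\lambda v,\mu v')$, which lies in $\lim\Tg_{y_\ell}\sigma\times\lim\Tg_{y'_\ell}\sigma'$, as required; if $a_\ell$ or $b_\ell$ vanishes along the subsequence, the corresponding coordinate of the limit direction is $0$ and belongs trivially to the limit tangent space. Since any convergent subsequence of secant lines yields the same fate, the original limit $\lim_\ell\overline{(x_\ell,x'_\ell),(y_\ell,y'_\ell)}$ is contained in $\lim_\ell\Tg_{(y_\ell,y'_\ell)}(\sigma\times\sigma')$. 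The main obstacle I anticipate is precisely this bookkeeping with subsequences and the degenerate cases where $\lambda$ or $\mu$ vanishes; once one commits to working in a product chart and rescales by $\sqrt{a_\ell^2+b_\ell^2}$, the argument reduces cleanly to Whitney's condition~b on each factor.
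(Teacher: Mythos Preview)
The paper does not prove this proposition at all: it simply cites \cite[Ch.~I, (1.2) and (1.4)]{gibson} and marks the statement with \eproof. So there is no ``paper's proof'' to compare against; your sketch supplies what the paper deliberately omits.

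Your argument is correct and is the standard direct verification. Part~(R) is indeed pure bookkeeping. For part~(P), your decomposition of the product secant direction via the scalars $\lambda,\mu$ and the factor unit directions $v,v'$ is the right idea, and you correctly flag the degenerate cases $\lambda=0$ or $\mu=0$. Two minor tightenings: first, once the limit of $\Tg_{(y_\ell,y'_\ell)}(\sigma\times\sigma')=\Tg_{y_\ell}\sigma\times\Tg_{y'_\ell}\sigma'$ exists, the factor limits $\lim\Tg_{y_\ell}\sigma$ and $\lim\Tg_{y'_\ell}\sigma'$ automatically exist (project onto each factor), so no subsequence is needed there. Second, when $\lambda>0$ the convergence of $(y_\ell-x_\ell)/a_\ell$ follows directly from the convergence of the product secant direction and of $a_\ell/\sqrt{a_\ell^2+b_\ell^2}\to\lambda$, so again no further extraction is required before invoking condition~b on the first factor. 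With these observations the subsequence juggling largely disappears. Note also that the case $\varsigma=\sigma$ (or $\varsigma'=\sigma'$) is harmless: condition~b for a stratum against itself is automatic from smoothness.
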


Thom's first isotopy lemma~\cite{thom:69} 
extends Ehresmann's Lemma 
to maps $\alpha:\mcM\rightarrow \bbR^k$ 
that are in a way compatible with a Whitney stratification of~$\mcM$.  
Recall that a {\em proper map} is a continuous map for which the preimage of any 
compact set is compact. 

\begin{theo}[Thom's first isotopy lemma]
\label{thomfirstlemmaB}
Let $\mcM$ be a smooth manifold with a Whitney stratification~$\mcS$ 
and let $\alpha:\mcM\rightarrow \bbR^k$ be a continuous proper map such that: 
\begin{itemize}
\item for each stratum $\sigma\in\mcS$, there is an open neighborhood $U$ of $\overline{\sigma}$ and a smooth map $\varphi:U\rightarrow \bbR^k$ such that $\alpha_{|\sigma}=\varphi$,
\item 
for each stratum $\sigma\in\mcS$, $\alpha_{|\sigma}:\sigma\rightarrow \bbR^k$ is surjective,
\item 
for each stratum $\sigma\in\mcS$, $\alpha_{|\sigma}:\sigma\rightarrow \bbR^k$ is a smooth submersion.
\end{itemize}
Then $\alpha$ is a trivial fiber bundle. In particular, for all subsets $U,V\subseteq \bbR^k$, 
$\alpha^{-1}(U) \subseteq\alpha^{-1}(V)$ is a homotopy equivalence 
whenever $U \subseteq V$ is so.
\end{theo}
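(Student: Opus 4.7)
The plan is to realize the required trivialization as a global flow on $\mcM$ generated by stratified lifts of the coordinate vector fields on $\bbR^k$. Since $\bbR^k$ is contractible and the coordinate vector fields $e_1,\ldots,e_k$ commute, it suffices to construct vector fields $\xi_1,\ldots,\xi_k$ on $\mcM$ whose flows $\phi_i^t$ are everywhere defined, commute pairwise, and satisfy $\alpha\circ\phi_i^t=\tau_{te_i}\circ\alpha$, where $\tau$ denotes translation on $\bbR^k$. Setting $F:=\alpha^{-1}(0)$ and $\Psi\colon F\times\bbR^k\rightarrow\mcM$ by
\[
 \Psi(x,t):=\phi_1^{t_1}\circ\cdots\circ\phi_k^{t_k}(x),
\]
the inverse $\Psi^{-1}$ would then be the desired fiberwise homeomorphism $h$.

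On each individual stratum $\sigma\in\mcS$, the submersion hypothesis on $\alpha|_\sigma$ provides a smooth lift of $e_i$ to a vector field on $\sigma$. The difficulty, and the heart of Mather's theorem, is to glue these local lifts across strata so that the resulting piecewise-smooth vector field $\xi_i$ has a \emph{continuous} flow on $\mcM$ and so that integral curves cannot escape from a stratum into its closure in a wild way. The tool to achieve this is a \emph{system of control data} for $\mcS$: for each stratum $\sigma$ one chooses a tubular neighborhood $T_\sigma$, a smooth projection $\pi_\sigma\colon T_\sigma\rightarrow\sigma$, and a smooth distance function $\rho_\sigma\colon T_\sigma\rightarrow[0,\infty)$, such that on overlapping tubes $\pi_\varsigma\circ\pi_\sigma=\pi_\varsigma$ and $\rho_\varsigma\circ\pi_\sigma=\rho_\varsigma$ whenever $\varsigma$ is a stratum with $\varsigma\subseteq\overline{\sigma}$. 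Whitney's condition~b is exactly what guarantees the existence of such a compatible system, since it excludes pathologies like the logarithmic spiral discussed above and forces the tangent planes of $\sigma$ near $\varsigma$ to align with $\varsigma$.

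With control data in hand, I would lift $e_i$ to a \emph{controlled} vector field $\xi_i$ by induction on stratum dimension: having defined $\xi_i$ on the union of strata of dimension $<d$, extend it to strata of dimension $d$ using the smooth extension of $\alpha|_\sigma$ to a neighborhood of $\overline{\sigma}$ provided by hypothesis, and patch via a partition of unity so that, inside each tube, $\xi_i$ is $\pi_\sigma$-related to its restriction to the deeper strata and $\rho_\sigma$ is constant along the flow on $T_\sigma\setminus\sigma$. This compatibility forces an integral curve of $\xi_i$ approaching a lower-dimensional stratum $\varsigma$ to converge onto an integral curve of $\xi_i|_\varsigma$, yielding a continuous global flow. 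The properness of $\alpha$ confines any integral curve to a compact set over any bounded $t$-interval, and hence extends the flow to all of $\bbR$. The same inductive construction, performed simultaneously for $\xi_1,\ldots,\xi_k$, produces vector fields whose Lie brackets project to $[e_i,e_j]=0$ under $\diff\alpha$ and respect the control data, forcing them to vanish by uniqueness of controlled lifts; thus the flows commute and $\Psi$ is a homeomorphism over $\bbR^k$. The ``In particular'' clause is then immediate: with $h=\Psi^{-1}$ one has $\alpha^{-1}(U)=h^{-1}(F\times U)$ and $\alpha^{-1}(V)=h^{-1}(F\times V)$, and any homotopy equivalence $U\hookrightarrow V$ yields the homotopy equivalence $F\times U\hookrightarrow F\times V$ via $\Id_F$ times the given one.

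The main obstacle is the construction of the compatible system of control data and the inductive, mutually consistent lifting to controlled vector fields — this is the technical heart of Mather's theorem, with Whitney's condition~b supplying the geometric nondegeneracy needed to make the inductive step go through and to ensure that the glued flow is continuous across strata.
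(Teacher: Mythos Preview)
Your approach and the paper's are genuinely different. The paper does \emph{not} redo Mather's construction of controlled vector fields; instead it reduces to the standard smooth version of Thom's first isotopy lemma (as stated in \cite{gibson}) by a graph trick. Concretely, one passes to the graph $\Gamma_\alpha\subseteq\mcM\times\bbR^k$, shows that $\{\Gamma_\alpha(\sigma)\mid\sigma\in\mcS\}$ is again a Whitney stratification of the locally closed set $\Gamma_\alpha$ (this is where the hypothesis that $\alpha_{|\sigma}$ extends smoothly to a neighborhood of $\overline\sigma$ is used, via the diffeomorphism $i_\varphi$), and then applies the smooth Thom lemma to the \emph{smooth} projection $\pi\colon\mcM\times\bbR^k\to\bbR^k$ restricted to $\Gamma_\alpha$. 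A trivialization of $\pi_{|\Gamma_\alpha}$ then pulls back through the homeomorphism $i_\alpha$ to a trivialization of $\alpha$. This buys a great deal of economy: the hard analytic content (control data, controlled lifts, continuity of stratified flows) is used as a black box rather than reproved.

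Your sketch, by contrast, reruns Mather's argument directly in the non-smooth setting. Two points deserve attention. First, the claim that $[\xi_i,\xi_j]=0$ ``by uniqueness of controlled lifts'' is not correct: controlled lifts of a given vector field are far from unique, so the bracket argument does not go through. This is not fatal, since commutativity is in fact unnecessary---with $\Psi(x,t)=\phi_1^{t_1}\circ\cdots\circ\phi_k^{t_k}(x)$ one checks directly that $y\mapsto\big(\phi_k^{-\alpha_k(y)}\circ\cdots\circ\phi_1^{-\alpha_1(y)}(y),\,\alpha(y)\big)$ is a continuous two-sided inverse, using only that each $\phi_i^t$ shifts $\alpha$ by $te_i$. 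Second, and more substantively, Mather's construction of control data \emph{compatible with $\alpha$} (i.e., satisfying $\alpha\circ\pi_\sigma=\alpha$ on each tube) uses the global smoothness of $\alpha$; here you only have, for each $\sigma$, a smooth $\varphi_\sigma$ agreeing with $\alpha$ on $\sigma$ alone, and these $\varphi_\sigma$ need not agree on overlaps. You would have to argue carefully that the stratum-by-stratum smooth extensions suffice to carry out the inductive lift---this is exactly the subtlety that the paper's graph trick sidesteps.
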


In the versions of Thom's first isotopy lemma we found 
in the literature,~\cite[Proposition~11.1]{Mather2012} 
and~\cite[Ch.~II, Theorem~5.2]{gibson}, the map $\alpha:\mcM\rightarrow \bbR^k$ is 
assumed to be smooth. We will show in Appendix~\ref{sec:proofThom} how 
Theorem~\ref{thomfirstlemmaB} follows from 
the statement in~\cite{gibson}.

\subsection{Semialgebraic Whitney stratifications of algebraic neighborhoods}

Our next result constructs a Whitney stratification for algebraic neighborhoods of 
basic semialgebraic sets that satisfies the hypothesis needed for applying Thom's first isotopy lemma. 

We define the {\em negative part} of $t\in\bbR$ to be $|t|_-:=\max\{-t,0\}$. 

\begin{prop}\label{lem:stratification}
Let $f\in\Hd[q]$, $\rho>0$ be such that $\sqrt{2}\kappabar(f)\rho<1$ and $\phi$ be 
the purely conjunctive lax formula
\[
   \phi\equiv \bigwedge_{i\in E} (f_i=0)\wedge 
   \bigwedge_{i\in P} (f_i\geq 0)
\]
with disjoint index sets $E,P$ such that $E\cup P=\{1,\ldots,q\}$.
Consider the open subset $\mcM :=\Ap_\rho^\circ(f,\phi)\setminus \Ap(f,\phi)$ 
of $\bbS^n$ and the continuous map
$\alpha:\mcM\rightarrow \bbR$ defined by 
\[
 \alpha(x):=\max\left\{\max_{i\in E}\frac{|f_i(x)|}{\|f_i\|},\,
 \max_{j\in P}\frac{|f_j(x)|_-}{\|f_j\|}\right\}.
\]
Finally, for $K\subseteq E$ and $L\subseteq P$ define  
\begin{equation}
S_{K,L}:=\left\{x\in\mcM\,\middle\vert\, 
\begin{array}{l}\forall i\in E,\,i \in K \Leftrightarrow \alpha(x)=|f_i(x)|/\|f_i\|\\[2pt] 
\forall j\in P,\, j\in L  \Leftrightarrow \alpha(x)=|f_j(x)|_-/\|f_j\|
\end{array}\right\} .
\end{equation}
Then the collection $\mcW:=\{S_{K,L} \mid S_{K,L} \ne \varnothing\}$ is a Whitney stratification 
of $\mcM$. Furthermore, for each stratum $S_{K,L}$,
\begin{enumerate}[(1)]
\item $S_{K,L}$ has codimension $|K|+|L|-1$,
\item $\alpha_{|S_{K,L}}$ is a smooth submersion, and
\item $\alpha(S_{K,L})=(0,\rho)$.
\end{enumerate}
\end{prop}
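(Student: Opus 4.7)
The plan is to analyze each stratum $S_{K,L}$ by refining it with a fixed sign pattern on the active constraints, writing explicit defining equations, and then verifying smoothness, the submersion property, the image, and Whitney's condition~b using the transversality supplied by $\sqrt{2}\kappabar(f)\rho<1$.

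First I would carry out a sign refinement. On $\mcM$ we have $\alpha>0$, so on $S_{K,L}$ the sign of each $f_i$ for $i\in K$ is locally constant, say $\epsilon_i\in\{\pm1\}$, while $|f_j|_->0$ forces $f_j<0$ for $j\in L$. This decomposes $S_{K,L}$ into disjoint pieces $S_{K,L,\epsilon}$ indexed by $\epsilon\in\{\pm1\}^K$. Setting $\sigma_i=\epsilon_i$ for $i\in K$ and $\sigma_i=-1$ for $i\in L$, and fixing any reference $i_0\in K\cup L$, each $S_{K,L,\epsilon}$ is cut out of the relatively open subset of $\bbS^n$ on which the inactive atoms satisfy strict inequalities by the $|K|+|L|-1$ equations
\[
\sigma_i\frac{f_i(x)}{\|f_i\|}=\sigma_{i_0}\frac{f_{i_0}(x)}{\|f_{i_0}\|},\qquad i\in(K\cup L)\setminus\{i_0\}.
\]

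Next I would use the condition-number hypothesis to establish (1) and (2). For $x\in S_{K,L,\epsilon}$ one has $|f_i(x)|\le\rho\|f_i\|$ for $i\in K\cup L$, hence $\|f^{K\cup L}(x)\|/\|f^{K\cup L}\|\le\rho$. If $|K|+|L|>n+1$ the claimed codimension exceeds $n$ and $S_{K,L}$ is empty, so we may assume $|K\cup L|\le n+1$; then $\kappa(f^{K\cup L})\le\kappabar(f)$ and Proposition~\ref{boundamubykappa} applied with our hypothesis shows $\diff_x f^{K\cup L}$ is surjective. Consequently $\{\nabla f_i\}_{i\in K\cup L}\subset\Tg_x\bbS^n$ are linearly independent, and so are the differentials of the defining equations above, which are specific linear combinations of these. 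The Implicit Function Theorem then yields (1). For (2), note that on $S_{K,L,\epsilon}$ we have $\alpha(x)=\sigma_{i_0}^{-1}f_{i_0}(x)/\|f_{i_0}\|$, whose ambient gradient is proportional to $\nabla f_{i_0}$; a short linear-algebra check, again using the independence of the $\nabla f_i$, shows $\nabla f_{i_0}$ is not in the span of the defining differentials, so $\alpha|_{S_{K,L,\epsilon}}$ is a submersion.

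For (3), the submersion property makes $\alpha(S_{K,L,\epsilon})$ open in $(0,\rho)$. To obtain surjectivity I would apply the Implicit Function Theorem to the parametric system $f_i(x)=\sigma_i t\|f_i\|$ ($i\in K\cup L$) on $\bbS^n$ to produce a smooth local solution $x(t)$, then use compactness of $\bbS^n$ together with the uniform transversality from the condition-number hypothesis to continue this solution through all $t\in(0,\rho)$. Finally, for Whitney's condition~b, I would observe that after sign refinement the decomposition of $\mcM$ is induced by which sub-collection of the smooth functions $\sigma_i f_i/\|f_i\|$ attains a common maximum; since the differentials of this family are transverse on every nonempty intersection (by the same argument applied to larger index sets), one can appeal to the general theory of regular stratifications in~\cite[Ch.~I]{gibson} and~\cite{Mather2012}. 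The main obstacle I expect is precisely this step: the strata are described via a ``maximum-attaining'' condition rather than as zero loci, so translating between the two descriptions and rigorously invoking Whitney regularity requires the most care.
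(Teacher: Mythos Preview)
Your approach for parts~(1) and~(2) is essentially what the paper does, just packaged differently. The paper factors your argument through two lemmas: a \emph{trivializing chart} lemma (Lemma~\ref{lem:trivialcoordinates}) that uses exactly your application of Proposition~\ref{boundamubykappa} to show $\diff_x f^{K\cup L}$ is surjective, and then the Implicit Function Theorem to make the $g_i=f_i/\|f_i\|$ into local coordinate functions on $\bbS^n$; and a \emph{semilinear} lemma (Lemma~\ref{lem:explicitstratification}) analyzing the max-attaining stratification of $\Omega\subseteq\bbR^S$ when the $g_i$ are literally the coordinate functions $u_i$. Your sign refinement $S_{K,L,\epsilon}$ is exactly the decomposition the paper obtains in these coordinates, where each piece becomes a relatively open subset of a linear subspace. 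So for (1) and (2) you and the paper are doing the same thing.

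For part~(3) the paper takes a different and cleaner route than your continuation argument: having (2), $\alpha(S_{K,L})$ is open in $(0,\rho)$; the paper then shows it is also \emph{closed} in $(0,\rho)$ using compactness of $\bbS^n$ and the fact (from part~(3) of the semilinear lemma) that $\alpha(\overline{S_{K,L}})=\alpha(S_{K,L})$. Connectedness of $(0,\rho)$ finishes. Your continuation idea could be made to work but would need more care than you indicate.

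The genuine gap is Whitney's condition~b. Your appeal to ``the general theory of regular stratifications'' in \cite{gibson,Mather2012} is not justified: there is no off-the-shelf theorem there asserting that a stratification by which members of a transversal family of smooth functions attain a common maximum is Whitney regular. The paper does \emph{not} cite such a result; instead, in the semilinear model it writes down the tangent space $\Tg_u\sigma_{K,L}$ explicitly (equations~\eqref{eq:tanSpace}), proves the frontier characterization $\sigma_{K',L'}\subseteq\overline{\sigma_{K,L}}\Leftrightarrow K\subseteq K',\,L\subseteq L'$, and then checks condition~b directly by showing that for convergent sequences the secant line $\overline{u_n u'_n}$ eventually lies in $\Tg_{u_n}\sigma_{K,L}$. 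This is precisely the step you flag as ``the main obstacle,'' and it is where real work happens; the trivializing-chart reduction is what makes it tractable, because in those coordinates the strata are pieces of linear subspaces and condition~b becomes a finite combinatorial check rather than an analytic one.
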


We observe that, for $x\in\bbS^n$ and $r\ge0$, $\alpha(x)\le r$ if and only if 
$x\in\Ap_r(f,\phi)$. In particular, $\Ap(f,\phi)$ is the zero set of $\alpha$. 

To avoid breaking the line of thought, we postpone the proof of 
Proposition~\ref{lem:stratification} to~\S\ref{se:WS} below 
and use it now to show the following, which is our last step before proving 
Theorem~\ref{semialgebraiccaseneighborhoodshomotopy}.

\begin{prop}\label{prop:retractionsemialgebraicnhoods}
Let $f\in\Hd[q]$ and $r>0$ be such that $\sqrt{2}\kappabar(f)r<1$ 
and $\phi$ be a purely conjunctive lax formula. 
Then for all $r'\in(0,r)$, the inclusions in
\begin{equation*}
  \begin{tikzcd}
  \Ap_{r'}(f,\phi)\arrow[r,hook]  \arrow[dr,hook]
  & \Ap_r^\circ(f,\phi)\arrow[d,hook]\\
  & \Ap_{r}(f,\phi)
\end{tikzcd}
\end{equation*}
are homotopy equivalences.
\end{prop}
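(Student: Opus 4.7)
The plan is to apply Thom's first isotopy lemma (Theorem~\ref{thomfirstlemmaB}) to the Whitney stratification furnished by Proposition~\ref{lem:stratification}, and to use the resulting trivial fiber bundle structure to build an explicit strong deformation retraction of $\Ap_r(f,\phi)$ onto $\Ap_{r'}(f,\phi)$ that restricts to a deformation retraction of $\Ap_r^\circ(f,\phi)$ onto $\Ap_{r'}(f,\phi)$. Both displayed inclusions are then homotopy equivalences, and the diagonal one follows by two-out-of-three.

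First I would normalize $\phi$ so that Proposition~\ref{lem:stratification} applies directly. Since $\phi$ is purely conjunctive and lax, it is a conjunction of atoms $f_i\propto_i 0$ with $\propto_i\in\{\le,=,\ge\}$. Replacing each $f_i$ appearing with ``$\le$'' by $-f_i$ (which alters neither $\Ap(f,\phi)$, its algebraic neighborhoods, nor $\kappabar(f)$), and restricting $f$ to the subtuple actually appearing in $\phi$ (which can only decrease $\kappabar$), we may assume $\phi\equiv \bigwedge_{i\in E}(f_i=0)\wedge\bigwedge_{i\in P}(f_i\ge 0)$ with disjoint $E,P$ satisfying $E\cup P=\{1,\ldots,q\}$. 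Now pick $\rho$ with $r<\rho<1/(\sqrt{2}\kappabar(f))$ and apply Proposition~\ref{lem:stratification} at tolerance $\rho$ to obtain a Whitney stratification $\mcW$ of $\mcM:=\Ap_\rho^\circ(f,\phi)\setminus\Ap(f,\phi)$, together with the function $\alpha:\mcM\to(0,\rho)$ that is a smooth surjective submersion on each stratum.

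Next I would verify the remaining hypotheses of Theorem~\ref{thomfirstlemmaB} for $\alpha:\mcM\to(0,\rho)$. Properness is immediate: for any compact $[a,b]\subseteq(0,\rho)$, $\alpha^{-1}([a,b])=\Ap_b(f,\phi)\setminus\Ap_a^\circ(f,\phi)$ is closed in the compact space $\bbS^n$, hence compact. The smooth-extension condition is likewise transparent, because on each stratum $\alpha$ agrees with one of the finitely many globally smooth functions $\pm f_i/\|f_i\|$ on $\bbS^n$. Thom's lemma then provides a fibre-preserving homeomorphism $h:\mcM\to F\times(0,\rho)$, with fibre $F:=\alpha^{-1}(r')$, such that the second-factor projection of $h$ equals $\alpha$. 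Writing $\eta(x)$ for the first-factor component of $h(x)$, I would define
\[
H(x,s):=\begin{cases}
x & \text{if } x\in\Ap_{r'}(f,\phi),\\
h^{-1}\bigl(\eta(x),\,(1-s)\alpha(x)+sr'\bigr) & \text{if } x\in\Ap_r(f,\phi)\setminus\Ap_{r'}(f,\phi),
\end{cases}
\]
for $(x,s)\in\Ap_r(f,\phi)\times[0,1]$. Since $r'<\alpha(x)\le r$ in the second case, a direct check shows $H(\cdot,0)=\mathrm{id}$, $H(\cdot,1)$ takes values in $\Ap_{r'}(f,\phi)$, $H$ fixes $\Ap_{r'}(f,\phi)$ pointwise, and the interpolated value of $\alpha$ stays at most $\alpha(x)$, so $H$ maps $\Ap_r^\circ(f,\phi)$ into itself as well.

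The delicate point, which I expect to be the main obstacle, is the continuity of $H$ along the inner boundary $\Ap(f,\phi)$, where the trivialization $h$ is undefined. This is handled by the following observation: for any $x_0\in\Ap(f,\phi)$ one has $\alpha(y)\to 0$ as $y\to x_0$ in $\mcM$, so a sufficiently small neighborhood of $x_0$ in $\Ap_r(f,\phi)$ lies entirely inside $\Ap_{r'}(f,\phi)\cup\Ap(f,\phi)$, on which $H$ is the identity and hence trivially continuous. Continuity across the common locus $\{\alpha=r'\}$ between the two cases of the definition is immediate, since the two formulas agree there. This yields the desired strong deformation retractions and completes the argument.
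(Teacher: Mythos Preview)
Your argument is essentially the paper's own proof: normalize $\phi$, apply Proposition~\ref{lem:stratification} at a slightly larger tolerance~$\rho$, invoke Thom's first isotopy lemma to trivialize $\alpha$ over $(0,\rho)$, and linearly interpolate in the fiber coordinate to retract onto $\Ap_{r'}(f,\phi)$. Two minor slips to fix: on a stratum $S_{K,L}$ the sign of $f_i$ for $i\in K$ need not be globally constant, so the correct smooth extension near $\overline{S_{K,L}}$ is $|f_i|/\|f_i\|$ (smooth because $\alpha>0$ forces $f_i\neq0$ there), not a single global $\pm f_i/\|f_i\|$; and since your deformation retraction already establishes the horizontal and diagonal inclusions, it is the \emph{vertical} inclusion $\Ap_r^\circ(f,\phi)\hookrightarrow\Ap_r(f,\phi)$ that follows from two-out-of-three.
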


\begin{proof} 
After permuting the polynomials, changing signs and eliminating non-occurring 
polynomials, we can assume that 
\[
   \phi\equiv \bigwedge_{i\in E} (f_i=0)\wedge \bigwedge_{i\in P} (f_i\geq 0),
\]
where the union $E\cup P=\{1,\ldots,q\}$ is disjoint,  
since these operations do not increase the value of $\kappabar(f)$.
Also, let $\rho>r$ be such that $\sqrt{2}\kappabar(f)\rho<1$.

Proposition~\ref{lem:stratification} gives us a Whitney stratification of 
$\mcM =\Ap^\circ_\rho(f,\phi)\setminus\Ap(f,\phi)$
on which $\alpha\colon\mcM\rightarrow (0,\rho)$ is a proper map 
satisfying the hypothesis of Theorem~\ref{thomfirstlemmaB}. Notice that the first hypothesis is satisfied because on each $\overline{S_{K,L}}$, $\alpha$ agrees with the absolute value of a polynomial, which is smooth as long it does not vanish, which is guaranteed by the fact that $\alpha$ only takes positive values on $\Omega$.
We can therefore use this theorem to deduce that 
$\alpha$ is a trivial fiber bundle. Let $F$ denote its fiber. 
Then there is a continuous map 
$\varphi:\mcM\to F$ such that 
$$
 h\colon\mcM\rightarrow F\times (0,\rho),\, x\mapsto
 (\varphi(x),\alpha(x))
$$ 
is a homeomorphism. 
Using this function, we see that the inclusion 
$\Ap_{r'}(f,\phi) \hookrightarrow\Ap_r(f,\phi)$
is a homotopy equivalence by the following continuous retraction 
\begin{align*}
 \eta:\Ap_r(f,\phi)\times [0,1]&\rightarrow \Ap_r(f,\phi)\\
(x,t)&\mapsto \begin{cases}
x&\text{if }x\in \Ap_{r'}(f,\phi)\\
h^{-1}(\varphi(x),tr'+(1-t)\alpha(x))&\text{otherwise.}
\end{cases}
\end{align*}
This restricts to a continuous retraction of $\Ap_r^\circ(f,\phi)$ onto $\Ap_{r'}(f,\phi)$, 
which shows also that the inclusion $\Ap_{r'}(f,\phi)\hookrightarrow\Ap_r^\circ(f,\phi)$ 
is a homotopy equivalence.

To show that the inclusion $\iota:\Ap_{r}^\circ(f,\phi) \hookrightarrow\Ap_r(f,\phi)$ is a homotopy equivalence, 
consider the retraction $\varrho:x\mapsto\eta(x,1)$ of $\Ap_r(f,\phi)$ onto $\Ap_{r'}(f,\phi)$ and its restriction $\varrho_\circ$ 
to a retraction of $\Ap_r^\circ(f,\phi)$ onto $\Ap_{r'}(f,\phi)$. We have shown above, 
using the map $\eta$, that $\varrho$ and $\varrho_\circ$ are homotopic 
to the identity maps of $\Ap_r(f,\phi)$ and $\Ap_r^\circ(f,\phi)$, respectively. 
Hence $\iota$ is a homotopy equivalence, 
because both $\varrho\circ \iota=\varrho_\circ$ and $\iota\circ\varrho=\varrho$ are homotopic to the corresponding identity maps.
\end{proof}

\begin{remark}
Notice that the proof of Proposition~\ref{prop:retractionsemialgebraicnhoods} cannot 
be extended to the case $r'=0$, directly proving 
Theorem~\ref{semialgebraiccaseneighborhoodshomotopy}, because the hypotheses of 
Thom's first isotopy lemma don't apply if we include the zero set of $\alpha$ 
inside $\mcM$. But we can now proceed with the proof of this theorem.
\end{remark}

\begin{proof}[Proof of Theorem~\ref{semialgebraiccaseneighborhoodshomotopy}]
By Proposition~\ref{prop:retractionsemialgebraicnhoods} the inclusion 
$\Ap_r^\circ(f,\phi)\hookrightarrow \Ap_r(f,\phi)$ is a homotopy equivalence. 
It is therefore enough to show that so is  
$\Ap(f,\phi)\hookrightarrow \Ap_r(f,\phi)$, as the third equivalence is a 
consequence of these two.  

By the Semialgebraic Triangulation Theorem~\cite[Theorem~9.2.1]{BCR:98}, 
$\Ap_r(f,\phi)$ has the structure of a CW complex of which $\Ap(f,\phi)$ is a subcomplex. 
Therefore, by \cite[Prop.~A.5.]{hatcher}, there is an open neighborhood $U$ 
satisfying that $\Ap(f,\phi)\subseteq U\subseteq \Ap_r(f,\phi)$ 
and that $\Ap(f,\phi)\hookrightarrow U$ is a homotopy equivalence. Notice that $U$ is 
open in the sphere, because we can assume that $U\subseteq \Ap_r^\circ(f,\phi)$ 
by choosing it sufficiently small.

Because the family $\{\Ap_\rho(f,\varphi)\}_{\rho\in(0,r)}$ is a descending family 
of compact sets satisfying that $\cap_{\rho\in(0,r)}\Ap_\rho(f,\varphi)=\Ap(f,\phi)$, 
there exists a sufficiently small $r'\in(0,r)$ such that 
$\Ap_{r'}(f,\phi)\subseteq U$. 
This gives us the following sequence of inclusions
\begin{equation*}\label{inclusionsequence}
  \Ap(f,\phi)\hookrightarrow\Ap_{r'}(f,\phi)\hookrightarrow U
  \hookrightarrow\Ap_{r}(f,\phi).
\end{equation*}
Passing to $k$th homotopy groups we obtain the  sequence 
of group homomorphisms 
\[\begin{tikzcd}
\pi_k\left(\Ap(f,\phi)\right)\arrow[r,"\alpha"]
&\pi_k\left(\Ap_{r'}(f,\phi)\right)\arrow[r,"\beta"]
&\pi_k\left(U\right)\arrow[r,"\gamma"]
&\pi_k\left(\Ap_{r}(f,\phi)\right)
\end{tikzcd}\]
where $\beta\circ\alpha$ is an isomorphism by the choice of $U$ 
and $\gamma\circ \beta$ is so due to 
Proposition~\ref{prop:retractionsemialgebraicnhoods}. 
It follows that $\alpha$, $\beta$ and $\gamma$ are isomorphisms and, hence, 
so is $\gamma\circ\beta\circ\alpha$.

We have thus shown that the inclusion $\Ap(f,\phi)\hookrightarrow \Ap_r(f,\phi)$ 
induces an isomorphism of homotopy groups. 
This translates to a homotopy equivalence by virtue of the Semialgebraic 
Triangulation Theorem~\cite[Theorem~9.2.1]{BCR:98} and 
Whitehead's Theorem \cite[Theorem~4.5]{hatcher}, which states that a continuous map of 
CW complexes that induces an isomorphism of homotopy groups is an homotopy equivalence. 
\end{proof}

\begin{remark}
In both Theorem~\ref{semialgebraiccaseneighborhoodshomotopy} and
Proposition~\ref{prop:retractionsemialgebraicnhoods} 
the inclusions (with the exception of $\Ap_r(f,\phi)\hookrightarrow \Ap_r^\circ(f,\phi)$) 
are actually deformation retractions. 
For Proposition~\ref{prop:retractionsemialgebraicnhoods}, this can be seen by 
modifying our proof; for Theorem~\ref{semialgebraiccaseneighborhoodshomotopy}, 
one can conclude using the stronger version of 
Whitehead's Theorem for subcomplexes~\cite[Theorem~4.5]{hatcher}.
\end{remark}

\subsection{Trivializing charts and semilinear stratifications}
\label{se:WS}

The goal of this subsection is to prove Proposition~\ref{lem:stratification}. 
The overall idea of the proof relies on two stepping stones. Firstly, to show that, 
at each point $x$ of $\mcM$ we can define a local chart for which the normalized
components $f_i /|f_i|$ of $f$ are the coordinate functions. Secondly, once with these local charts at hand, to show that the values taken by 
the normalized polynomials are enough to define the desired stratification. 
As these are values of coordinate functions, the resulting strata are semilinear. 

These stepping stones are dealt with, respectively, by the two lemmas below. 
We begin with a simple consequence of the Implicit Function Theorem. Recall, $g^S:=(g_i)_{i\in S}$.

\begin{lem}\label{lem:trivialcoordinates}
For given $f\in\Hd[q]$ put $g_i := f_i /|f_i|$. 
Fix $x\in\bbS^n$, and let $r>0$ be such that $\sqrt{2}\kappabar(f)r<1$.
We define the index set 
\[
  S:=\{i\in\{1,\ldots q\} \,\mid\, |g_i(x)|\leq r\} 
\]
and set 
$\bar{u} := g^S(x) \in\bbR^S$. 
Then $|S|\leq n$, and there exist an open neighborhood $O_x$ of $x$ 
in~$\bbS^n$ and $\varepsilon>0$ with the following properties: 

\begin{enumerate}
\item[{\rm (t1)}] We have $|g_i(y)|>r$ for all $i\notin S$ and all $y\in O_x$. 

\item[{\rm (t2)}] For all $i$ such that $g_i(x)\neq 0$, the sign of $g_i$ does not change on $O_x$.

\item[{\rm (t3)}] The set 
$\mcZ_x:=\{y\in O_x \mid f^S(y)=f^S(x)\}$ 
is a smooth submanifold of $\bbS^n$ of codimension~$|S|$,  
and there exists a diffeomorphism~$h$
such that the diagram
\[
\begin{tikzcd}
O_x\arrow[rr,"h"] \arrow[dr,"g^S"']
&& \mcZ_x\times B(\bar{u},\varepsilon) \arrow[dl,"\pi_{B}"]\\
& B\big(\bar{u},\varepsilon\big)&
\end{tikzcd},
\]
commutes (that is, for every $i\in S$, $g_i$ becomes a coordinate projection in the coordinates on $O_x$ given by $h$).  
\end{enumerate}
\end{lem}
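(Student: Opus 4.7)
The plan is to derive the statement from Proposition~\ref{boundamubykappa} together with the Submersion Theorem (a standard consequence of the Implicit Function Theorem). Throughout I would exploit the identity $\|f^L\|^2=\sum_{i\in L}\|f_i\|^2$ coming from the Weyl inner product~\eqref{eq:Weyl2}, and the obvious inequality $\kappa(f^L,x)\leq\kappa(f^L)\leq\kappabar(f)$ valid whenever $|L|\leq n+1$.

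First I would establish the bound $|S|\leq n$ by contradiction. Suppose $|S|\geq n+1$ and pick $L\subseteq S$ with $|L|=n+1$. From $|g_i(x)|\leq r$ for each $i\in L$ one has $f_i(x)^2\leq r^2\|f_i\|^2$; summing over $i\in L$ yields $\|f^L(x)\|/\|f^L\|\leq r$. Combined with $\kappa(f^L,x)\leq\kappabar(f)$ and the hypothesis $\sqrt{2}\kappabar(f)r<1$, Proposition~\ref{boundamubykappa} forces $\diff_xf^L:\Tg_x\bbS^n\to\bbR^L$ to be surjective. This is impossible since $\dim\Tg_x\bbS^n=n<n+1=|L|$, a contradiction.

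The same argument applied to the tuple $f^S$ (for which $|S|\leq n\leq n+1$) shows that $\diff_xf^S$, and consequently $\diff_xg^S$, is surjective at~$x$. The Submersion Theorem then supplies an open neighborhood $O_x$ of $x$ in $\bbS^n$, a radius $\varepsilon>0$, and a diffeomorphism $h:O_x\to\mcZ_x\times B(\bar{u},\varepsilon)$ under which $g^S$ becomes the projection $\pi_B$, where $\mcZ_x=\{y\in O_x\mid f^S(y)=f^S(x)\}=(g^S)^{-1}(\bar{u})\cap O_x$ is a smooth submanifold of codimension $|S|$ in $\bbS^n$. This establishes (t3). Properties (t1) and (t2) then follow by continuity after possibly shrinking $O_x$: for each of the finitely many $i\notin S$ the strict inequality $|g_i(x)|>r$ persists locally, and for each $i$ with $g_i(x)\neq 0$ the sign of $g_i$ is locally constant.

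The main obstacle I anticipate is obtaining the dimension count $|S|\leq n$: it is what unlocks the application of the Submersion Theorem in (t3), and it ties together the definition of $\kappabar$, the Weyl-norm identity for tuples, and the dimension of $\Tg_x\bbS^n$. Everything else is routine: (t1) and (t2) are elementary openness arguments, and the product chart in (t3) is the standard local form of a submersion.
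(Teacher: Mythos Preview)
Your proposal is correct and follows essentially the same route as the paper: apply Proposition~\ref{boundamubykappa} to obtain surjectivity of $\diff_x f^S$, invoke the Implicit Function Theorem (Submersion Theorem) for (t3), and shrink $O_x$ for (t1)--(t2). Your handling of the bound $|S|\leq n$ via a subset $L\subseteq S$ of size $n+1$ is in fact more careful than the paper's, which tacitly applies $\kappa(f^S)\leq\kappabar(f)$ before establishing $|S|\leq n+1$.
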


\begin{proof}
Assume first that $S$ is nonempty.
Proposition~\ref{boundamubykappa} implies that 
$\diff_x f^S$ is surjective, since 
$\sqrt{2}\kappa(f^S)\frac{\|f^S(x)\|}{\|f^S\|}<1$.
So clearly $|S|\le m$.
Hence the derivative of the map $g^S$ at~$x$ %
is surjective as well. 
The Implicit Function Theorem implies the existence
of a diffeomorphism $h$ and a neighborhood $O_x$ satisfying~(t3) with $\mcZ_x$ smooth.  
By shrinking~$O_x$, we can guarantee the properties (t1) and (t2). 
Finally, the assertion is easily checked if $S$ is empty. 
\end{proof}

We will call the pair $(O_x,h)$ a \textit{trivializing chart at $x$}.  
We can describe a point  $y\in O_x$ by its {\em trivializing coordinates} 
$(z,u)\in\mcZ_x\times B(\bar{u},\varepsilon\big)$, where $u=(u_i)_{i\in S}$ and $h(y)=(z,u)$.
In these coordinates, the normalized polynomial $g_i=f_i/\|f_i\|$, for $i\in S$,
takes the form $(z,u)\mapsto u_i$.  

Our second stepping stone establishes a Whitney stratification in 
a combinatorial situation, where all the strata are semilinear.
The setting is as follows.
Recall that $|t|_-=\max\{-t,0\}$ is the negative part of $t\in\bbR$.

Let $S=I\cup J$ be a partition of a nonempty finite set $S$.  
We associate with this partition the finite union of open halfspaces 
\[
  \Omega := \bigcup_{i\in I} \{u\in\bbR^S \mid u_i \ne 0 \} \cup 
      \bigcup_{j\in J} \{u\in\bbR^S \mid u_j <0 \} .
\]
Consider the function $\alpha:\bbR^S\rightarrow [0,\infty)$ 
defined by 
\[
\quad 
  \alpha(u):=\max\left\{\max_{i\in I} |u_i|,\,
  \max_{j\in J}|u_k|_-\right\} 
\]
and write 
$K_{u}:=\{i\in I\mid |u_i|=\alpha(u)\}$, 
$L_{u}:=\{j\in J\mid |u_j|_-=\alpha(u)\}$  
for the set of indices, where at $u\in \Omega$, 
$\alpha$ attains the maximum over~$I$ and $J$, respectively. 
If we define 
\[
  \sigma_{K,L} :=\{u\in\Omega\mid K_{u}=K,\,L_{u}=L\} 
\]
for a pair of subsets $K\subseteq I$ and $L\subseteq J$, 
we see that 
$\{\sigma_{K,L} \mid \sigma_{K,L} \ne \varnothing\}$ 
is a partition of $\Omega$. 
Also, it is easy to check that 
\begin{equation}\label{eq:varsigma}
\sigma_{K,L}:=\left\{u\in\Omega\,\middle\vert\, 
\begin{array}{l}
\forall i\in I,\, i\in K \Leftrightarrow \alpha(u)=|u_i|\\[2pt]
\forall j\in J,\, j\in L \Leftrightarrow \alpha(u)=|u_j|_- 
\end{array}\right\}.
\end{equation}

\begin{lem}\label{lem:explicitstratification}
In the above setting, 
$\mcW:=\{\sigma_{K,L} \mid \sigma_{K,L} \ne \varnothing\}$ 
is a Whitney stratification of $\Omega$. 
Furthermore, for each stratum $\sigma_{K,L}$ in~$\mcW$: 
\begin{enumerate}[(1)]
\item $\sigma_{K,L}$ has codimension $|K|+|L|-1$,
\item $\alpha_{|\sigma_{K,L}}$ is a smooth submersion, and
\item if $\sigma_{K',L'}\subseteq \overline{\sigma_{K,L}}$, then 
$\alpha(\sigma_{K',L'}) \subseteq \alpha(\sigma_{K,L})$.
\end{enumerate}
\end{lem}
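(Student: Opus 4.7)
The plan is to make each stratum $\sigma_{K,L}$ explicit as a finite union of relatively open pieces of linear subspaces of $\bbR^S$, after which all the required properties become elementary consequences of this semilinear structure. First I would note that, on $\Omega$, the function $\alpha$ is strictly positive (each point of $\Omega$ has either some $u_i\ne 0$ or some $u_j<0$) and continuous as a maximum of finitely many continuous functions, and that on any connected component of $\sigma_{K,L}$ the coordinates $u_i$ for $i\in K$ have constant signs $\epsilon_i\in\{+1,-1\}$. A point of such a component is then characterized by: (a) the $|K|+|L|$ quantities $\{\epsilon_i u_i\}_{i\in K}\cup\{-u_j\}_{j\in L}$ are all equal to a common positive value $t$, and (b) $|u_i|<t$ for $i\in I\setminus K$ and $u_j>-t$ for $j\in J\setminus L$.

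Condition (a) cuts out a linear subspace $V_{K,L,\epsilon}\subseteq \bbR^S$ defined by $|K|+|L|-1$ independent equations, while (b) carves out a relatively open subset of it. This immediately yields the codimension claim in (1). Moreover $\alpha$ restricted to this component equals $t$, one of the global linear coordinates on $V_{K,L,\epsilon}$, which establishes the smooth submersion claim (2). For (3), I would exhibit, for each $t_0>0$, the explicit point with $u_i=\epsilon_i t_0$ ($i\in K$), $u_j=-t_0$ ($j\in L$), and all other coordinates zero; this point lies in $\sigma_{K,L}$ and has $\alpha=t_0$, so $\alpha(\sigma_{K,L})=(0,\infty)$ for every nonempty stratum, making the inclusion in (3) automatic.

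The essential content is Whitney's condition b. Given strata $\sigma=\sigma_{K,L}$ and $\varsigma=\sigma_{K',L'}$ with $\varsigma\cap\overline{\sigma}\ne\varnothing$, I would fix $x$ in the intersection together with sequences $y_\ell\to x$ in $\sigma$ and $x_\ell\to x$ in $\varsigma$. After passing to subsequences, one may assume $y_\ell$ lies in a fixed sign-component of $\sigma$ (with signs $\epsilon$) and $x_\ell$ in a fixed sign-component of $\varsigma$ (with signs $\epsilon'$). Taking limits of the defining equalities and comparing which coordinates attain the common maximum, one verifies $K\subseteq K'$, $L\subseteq L'$, and $\epsilon'|_K=\epsilon$, which translates to the inclusion $V_{K',L',\epsilon'}\subseteq V_{K,L,\epsilon}$ of linear subspaces through the origin. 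Since $x_\ell\in V_{K',L',\epsilon'}\subseteq V_{K,L,\epsilon}$ and $y_\ell\in V_{K,L,\epsilon}$, every secant vector $y_\ell-x_\ell$ lies in $V_{K,L,\epsilon}$, and hence so does every limit of secant lines. But $T_{y_\ell}\sigma$ is the constant subspace $V_{K,L,\epsilon}$, so Whitney's condition b follows automatically.

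The main subtlety I anticipate is the sign decomposition: $\sigma_{K,L}$ is typically disconnected, with one connected component per $\epsilon\in\{+1,-1\}^K$, and one must track carefully that any approach from $\varsigma$ to $\sigma$ happens within sign-components that are compatible in the sense $\epsilon'|_K=\epsilon$, so that the linear-subspace inclusion used in the verification of Whitney's condition is really valid.
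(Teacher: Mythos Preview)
Your proposal is correct and follows essentially the same route as the paper: decompose each $\sigma_{K,L}$ by the sign pattern $\epsilon\in\{+1,-1\}^K$ into relatively open pieces of the linear subspaces $V_{K,L,\epsilon}$, read off codimension and the submersion property from this linear description, and verify Whitney's condition~b by showing that the secant vectors $y_\ell-x_\ell$ lie in the constant tangent space $V_{K,L,\epsilon}=T_{y_\ell}\sigma_{K,L}$ once one checks $K\subseteq K'$, $L\subseteq L'$, and $\epsilon'|_K=\epsilon$. The one genuine difference is in part~(3): the paper first characterizes the closure relation $\sigma_{K',L'}\subseteq\overline{\sigma_{K,L}}\Leftrightarrow K\subseteq K',\,L\subseteq L'$ and then, for each $u\in\sigma_{K',L'}$, produces an approximating sequence in $\sigma_{K,L}$ with the \emph{same} $\alpha$-value; you instead exhibit, for every $t_0>0$, the explicit point $(\epsilon_i t_0)_{i\in K},(-t_0)_{j\in L},0$ elsewhere, conclude $\alpha(\sigma_{K,L})=(0,\infty)$ for every nonempty stratum, and get the inclusion in~(3) for free. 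Your shortcut is cleaner for this lemma; the paper's more detailed closure analysis is not wasted, though, since the equality $\alpha(\overline{S_{K,L}})=\alpha(S_{K,L})$ reappears in the proof of Proposition~\ref{lem:stratification}.
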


We postpone the proof of this lemma (which is a long sequence of elementary 
arguments) and proceed with the proof of Proposition~\ref{lem:stratification}. 

\begin{proof}[Proof of Proposition~\ref{lem:stratification}]
By the locality of the definition of Whitney stratifications, in order 
to prove that a family of subsets $\mcW$ is a Whitney stratification 
of a manifold $\mcM$, it is enough to show that for every point $x\in \mcM$, 
there is an open neighborhood $O_x$ of~$x$ in $\mcM$ 
such that $\mcW\cap O_x:=\{S\cap O_x\,|\,S\in\mcW\}$ is a Whitney stratification of $O_x$. 
This last statement will in turn be proved 
by exhibiting, for each $x\in\mcM$,  
a local chart in which 
we can apply Lemma~\ref{lem:explicitstratification}. 

Fix $x\in \mcM :=\Ap_\rho^\circ(f,\phi)\setminus \Ap(f,\phi)$.
By Lemma~\ref{lem:trivialcoordinates}, 
there is a trivializing neighborhood~$O_x$ with trivializing coordinates~$(z,u)$.
By shrinking $O_x$ if necessary, we can assume that $O_x\subseteq \mcM$. 
Let $B:=B\big(\bar{u},\varepsilon\big) \subseteq\bbR^S$ 
denote the open ball defined in Lemma~\ref{lem:trivialcoordinates}. 
In the coordinates~$(z,u)$, 
we are in the situation of Lemma~\ref{lem:explicitstratification} when we take 
$$
S=\{i\in\{1,\ldots q\} \,\mid\, |g_i(x)|\leq \rho\},\ 
I:=E\cap S,\ J:=P\cap S.
$$
Since the assertion is trivial if $S$ is empty, we assume $S\ne\varnothing$. 
Applying  Lemma~\ref{lem:explicitstratification}, 
we obtain the Whitney stratification $\{\sigma_{K,L} \mid \sigma_{K,L} \ne \varnothing\}$ of $\Omega$,
which induces the Whitney stratification $\{\sigma_{K,L}\cap B \mid \sigma_{K,L}\cap B \ne \varnothing\}$ of 
the open ball~$B$, by Proposition~\ref{prop:productW}(R). 
This in turn induces the product Whitney stratification 
$\{\mcZ_x \times(\sigma_{K,L}\cap B) \mid \sigma_{K,L} \cap B \ne \varnothing\}$ 
of $\mcZ_x \times B \simeq O_x$ by Proposition~\ref{prop:productW}(P). 
We now note that each 
$\mcZ_x\times (\sigma_{K,L} \cap B)$ corresponds to
$S_{K,L}\cap O_x$ in the local coordinates and hence, $\mcW$ is a Whitney stratification 
of $\mcM$ as claimed. 

Using Lemma~\ref{lem:explicitstratification} in the 
corresponding local coordinates, we deduce easily 
the assertions~(1) and~(2) of Proposition~\ref{lem:stratification}
from the corresponding parts of Lemma~\ref{lem:explicitstratification}, 
as well as the fact that 
$\alpha(\overline{S_{K,L}}) =\alpha(S_{K,L})$ 
from its part~(3).

It remains to prove the third assertion, that claims that $\alpha(S_{K,L}) = (0,\rho)$.  
Note first that the inclusion $\alpha(S_{K,L}) \subseteq (0,\rho)$ follows from 
the definitions of $\mcM$ and $\alpha$. 
The image $\alpha(S_{K,L})$ is open since, by part~(2), $\alpha_{|S_{K,L}}$ is a submersion. 
We show now that $\alpha(S_{K,L})$ is also closed in $(0,\rho)$. 
By the connectedness of the interval $(0,\rho)$, and since $S_{K,L}$ is nonempty, this 
will imply $\alpha(S_{K,L}) = (0,\rho)$.

So consider a sequence 
$\{p_n\}$ in $S_{K,L}$ such that $\{\alpha(p_n)\}$ converges to $\alpha_\infty \in (0,\rho)$.
By passing to a subsequence, we may assume that $\{p_n\}$ converges to 
some point $p_\infty$ in $\bbS^n$.
Then $p_\infty \in \mcM$ as 
$\alpha(p_\infty)=\alpha_\infty\in(0,\rho)$.
Therefore $\alpha_\infty \in \alpha(S_{K,L})$, since we already know that 
$\alpha(\overline{S_{K,L}}) =\alpha(S_{K,L})$. 
We have shown that $\alpha(S_{K,L})$ is indeed closed in $(0,\rho)$ 
and the proof is complete.
\end{proof}

\begin{proof}[Proof of Lemma~\ref{lem:explicitstratification}]
We already verified that 
$\{\sigma_{K,L} \mid \sigma_{K,L} \ne \varnothing\}$
is a partition of $\bbR^S$.  
We will use equations describing the different strata and their closures.
Let $U_i$ denote the variable corresponding to the coordinate function $u\mapsto u_i$. The set 
$\sigma_{K,L}$ can be described by the expression 
\begin{equation}\label{eq:omega}
\left(\vee_{i\in I}(U_i\neq 0)\right)\vee \left(\vee_{j\in J}(U_j<0)\right) ,
\end{equation}
ensuring that $u\in\Omega$, together with the (highly redundant) system
\begin{equation}\label{systemsigma}
\begin{cases}
|U_k|>|U_i|&(k\in K,\,i\in I\setminus K)\\
|U_\ell|_->|U_j|_-&(\ell\in L,\,j\in J\setminus L)\\
|U_k|=|U_{k'}|&(k,k'\in K)\\
|U_\ell|_-=|U_{\ell'}|_-&(\ell,\ell'\in L)\\
|U_k|=|U_\ell|_-&(k\in K,\,\ell\in L).
\end{cases}
\end{equation}
Let $\Omega_{K,L}$ denote the open subset of $\Omega$,
described by~\eqref{eq:omega}, together with
$$
\begin{cases}
|U_k|>|U_i|&(k\in K,\,i\in I\setminus K)\\
|U_\ell|_->|U_j|_-&(\ell\in L,\,j\in J\setminus L).\\
\end{cases}
$$
We can obtain $\sigma_{K,L}$ as the intersection of~$\Omega_{K,L}$
with the union,  
over all $\xi\in\{-1,1\}^K$, 
of the linear subspaces of $\bbR^S$ 
given by 
\[
\begin{cases}
\xi_kU_k=\xi_{k'}U_{k'}&(k,k'\in K)\\
U_\ell=U_{\ell'}&(\ell,\ell'\in L)\\
\xi_kU_k=-U_\ell&(k\in K,\,\ell\in L) . 
\end{cases}
\]
Each of these linear subspaces has codimension $|K|+|L|-1$ since,
in order to get a minimal system of equations, 
we only need to select a variable and, for each of the remaing $|K|+|L|-1$ 
variables, keep an equation determining its value. 

To prove that $\sigma_{K,L}$ is a smooth submanifold of $\bbR^S$, 
it suffices to show that any point~$u$ lying in 
two of these linear subspaces necessarily lies outside of $\Omega$.
Indeed, suppose $u$ is such a point. Then there exists $a\in K$ such that 
$u_a=-u_a$ and hence $u_a=0$. This implies 
$u_i=0$ for all $i\in I$ and then 
$u_j=0$ for all $j\in J$. Therefore $u\not\in\Omega$. 
Thus $\sigma_{K,L}$ is indeed a locally closed smooth submanifold of $\bbR^S$.
In particular, we have shown part~(1) of Lemma~\ref{lem:explicitstratification}. 

For part~(2), we observe that the restriction of $\alpha$ to 
each of the linear subspaces that make~$\sigma_{K,L}$ agrees 
with either $\xi_k U_k$ for $k\in K$ or $-U_\ell$ for $\ell\in L$, which are 
non-zero linear maps on $\sigma_{K,L}$, as $\alpha$ does not take the value zero 
in $\Omega$. 

We now claim that the following three conditions are equivalent:
\begin{eqnarray}\label{eq:3equiv}
\mbox{(i)} &~& \sigma_{K',L'} \subseteq \overline{\sigma_{K,L}}\nonumber\\ 
\mbox{(ii)} &~& \sigma_{K',L'}\cap \overline{\sigma_{K,L}}\ne \varnothing\\
\mbox{(iii)} &~& K\subseteq K' \mbox{ and } L\subseteq L'.\nonumber
\end{eqnarray}
To show this equivalence, we first observe that the closure $\overline{\sigma_{K,L}}$ in $\Omega$ is described 
by~\eqref{eq:omega} together with the system 
obtained from~\eqref{systemsigma} 
by replacing the strict inequalities by lax inequalities.
This description shows the implication (iii) $\Rightarrow$ (i). 
The implication (i) $\Rightarrow$ (ii) is trivial.
We show now (ii) $\Rightarrow$ (iii) by contraposition.  
Suppose $K\not \subseteq K'$ and let $a\in K\setminus K'$. 
As $a\in K$,
\[\sigma_{K,L}\subseteq \{u\mid\alpha(u)=|u_a|\},\]
which implies  
\[\overline{\sigma_{K,L}}\subseteq \{u\mid\alpha(u)=|u_a|\}.\]
Moreover, as $a\not \in K'$, 
\[\sigma_{K',L'}\subseteq \{u\mid\alpha(u)>|u_a|\}.\]
Thus $\overline{\sigma_{K,L}}\cap \sigma_{K',L'} =\varnothing$.
The case $L\nsubseteq L'$ is shown in a similar way. 
So we have proved the equivalence of the three statements. 

To prove part~(3) it is enough to show that 
if $\sigma_{K',L'}\subseteq \overline{\sigma_{K,L}}$, 
every point $u\in \sigma_{K',L'}$ 
can be obtained as a limit of a 
sequence $\{u_n\}$ of points of $\sigma_{K,L}$ with the same 
value under~$\alpha$. 
By the above characterization, 
$\sigma_{K',L'}\subseteq\overline{\sigma_{K,L}}$ 
implies 
$K\subseteq K'$ and $L\subseteq L'$. This allows one to obtain the desired 
sequence to approach any point $u$ in $\sigma_{K',L'}$ 
by slightly varying only the components $u_t$ with 
$t\in (K'\setminus K)\cup (L'\setminus L)$; for example, we may take 
$(u_n)_t:=(1-1/n)u_t$, which is in $\sigma_{K,L}$ as it 
satisfies~\eqref{eq:omega} and~\eqref{systemsigma}. 

We finally show Whitney's condition b, thus completing the proof that $\mcW$ 
is a Whitney stratification. 
The tangent space 
$\Tg_x\sigma_{K,L}$ at a point $u\in \sigma_{K,L}$ is the linear subspace 
given by
\begin{equation}\label{eq:tanSpace}
\begin{cases}
 \sgn(u_k)U_k=\sgn(u_{k'})U_{k'}&(k,k'\in K)\\
  U_\ell=U_{\ell'}&(\ell,\ell'\in L)\\
  \sgn(u_k)U_k=-U_\ell&(k\in K,\,\ell\in L)
\end{cases}
\end{equation}
where $\sgn:\bbR\to\{-1,0,1\}$ is the sign function. 
Now assume $\sigma_{K',L'}\cap\overline{\sigma_{K,L}}\ne\varnothing$ which, 
by~\eqref{eq:3equiv}, 
means that $K\subseteq K'$ and $L\subseteq L'$. Consider sequences of 
points $\{u_n\}$ and $\{u'_n\}$ in $\sigma_{K,L}$ and 
$\sigma_{K',L'}$, respectively, such that they both converge 
to $u\in \sigma_{K',L'}\cap\overline{\sigma_{K,L}}$. 
By the definition of convergence we have that, for all $n$ large enough and 
$k\in K'$,
\[
  \sgn(u_k)=\sgn\left((u_n)_k\right)=\sgn\left((u'_n)_k\right)
\]
as $u_k\neq 0$ for all $k\in K'$. 
This implies that for all $n$ large enough, the line $\overline{u_nu'_n}$ through $u_n$ and 
$u'_n$ lies inside $\Tg_{u_n}\sigma_{K,L}$ as, 
by the equations~\eqref{eq:tanSpace} and the inclusions $K\subseteq K'$ and $L\subseteq L'$, 
both $u_n$ and $u'_n$ lie in $\Tg_{u_n}\sigma_{K,L}\subseteq \Tg_{u'_n}\sigma_{K',L'}$. 
As this inclusion is preserved in the limit, we see that Whitney's condition b holds.
\end{proof}


\section{Topology}

In this section, we introduce two tools to construct isomorphisms of homology groups: 
an Explicit Homological Nerve Theorem for \v{C}ech complexes and a Homological 
Inclusion-Exclusion Transfer. 
These tools will combine topological information of basic semialgebraic sets to obtain 
such information for general semialgebraic sets.

\subsection{Explicit Homological Nerve Theorem}

Recall (from~\S\ref{se:HEC}) the definition, for a finite set of 
points $\mcX\subseteq \bbR^m$ and $\varepsilon>0$, of the 
\textit{\v{C}ech complex} of $\mcX$ of radius $\varepsilon$. 
By the Nerve Theorem~\cite[Corollary~4G.3]{hatcher}, 
the \v{C}ech complex $\cech{\varepsilon}{\mcX}$ 
is homotopically equivalent to  
the open $\varepsilon$-neighborhood $\mcU(\mcX,\varepsilon)$ around $\mcX$ 
defined in~\S\ref{se:HEC}.  
In particular, $\mcU(\mcX,\varepsilon)$ and $\cech{\varepsilon}{\mcX}$ 
have the same homology. 

We next exhibit a map that realizes this isomorphism in homology.

Consider the {\em free simplex} with vertex set $\mcX$, which is defined as 
the set 
\[
   \Delta^{\mcX}:=\left\{\sum_{x\in \mcX}t_x[x]\,\bigg|\,\text{for all }
   x\in\mcX,\,t_x\geq 0,\,\sum_{x\in\mcX}t_x=1\right\}
   \subseteq\bbR^\mcX
\]
formed by the formal convex combinations of the points of $\mcX$. 
Here we use the 
notation $[x]$ to distinguish the vertex $[x]$ in $\Delta^{\mcX}$ from the 
point $x\in\mcX\subseteq\bbR^m$. For $\sigma\in\cech{\varepsilon}{\mcX}$, the free 
simplex $\Delta^\sigma$ lies inside $\Delta^{\mcX}$ as a face and this correspondence 
is compatible with intersections in the sense that 
$\Delta^{\sigma\cap\sigma'}=\Delta^\sigma\cap\Delta^{\sigma'}$. This implies that 
by taking the union of all these faces, we get the following {\em realization} 
\[
   \left[\cech{\varepsilon}{\mcX}\right]:=\bigcup\left\{\Delta^\sigma\,\bigg|\,\sigma\in 
   \cech{\varepsilon}{\mcX}\right\} 
\]
of the abstract simplicial complex $\cech{\varepsilon}{\mcX}$ inside $\Delta^{\mcX}$.
In fact, this is 
the simplest geometric realization of the given abstract simplicial complex. 

Consider the affine map $\pi\colon\bbR^{\mcX}\to\bbR^m$ that sends the vertex~$[x]$ 
to the corresponding point~$x$. 
In other words,
$$
 \pi\left(\sum_{x\in \mcX}t_x[x]\right) = \sum_{x\in \mcX}t_xx .
$$
Clearly, $\pi$ maps the free simplex $\Delta^{\mcX}$ onto the the convex hull $\conv(\mcX)$ 
of $\mcX$ in~$\mathbb{R}^m$. 
The next lemma implies that $\pi$ maps the realization $[\cech{\varepsilon}{\mcX}]$ 
to  $\mcU(\mcX,\varepsilon)$.

\begin{lem}\label{lemnerve}
Let $\varepsilon>0$ and $\mcX\subseteq \bbR^m$  be a finite family of points. 
If $\bigcap_{x\in\mcX}B(x,\varepsilon)\neq \varnothing$, then 
$\conv(\mcX)\subseteq \mcU(\mcX,\varepsilon)$.
\end{lem}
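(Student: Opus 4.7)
The plan is to exploit a common center $c$ in the non-empty intersection together with a standard centroid (parallel-axis) identity.

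First, I would pick any $c\in\bigcap_{x\in\mcX}B(x,\varepsilon)$, so that $\|c-x\|<\varepsilon$ for every $x\in\mcX$. Let $y=\sum_{x\in\mcX}t_x x$ be an arbitrary point of $\conv(\mcX)$, with $t_x\geq 0$ and $\sum_x t_x=1$. The goal is to produce some $x^\star\in\mcX$ with $\|y-x^\star\|<\varepsilon$, which is exactly the condition for $y\in\mcU(\mcX,\varepsilon)$.

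The key step is the identity
\[
  \sum_{x\in\mcX} t_x\|c-x\|^2 \;=\; \|c-y\|^2 \;+\; \sum_{x\in\mcX} t_x\|y-x\|^2,
\]
which follows from expanding $\|c-x\|^2=\|(c-y)+(y-x)\|^2$, summing against the weights $t_x$, and using $\sum_x t_x(y-x)=0$ because $y$ is the weighted mean of the $x$'s. Dropping the non-negative term $\|c-y\|^2$ and using $\|c-x\|<\varepsilon$ yields
\[
  \sum_{x\in\mcX} t_x\|y-x\|^2 \;\leq\; \sum_{x\in\mcX} t_x\|c-x\|^2 \;<\; \varepsilon^2.
\]

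Since the $t_x$ form a probability distribution, the minimum cannot exceed the weighted average, so there exists $x^\star\in\mcX$ with $\|y-x^\star\|^2<\varepsilon^2$. Hence $y\in B(x^\star,\varepsilon)\subseteq\mcU(\mcX,\varepsilon)$, as required. There is no real obstacle here; the only thing one has to be careful about is to remember that the hypothesis supplies a \emph{common} center for all the balls, which is exactly what makes the centroid identity collapse to a uniform bound.
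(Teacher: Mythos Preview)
Your proof is correct and considerably shorter than the paper's. The paper proceeds geometrically: it reduces via Carath\'eodory to the case where $\conv(\mcX)$ is a simplex, then chooses for each face $\sigma\subseteq\mcX$ a point $p_\sigma$ in the relative interior of $\conv(\sigma)$ lying in $\bigcap_{x\in\sigma}B(x,\varepsilon)$, forms the barycentric subdivision of $\conv(\mcX)$ with these $p_\sigma$ as barycenters, and finally argues that each maximal simplex of this subdivision sits entirely in a single ball $B(x_1,\varepsilon)$.

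Your route is purely algebraic: the K\"onig--Huygens (parallel-axis) identity
\[
\sum_{x}t_x\|c-x\|^2=\|c-y\|^2+\sum_{x}t_x\|y-x\|^2
\]
immediately bounds the weighted average of the $\|y-x\|^2$ by $\varepsilon^2$, and the pigeonhole step finishes. This avoids Carath\'eodory, the construction of the $p_\sigma$, and the subdivision altogether. What the paper's argument buys is a slightly more explicit picture: it tells you, via the subdivision, \emph{which} vertex's ball covers a given region of $\conv(\mcX)$. Your argument is nonconstructive in that respect, but for the purposes of the lemma (and its only use, in Theorem~\ref{nervetheorem}) this makes no difference.
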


\begin{proof}
Without loss of generality, by Carathéodory's Theorem~\cite[Proposition~1.15]{ziegler12}, 
we can assume that $\conv(\mcX)$ is a simplex. 
Suppose $\bigcap_{x\in\mcX}B(x,\varepsilon)\neq \varnothing$. 
For a nonempty $\sigma \subseteq \mcX$, take  
$p'\in \bigcap_{x\in \sigma}B(x,\varepsilon)$, and 
let $p_\sigma$ be the closest point to $p'$ in $\conv(\sigma)$.
Then $p_\sigma\in\bigcap_{x\in\sigma}B(x,\varepsilon)$.
By perturbing, we can assume that $p_\sigma$ lies in the relative interior of $\conv(\sigma)$.

We now consider the barycentric subdivision of $\conv(\mcX)$ with respect to the family of points $\{p_\sigma\mid \sigma\subseteq \mcX\}$, which is a barycentric subdivision where we take $p_\sigma$ instead of taking the centroid in the relative interior of each face $\sigma\subseteq \mcX$.
It is sufficient to show that 
$\conv(\Delta) \subseteq\mcU(\mcX,\varepsilon)$
for every maximal simplex of this subdivision.
Every such simplex $\Delta$ has the form 
$\conv(p_{\{x_1\}},p_{\{x_1,x_2\}},\ldots,p_\mcX)$, where $x_i\in\mcX$, 
so we have $p_{\{x_1,\ldots,x_a\}}\in \bigcap_{i=1}^aB(x_i,\varepsilon)\subseteq B(x_1,\varepsilon)$ for each of each of its vertices $p_{\{x_1,\ldots,x_a\}}$.
Therefore,  $\Delta\subseteq B(x_1,\varepsilon)\subseteq \mcU(\mcX,\varepsilon)$ by convexity.
\end{proof}

We can now state the Explicit Homological Nerve Theorem for \v{C}ech complexes.

\begin{theo}[Explicit Homological Nerve Theorem]\label{nervetheorem}
The restriction $\tilde{\pi}\colon [\cech{\varepsilon}{\mcX}] \rightarrow \mcU(\mcX,\varepsilon)$
of the affine map $\pi$ induces an isomorphism in homology: 
$$
 \tilde{\pi}_* \colon H_*\left(\left[\cech{\varepsilon}{\mcX}\right]\right)\to H_*(\mcU(\mcX,\varepsilon)) .
$$ 
\end{theo}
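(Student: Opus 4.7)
The plan is to construct an explicit homotopy inverse $\psi\colon\mcU(\mcX,\varepsilon)\to[\cech{\varepsilon}{\mcX}]$ via a partition of unity and to use the Acyclic Carrier Theorem to show that $\tilde{\pi}_\ast$ and $\psi_\ast$ are mutually inverse on homology.

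Well-definedness of $\tilde{\pi}$ follows directly from Lemma~\ref{lemnerve}: for each simplex $\sigma=\{x_0,\ldots,x_k\}$ of $\cech{\varepsilon}{\mcX}$, $\tilde{\pi}(\Delta^\sigma)=\conv(\sigma)\subseteq\mcU(\sigma,\varepsilon)\subseteq\mcU(\mcX,\varepsilon)$. Moreover, the open cover $\mathcal{V}:=\{B(x,\varepsilon)\}_{x\in\mcX}$ of $\mcU(\mcX,\varepsilon)$ is a \emph{good cover} (every nonempty finite intersection is convex, hence contractible), and its nerve is precisely $\cech{\varepsilon}{\mcX}$. Fixing a continuous partition of unity $\{\phi_x\}_{x\in\mcX}$ subordinate to $\mathcal{V}$, the map
\[
  \psi(p):=\sum_{x\in\mcX}\phi_x(p)\,[x]
\]
is well-defined into $[\cech{\varepsilon}{\mcX}]$, since the support $\{x:\phi_x(p)>0\}$ consists of vertices of a Čech simplex (all the corresponding balls contain $p$). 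The classical partition-of-unity proof of the Nerve Theorem (Hatcher, Corollary 4G.3) shows that $\psi$ is a homotopy equivalence, so $\psi_\ast$ is an isomorphism on homology.

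The crux is to identify $\tilde{\pi}_\ast$ with $\psi_\ast^{-1}$. For this, consider the carrier $\Phi(\sigma):=S_\ast(\conv(\sigma))$ from simplicial chains of $\cech{\varepsilon}{\mcX}$ to singular chains of $\mcU(\mcX,\varepsilon)$; it is acyclic (convex sets are contractible) and monotone ($\sigma\subseteq\tau\Rightarrow\conv(\sigma)\subseteq\conv(\tau)$), and $\tilde{\pi}_\#$ is carried by it by construction. By the Acyclic Carrier Theorem, all chain maps carried by $\Phi$ are chain-homotopic; so $\tilde{\pi}_\#$ is chain-homotopic to any particular chain-level inverse of $\psi_\#$ that is $\Phi$-carried (such an inverse can be obtained by suitably subdividing and projecting via the partition of unity). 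The equality $\tilde{\pi}_\ast=\psi_\ast^{-1}$ follows, establishing the claim.

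The main technical obstacle is exhibiting a chain-level inverse of $\psi_\#$ that is actually carried by $\Phi$: this requires a barycentric subdivision of $[\cech{\varepsilon}{\mcX}]$ adapted to the partition of unity, so that each small simplex maps into some $\conv(\sigma)$. A purely topological alternative --- showing $\tilde{\pi}\circ\psi\simeq\mathrm{id}$ via the straight-line homotopy $(1-t)p+t\,\tilde{\pi}(\psi(p))$ --- fails, because $\mcU(\mcX,\varepsilon)$ is in general non-convex and this homotopy may leave the neighborhood. The acyclic carrier route avoids this by working entirely at the chain level, needing only acyclicity of each $\conv(\sigma)$, which is immediate.
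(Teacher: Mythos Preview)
Your proposal has a genuine gap, and it stems from an incorrect dismissal of the straight-line homotopy.

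You write that the homotopy $(1-t)p + t\,\tilde{\pi}(\psi(p))$ ``fails, because $\mcU(\mcX,\varepsilon)$ is in general non-convex and this homotopy may leave the neighborhood.'' This is false. For a fixed $p$, let $\sigma_p:=\{x\in\mcX : \phi_x(p)>0\}$. Then $p\in\bigcap_{x\in\sigma_p}B(x,\varepsilon)$, so by Lemma~\ref{lemnerve} we have $\conv(\sigma_p)\subseteq\mcU(\sigma_p,\varepsilon)$; in particular $\tilde{\pi}(\psi(p))=\sum_{x\in\sigma_p}\phi_x(p)\,x\in\conv(\sigma_p)$ lies in some ball $B(\tilde x,\varepsilon)$ with $\tilde x\in\sigma_p$. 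But $\tilde x\in\sigma_p$ also forces $p\in B(\tilde x,\varepsilon)$. Hence both endpoints of the segment lie in the \emph{convex} set $B(\tilde x,\varepsilon)\subseteq\mcU(\mcX,\varepsilon)$, and the whole segment stays there. So $\tilde{\pi}\circ\psi\simeq\mathrm{id}_{\mcU(\mcX,\varepsilon)}$ does hold.

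This is exactly the route the paper takes: from $\tilde{\pi}\circ\psi\simeq\mathrm{id}$ one gets that $\tilde{\pi}_*$ is surjective in every degree; then, since the Nerve Theorem guarantees the source and target homology groups are abstractly isomorphic finitely generated abelian groups, surjectivity already forces $\tilde{\pi}_*$ to be an isomorphism.

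By contrast, your Acyclic Carrier argument is left incomplete: you yourself flag as ``the main technical obstacle'' the need to exhibit a $\Phi$-carried chain-level inverse of $\psi_\#$, and you do not construct it. That is the actual gap in your write-up. Once you see that the straight-line homotopy is valid, the acyclic-carrier machinery is unnecessary.
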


\begin{proof}
Let $\sigma\in\cech{\varepsilon}{\mcX}$. Then $\bigcap_{x\in\mcX}B(x,\varepsilon)\neq \varnothing$ and so, 
by Lemma ~\ref{lemnerve} applied to $\sigma$, $\conv(\sigma)\subseteq \mcU(\sigma,\varepsilon)\subseteq \mcU(\mcX,\varepsilon)$. 
As $\left[\cech{\varepsilon}{\mcX}\right]=\bigcup_{\sigma\in\cech{\varepsilon}{\mcX}}\Delta^\sigma$ and $\conv(\sigma)=\pi\left(\Delta^\sigma\right)$, it follows that 
$\pi\left(\left[\cech{\varepsilon}{\mcX}\right]\right)\subseteq \mcU(\mcX,\varepsilon)$. 
Thus $\pi$ is a continuous map $\left[\cech{\varepsilon}{\mcX}\right]\rightarrow \mcU(\mcX,\varepsilon)$. 
It only remains to prove that it induces an isomorphism in homology.

Let $\{\phi_x\}_{x\in \mcX}$ be a partition of unity in $\mcU(\mcX,\varepsilon)$ 
subordinate to $\{B(x,\varepsilon)\}_{x\in\mcX}$. 
That is, the $\phi_x$ are continuous maps $\phi_x:\mcU(\mcX,\varepsilon)\rightarrow [0,1]$ such that $\phi_x$ is zero outside $B(x,\varepsilon)$ 
and $\sum_{x\in\mcX}\phi_x=1$. (For example, we could take 
$\phi_x := \frac{\rho_x}{\sum_{x\in\mcX}\rho_x}$ 
with $\rho_x(p) := \max\{\varepsilon-\|p-x\|,0\}$.)
We define the continuous map
$$
  \varphi:\mcU(\mcX,\varepsilon)\rightarrow \left[\cech{\varepsilon}{\mcX}\right],\ 
   p\mapsto \sum_{x\in \mcX}\phi_x(p)[x]
$$
and will show that $\pi\circ \varphi$ is homotopic to the identity $\mathrm{id}_{\mcU(\mcX,\varepsilon)}$. 
To do so, consider the linear homotopy 
$$
  t\mapsto t(\pi\circ \varphi)+(1-t)\mathrm{id}_{\mcU(\mcX,\varepsilon)}
$$
between $\pi\circ \varphi$ and $\mathrm{id}_{\mcU(\mcX,\varepsilon)}$. 
To show that this linear homotopy restricts to a homotopy of functions $\mcU(\mcX,\varepsilon)\rightarrow\mcU(\mcX,\varepsilon)$, 
we only have to check that for every $p\in \mcU(x,\varepsilon)$, the segment $\left[\pi(\varphi(p)),p\right]$ is contained in $\mcU(x,\varepsilon)$. 

In order to check this, put  
$\mcX := \{x\in\mcX\,|\,\phi_x(p)\neq 0\}$ 
and note that 
$$
\pi(\varphi(p))=\sum_{x\in \mcX}\phi_x(p)x \in\conv(\mcX) .
$$ 
We have  
$p\in \bigcap_{x\in\mcX} B(x,\varepsilon)$ 
since $\phi_x(p)\neq 0$ implies $d(x,p) <\varepsilon$.
By Lemma ~\ref{lemnerve} we have 
$\conv(\mcX) \subseteq \mcU(\mcX,\varepsilon)$. 
So $\pi(\varphi(p)) \in \mcU(\mcX,\varepsilon)$. 
Hence there exists $\tilde{x}\in\mcX$ such that 
$\pi(\varphi(p)) \in B(\tilde{x},\varepsilon)$. 
Since also $p\in B(\tilde{x},\varepsilon)$, we have 
$[p,\pi(\varphi(p))]\subseteq B(\tilde{x},\varepsilon)\subseteq \mcU(\mcX,\varepsilon)$.

So we have shown that $\pi\circ \varphi$ is homotopic to the identity.
Therefore, 
$\pi_*:H_\ell\left(\left[\cech{\varepsilon}{\mcX}\right]\right)\rightarrow H_\ell\left(\mcU(\mcX,\varepsilon)\right)$ is an epimorphism for every $\ell$. 
Now, by the Nerve Theorem \cite[Corollary ~4G.3]{hatcher}, $H_\ell\left(\left[\cech{\varepsilon}{\mcX}\right]\right)$ and 
$H_\ell(\mcU(\mcX,\varepsilon))$ are isomorphic finitely generated abelian groups. 
We conclude that $\pi$ induces an isomorphism in homology, because 
a surjective homomorphism between isomorphic finitely generated abelian groups 
is an isomorphism \cite[Exercises~4.2(10)]{robinson_groups}.
\end{proof}

\begin{remark}
Theorem~\ref{inclusionexclusion} below gives an alternative way of proving 
Theorem~\ref{nervetheorem} without using the Nerve Theorem.
\end{remark}

\subsection{Homological Inclusion-Exclusion Transfer}

The title of the subsection refers to the idea of inferring information on the homology of a space~$X$ 
(or a map between spaces) from the homology of intersections of subspaces,  
akin to the combinatorial inclusion-exclusion principle. 

Let $X$ be a topological space and $C_\bullet (X)$ be its singular
chain complex.  For $A,B\subseteq X$ we denote by $C_\bullet(A+B)$ the
subcomplex of $C_\bullet(A\cup B)$ generated by the singular simplices
that either lie inside $A$ or inside $B$.  We will say that a finite
family $\{X_i\}_{i\in I}$ of subsets of~$X$ satisfies the
\textit{Mayer-Vietoris hypothesis} when, for every non-empty
$J\subseteq I$ and $k\in I\setminus J$, the inclusion of chain
complexes
\[
   C_\bullet\left(X_k + \bigcup_{j\in J}X_j\right)\hookrightarrow C_\bullet
   \left(X_k \cup \bigcup_{j\in J}X_j\right),
\]
induces an isomorphism in homology. We will say that it satisfies the
\textit{inductive Mayer-Vietoris hypothesis} when, for all finite families 
$\{F_\ell\}_{\ell\in L}$ of subsets of $I$,  
the family of intersections 
$\{\cap_{h\in F_\ell}X_h\}_{\ell\in L}$  
satisfies the Mayer-Vietoris hypothesis.

The reason to introduce this last notion is that it gives a common name to the three 
main situations that we will encounter and in which this condition holds:
\begin{enumerate}[1)]
\item 
The family $\{X_i\}_{i\in I}$ is a family of open subsets of
$\bigcup_{i\in I}X_i$. The inductive Mayer-Vietoris hypothesis 
holds due to~\cite[Proposition~2.21]{hatcher}.
\item 
The family $\{X_i\}_{i\in I}$ is a family of closed subcomplexes of
a CW-complex. The inductive Mayer-Vietoris
hypothesis holds due to \cite[Cor.~8.44]{rotman2}. 
\item 
The family $\{X_i\}_{i\in I}$ is a family of closed semialgebraic sets
in $\mathbb{R}^N$. The inductive Mayer-Vietoris hypothesis holds
due to the Semialgebraic Triangulation Theorem~\cite[Theorem~9.2.1]{BCR:98} 
combined with situation 2) above.
\end{enumerate}

In all these three situations, the inductive Mayer-Vietoris hypothesis 
will allow us to use the Mayer-Vietoris exact sequence in inductive arguments, 
such as the one for the following theorem.

\begin{theo}[Homological Inclusion-Exclusion Transfer]\label{inclusionexclusion}
Let $X$ and $Y$ be topological spaces and $\{X_i\}_{i\in I}$, $\{Y_i\}_{i\in I}$ be finite families of subsets 
of $X$ and $Y$, respectively, satisfying the inductive Mayer-Vietoris hypothesis. 
We assume that  $X=\bigcup_{i\in I} X_i$ and $Y=\bigcup_{i\in I} Y_i$. 
Moreover, 
let $f\colon X\rightarrow Y$ be a continuous map such that $f(X_i)\subseteq Y_i$ for all $i\in I$. 
Let $k$ be an integer such that for all nonempty $J\subseteq I$ with $|J|\leq k$, the morphism
\[
  H_\ell(f) \colon H_\ell\left(\cap_{j\in J}X_j\right)\rightarrow H_\ell\left(\cap_{j\in J}Y_j\right)
\]
is an isomorphism for $\ell<k$ and an epimorphism for $\ell=k$. Then
\[
  H_\ell(f) \colon H_\ell(X)\rightarrow H_\ell(Y)
\]
is an isomorphism for $\ell<k$ and an epimorphism for $\ell=k$.
\end{theo}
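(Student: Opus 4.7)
The plan is to proceed by induction on $|I|$. The base case $|I|=1$ is immediate: the single nonempty $J=I$ has $|J|=1\leq k$, so the hypothesis $H_\ell(f)\colon H_\ell(X_i)\to H_\ell(Y_i)$ already matches the desired conclusion. For the inductive step, pick some $i_0\in I$, let $I':=I\setminus\{i_0\}$, and set $A:=\bigcup_{i\in I'}X_i$, $B:=X_{i_0}$ (with analogous $A',B'\subseteq Y$), so that $X=A\cup B$, $Y=A'\cup B'$, $A\cap B=\bigcup_{i\in I'}(X_i\cap X_{i_0})$, and $A'\cap B'=\bigcup_{i\in I'}(Y_i\cap Y_{i_0})$. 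The inductive Mayer-Vietoris hypothesis applied with $J=I'$ and $k=i_0$ yields honest Mayer-Vietoris exact sequences for $X$ and $Y$, and $f$ induces a commutative ladder between them.

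To invoke the inductive hypothesis on the smaller families, I would verify that both $\{X_i\}_{i\in I'}$ (covering $A$) and $\{X_i\cap X_{i_0}\}_{i\in I'}$ (covering $A\cap B$) again satisfy the inductive Mayer-Vietoris hypothesis: the first is automatic since it is a subfamily, and the second follows from the given inductive MV hypothesis applied to families of subsets of the form $G_m\cup\{i_0\}$, whose intersections equal $\bigcap_{h\in G_m}(X_h\cap X_{i_0})$. Moreover, the theorem's hypothesis transfers as follows. For the family on $A$, the same bound $k$ works (the intersections are unchanged). For the family on $A\cap B$, the bound drops by one: for $J\subseteq I'$ with $|J|\leq k-1$ we have $\bigcap_{j\in J}(X_j\cap X_{i_0})=\bigcap_{j\in J\cup\{i_0\}}X_j$, which is an intersection of size at most $k$ in the original family, so the original hypothesis supplies iso for $\ell<k$ and epi for $\ell=k$ — in particular iso for $\ell<k-1$ and epi for $\ell=k-1$.

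The induction therefore yields that $H_\ell(f)\colon H_\ell(A)\to H_\ell(A')$ is an isomorphism for $\ell<k$ and an epimorphism for $\ell=k$; that $H_\ell(f)\colon H_\ell(A\cap B)\to H_\ell(A'\cap B')$ is an isomorphism for $\ell<k-1$ and an epimorphism for $\ell=k-1$; and that $H_\ell(f)\colon H_\ell(B)\to H_\ell(B')$ satisfies the same bounds as on $A$ by the hypothesis applied to the singleton $J=\{i_0\}$. Inserting these into the Mayer-Vietoris ladder, the five lemma gives the isomorphism $H_\ell(X)\to H_\ell(Y)$ in the range $\ell<k$ (all four neighboring vertical maps are isomorphisms, after noting that the degree-$(k-1)$ term for $A\cap B$ is only epi but is preceded by an iso, which still suffices). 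At the critical degree $\ell=k$, I would run the standard four-lemma diagram chase: given $y\in H_k(Y)$, lift $\partial' y\in H_{k-1}(A'\cap B')$ to some $z\in H_{k-1}(A\cap B)$ using the epimorphism there; check that $z$ maps to zero in $H_{k-1}(A)\oplus H_{k-1}(B)$ using the isomorphism at that spot; write $z=\partial x$ by exactness; and correct $y-f_*(x)$ by a class pulled back via the epimorphism $H_k(A)\oplus H_k(B)\to H_k(A')\oplus H_k(B')$.

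The main obstacle — and the reason this bookkeeping works — is the asymmetric drop of one degree in the inductive statement on $A\cap B$ relative to $A$. One has to check that this asymmetry is exactly compatible with the five-lemma in the iso range and with the four-lemma at the epi degree; in particular, one must verify that all the sets of inputs required by the chase at degree $k$ are supplied by the weaker inductive conclusion on $A\cap B$. This is what forces the conclusion on $X\to Y$ to stop precisely at epi in degree $k$, matching the hypothesis.
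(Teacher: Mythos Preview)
Your proposal is correct and follows essentially the same route as the paper: induction on $|I|$, peeling off one index $i_0$, applying the induction hypothesis to the families $\{X_i\}_{i\in I'}$ (with bound $k$) and $\{X_i\cap X_{i_0}\}_{i\in I'}$ (with bound $k-1$), and then reading off the conclusion from the Mayer--Vietoris ladder via the Five Lemma in the range $\ell\le k-1$ and the Four Lemma at $\ell=k$. The paper simply cites the Five and Four Lemmas from Rotman rather than spelling out the chase, and it flags (as you do) that the \emph{inductive} form of the Mayer--Vietoris hypothesis is what makes the induction go through on the intersected family.
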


The following is an immediate consequence of Theorem~\ref{inclusionexclusion}. 

\begin{cor}\label{cor:IE}
Under the assumptions of Theorem~\ref{inclusionexclusion} if, 
for all nonempty $J\subseteq I$, 
$f\colon\cap_{j\in J}X_j\rightarrow \cap_{j\in J}Y_j$ 
induces an isomorphism in homology, then 
$f\colon X\rightarrow Y$ 
induces an isomorphism in homology.\eproof
\end{cor}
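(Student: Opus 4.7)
The plan is to derive the corollary as an immediate consequence of Theorem~\ref{inclusionexclusion} by instantiating the parameter $k$ appropriately. The hypothesis of the corollary states that for every nonempty $J\subseteq I$, the map $f\colon \cap_{j\in J}X_j\to \cap_{j\in J}Y_j$ induces isomorphisms $H_\ell(f)$ in \emph{every} degree $\ell$. In particular, for any fixed nonnegative integer $k$, these morphisms are isomorphisms for $\ell<k$ and (trivially, since they are isomorphisms) epimorphisms for $\ell=k$. This is exactly the hypothesis needed to invoke Theorem~\ref{inclusionexclusion}, whose conclusion then asserts that $H_\ell(f)\colon H_\ell(X)\to H_\ell(Y)$ is an isomorphism for $\ell<k$ and an epimorphism for $\ell=k$.

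To conclude, I would simply observe that $k$ was arbitrary: for any fixed degree $\ell_0\geq 0$, applying the statement above with $k=\ell_0+1$ yields that $H_{\ell_0}(f)$ is an isomorphism. Since $\ell_0$ is arbitrary and singular homology vanishes in negative degrees, $f$ induces an isomorphism on homology in every degree, which is the content of the corollary. There is no essential obstacle here; the only point to notice is that the quantifier over $k$ in Theorem~\ref{inclusionexclusion} is what allows one to upgrade the hypothesis-level isomorphism (at intersections) to a conclusion-level isomorphism (on the full unions) degree by degree, so no further use of Mayer--Vietoris or spectral-sequence machinery is needed beyond what is already packaged into Theorem~\ref{inclusionexclusion}.
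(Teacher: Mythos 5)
Your proof is correct and is exactly the argument the paper intends: the corollary is stated as an immediate consequence of Theorem~\ref{inclusionexclusion}, obtained by applying the theorem for each value of $k$ (the all-degrees hypothesis on intersections trivially implies the isomorphism/epimorphism hypothesis for every $k$ and every $|J|\le k$) and letting $k$ range over all nonnegative integers.
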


\begin{proof}[Proof of Theorem~\ref{inclusionexclusion}]
The proof is by induction on the size of $I$, for arbitrary $k$. 
The assertion is trivial when $I$ is a singleton.

Let $I=I'\cup\{i_0\}$ with $i_0\not\in I'$. 
By assumption, we have $f\left(\cup_{i\in I'}X_i\right)\subseteq  \cup_{i\in I'}Y_i$, 
$f(X_{i_0})\subseteq Y_{i_0}$ and $f\left(\cup_{i\in I'}(X_{i_0}\cap
  X_i)\right)\subseteq \cup_{i\in I'}(Y_{i_0} \cap Y_i)$. 
By induction hypothesis, 
the maps 
\[
 \beta_\ell^1:H_\ell(X_{i_0})\rightarrow H_\ell(Y_{i_0}) \text{ and }
  \beta_\ell^2:H_\ell\left(\cup_{i\in I'}X_i\right)\rightarrow H_\ell\left(\cup_{i\in I'}Y_i\right)
\]
induced by $f$ are isomorphisms for $\ell<k$ and epimorphisms for $\ell=k$, and the maps 
\[
  \alpha_\ell:H_\ell\left(\cup_{i\in I'}(X_{i_0}\cap X_i)\right)\subseteq 
  H_\ell\left(\cup_{i\in I'}(Y_{i_0}\cap Y_i)\right)
\]
are isomorphisms for $\ell<k-1$ and epimorphisms for $\ell=k-1$. 
Here we view $\cap_{j\in J} (X_{i_0}\cap X_j) = X_{i_0} \cap (\cap_{j\in J} X_j)$ 
as an intersection of $|J|+1$ subsets, for $J\subseteq I'$ with $|J| \le k-1$.
(Note that the inductive Mayer-Vietoris hypothesis is 
necessary to apply the induction step, as it guarantees that the families 
$\{X_{i_0}\cap X_j\}_{j\in J}$ and $\{Y_{i_0}\cap Y_j\}_{j\in J}$ satisfy the induction hypothesis;  
this is not necessarily the case with the Mayer-Vietoris hypothesis.)

The map of pairs $f:\left(\cup_{i\in I'} X_i, X_{i_0}\right)\rightarrow \left(\cup_{i\in I'}Y_i, Y_{i_0}\right)$, 
and the fact that these pairs satisfy the Mayer-Vietoris hypothesis,  
induce the commutative diagram of Mayer-Vietoris sequences shown in Figure~\ref{diagraminclusionexclusion}, 
where $\alpha_\ell$, $\beta_\ell$ and $\gamma_\ell$ are the maps in homology induced by $f$.

\begin{figure}[h]
\begin{tikzcd}
H_\ell\left(\cup_{i\in I'}\left(X_{i_0}\cap X_i\right)\right)\arrow[d]\arrow[r,"\alpha_\ell"] 
& H_\ell\left(\cup_{i\in I'}\left(Y_{i_0}\cap Y_i\right)\right)\arrow[d]\\
H_\ell(X_{i_0}) \oplus H_\ell\left(\cup_{i\in I'} X_i\right) \arrow[d]\arrow[r,"\beta_\ell"] 
& H_\ell(Y_{i_0}) \oplus H_\ell\left(\cup_{i\in I'} Y_i\right)\arrow[d]\\
H_\ell\left(X_{i_0} \cup (\cup_{i\in I'}X_i)\right) \arrow[r]\arrow[d,"\gamma_\ell"] 
& H_\ell\left(Y_{i_0} \cup (\cup_{i\in I'}Y_i)\right)\arrow[d]\\
H_{\ell-1}\left(\cup_{i\in I'}\left(X_{i_0}\cap X_i\right)\right)\arrow[d]\arrow[r,"\alpha_{\ell-1}"] 
& H_{\ell-1}\left(\cup_{i\in I'}\left(Y_{i_0}\cap Y_i\right)\right)\arrow[d]\\
H_{\ell-1}(X_{i_0}) \oplus H_{\ell-1}\left(\cup_{i\in I'} X_{i}\right)\arrow[r,"\beta_{\ell-1}"] 
& H_{\ell-1}(Y_{i_0}) \oplus H_{\ell-1}\left(\cup_{i\in I'} Y_{i}\right)
\end{tikzcd}
\centering
\caption{Natural maps of Mayer-Vietoris sequences in the proof of Theorem~\ref{inclusionexclusion}.}\label{diagraminclusionexclusion}
\end{figure}
%

In this figure, the induction hypothesis ensures that $\alpha_\ell$ is an 
isomorphism for $\ell< k-1$, an epimorphism for $\ell=k-1$, 
and that $\beta_\ell$ is an isomorphism for $\ell<k$ and an epimorphism for $\ell=k$. 
This gives us two cases to consider: $\ell\leq k-1$ and $\ell=k$.

If $\ell\leq k-1$, then $\alpha_{\ell-1}$, $\beta_{\ell-1}$ and $\beta_{\ell}$ are isomorphisms and 
$\alpha_{\ell}$ is an epimorphism. Therefore, by the Five Lemma~\cite[Proposition 2.72(iii)]{Rotman}, 
$\gamma_\ell$ is an isomorphism.

Otherwise, if $\ell=k$, then $\beta_{\ell}$ and $\alpha_{\ell-1}$ are epimorphisms, and 
$\beta_{\ell-1}$ is an isomorphism. Therefore, by the Four 
Lemma~\cite[Proposition 2.72(i)]{Rotman}, $\gamma_\ell$ is an epimorphism.

The statement now follows by induction.
\end{proof}


\begin{remark}
Theorem~\ref{inclusionexclusion} can be considered a homological version of the Vietoris-Begle Theorem~\cite[p.~344]{Spanier} for homology in terms of coverings. 
For example, one can see that for a locally trivial fibration $\pi\colon E\rightarrow B$ with $(k-1)$-connected fiber $F$, 
the homological inclusion-exclusion transfer implies the homological Vietoris-Begle Theorem 
since, for every trivializing open subset $U\subseteq B$,
$H_\ell(F\times U)\rightarrow H_\ell(U)$ is an isomorphism for $\ell<k$ and an epimorphism for $\ell=k$,
\end{remark}

\section{Algorithms}\label{sec:algorithms}

Our algorithm follows the steps described (with broad strokes) in 
Section~\ref{sec:overview}: 
\begin{description}
\item[(1)] We estimate the intersection condition $\kappabar(f)$. 
We do this in Algorithm~{\sc $\kappabar$-Estimate} in Subsection~\ref{sec:est-kappa}. 

\item[(2)] We construct clouds of points $\mcX^{\propto_j}_j$ approximating 
the atomic sets $S^{\propto_j}_j$, for $j\in\{1,\ldots,q\}$ and 
$\propto_j\in\{\le,=,\ge\}$, and satisfying 
that intersections of clouds approximate the corresponding intersections of 
sets. We use these clouds and our estimate on $\kappabar(f)$ 
to produce a simplicial complex $\fkC$ 
having the same homology as $\Ap(f,\Phi)$. This 
is Algorithm~{\sc Simplicial} in Subsection~\ref{sec:complexes}.

\item[(3)]
We computation the homology of $\fkC$. This is standard. But we recall 
the procedure and its complexity in Subsection~\ref{sec:groups}.
\end{description}
To do these computations, a sequence of grids on $\bbS^n$ is necessary. 
In this section we first describe the nature of these grids (and how to construct 
them) and then proceed to 
describe and analyse the complexity of the algorithms in the steps 
above. This complexity analysis is condition-based: the bounds are in 
terms of $\kappabar(f)$, in addition to the general parameters $n$, $q$ and~$D$.

\subsection{Grids}

The algorithm uses a sequence of grids on the sphere, both for estimating 
$\kappabar(f)$ and for sampling points on $\bbS^n$. 
These grids are simply constructed by 
projection onto the unit sphere of a uniform grid in the boundary of a cube. 
This sequence of grids has been used in~\cite{CS98,CKMW1,CKS16,BCL17} and 
its basic properties have been proved in these papers. 
We will therefore be concise in what follows.

For $\ell\in\bbN$, let $\mcG_{\ell}$ be the image on $\bbS^n$ under the 
projection $x\to \frac{x}{\|x\|}$ of the set of points $x\in\bbZ^{n+1}$ 
with $\|x\|_\infty=\lceil 2^\ell\sqrt{n}\rceil$. Further, let 
$r_\ell:=2^{-\ell}$. Then, 
\begin{equation}\label{eq:grid-size}
  |\mcG_\ell| = (n2^\ell)^{\Oh(n)}
\end{equation}
and
\begin{equation}\label{eq:grid-cover}
  \bbS^n\subseteq \bigcup_{x\in\mcG_\ell}B_{\bbS}(x,r_\ell) 
  \subseteq\bigcup_{x\in\mcG_\ell}B(x,r_\ell). 
\end{equation}
Note that the last implies that $d_H(\mcG_\ell,\bbS^n)\leq r_\ell$. We finally 
observe that, given $\ell$, the grid~$\mcG_\ell$ is easily computable. 

\subsection{Estimating the condition}\label{sec:est-kappa}

Recall the Definition~\ref{conddefi} of the 
real homogeneous condition number $\kappa(f)$ of $f\in\Hd[q]$.
We use the Lipschitz character of the inverse of $\kappa$ as a map on the sphere 
to estimate global bounds for $\kappa$ based on a finite number of 
point evaluations. 

\begin{lem}
Let $f\in\Hd[q]$, $\ell\in\bbN$, and
\[
  \kappa_\ell(f):=\max\left\{\kappa(f,x)\mid x\in \mcG_\ell\right\}.
\]
Then $\kappa_\ell(f) \le \kappa(f)$. 
Moreover, if $2D \kappa_\ell(f) r_\ell<1$, we have 
\[
  \kappa(f)\leq \frac{\kappa_\ell(f)}{1-2D \kappa_\ell(f) r_\ell} .
\]
\end{lem}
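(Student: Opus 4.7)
The first assertion $\kappa_\ell(f) \le \kappa(f)$ is immediate: $\mcG_\ell \subseteq \bbS^n$, so the maximum defining $\kappa_\ell(f)$ is over a subset of the points defining $\kappa(f)$.

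For the second assertion, the plan is the standard grid-certification argument based on the Lipschitz character of the reciprocal condition. Concretely, I would establish (or invoke from the earlier papers \cite{CKS16,BCL17}) that the map $x \mapsto 1/\kappa(f,x)$ is Lipschitz on $\bbS^n$ with constant at most $2D$ with respect to the metric used to define $r_\ell$ (coming from the decomposition $\kappa(f,x)^{-2} = \mu(f,x)^{-2} + \|f(x)\|^2/\|f\|^2$ together with the fact that the gradient of $\|f(x)\|/\|f\|$ on the sphere is controlled by $D$, and a Lipschitz bound on $1/\mu(f,\cdot)$). Given an arbitrary $x \in \bbS^n$, the covering property \eqref{eq:grid-cover} supplies a grid point $y \in \mcG_\ell$ with $d(x,y) \le r_\ell$, and Lipschitzness yields
\[
  \frac{1}{\kappa(f,x)} \;\ge\; \frac{1}{\kappa(f,y)} - 2D\, r_\ell \;\ge\; \frac{1}{\kappa_\ell(f)} - 2D\, r_\ell,
\]
where the last inequality uses that $\kappa_\ell(f) = \max_{y \in \mcG_\ell} \kappa(f,y)$, i.e.\ $1/\kappa_\ell(f) \le 1/\kappa(f,y)$ for every grid point $y$.

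Taking the infimum over $x \in \bbS^n$ on the left, and using the hypothesis $2D\kappa_\ell(f) r_\ell < 1$ (which guarantees the right-hand side is strictly positive), gives
\[
  \frac{1}{\kappa(f)} \;\ge\; \frac{1 - 2D\kappa_\ell(f) r_\ell}{\kappa_\ell(f)},
\]
and inverting yields the claimed bound.

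The only non-routine step is the Lipschitz estimate with constant $2D$. I would handle this by treating the two summands in $1/\kappa(f,x)^2$ separately: the Lipschitz bound on $x \mapsto \|f(x)\|/\|f\|$ with constant $D$ is essentially Bernstein's inequality for spherical polynomials, while the Lipschitz bound on $x \mapsto 1/\mu(f,x)$ with constant (at most) $D$ comes from the stability of the pseudoinverse under perturbations of $\diff_x f$, whose operator norm on the sphere is controlled by $D$ via the Weyl normalization. Combining these, the square root gives a Lipschitz constant of $2D$ (up to absorbing a constant), which is exactly what we need. If a sharper constant is available in the referenced literature, a direct citation to the corresponding lemma in \cite{BCL17} would suffice and make this step one line.
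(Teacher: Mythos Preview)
Your argument is correct and matches the paper's proof essentially line for line. The paper does exactly what you suggest in your final paragraph: it cites \cite[Proposition~4.7]{BCL17} for the fact that $x\mapsto\kappa(f,x)^{-1}$ is $D$-Lipschitz with respect to the Riemannian distance on $\bbS^n$ (hence $2D$-Lipschitz with respect to Euclidean distance, via $d_\bbS\le 2d$), and then runs the grid-certification inequality just as you wrote it. Your sketched derivation of the Lipschitz constant is therefore unnecessary here, though the intuition is sound.
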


\begin{proof}
The first claimed inequality is trivial. To prove the second we recall that, 
by~\cite[Proposition 4.7]{BCL17}, the map
$
\bbS^n\rightarrow [0,1],\ x\mapsto \kappa(f,x)^{-1}
$
is $D$-Lipschitz continuous with respect to the Riemannian distance on 
$\bbS^n$, and so $2D$-Lipschitz with respect to the 
Euclidean distance on $\bbS^n$. 
Let $x_*\in\bbS^n$ be such that $\kappa(f)=\kappa(f,x_*)$. 
By the inclusions~\eqref{eq:grid-cover}, 
there exists $x\in\mcG_\ell$ such that $d(x,x_*)<r_\ell$. Using the Lipschitz 
property for the pair $(x,x_*)$ it follows that
\[
  \frac{1}{\kappa_\ell(f)} - 2D r_\ell \le \frac{1}{\kappa(f,x)} - 2D r_\ell
  \leq \frac{1}{\kappa(f,x_*)}=\frac{1}{\kappa(f)} .
\]
The desired inequality follows.
\end{proof}

We immediately derive analogous bounds for the real homogeneous intersection 
condition number. 

\begin{cor}\label{cor:est-kappa}
Let $f\in\Hd[q]$, $\ell\in\bbN$, and
\[
  \kappabar_\ell(f):=\max\left\{\kappa_\ell(f^L)\mid L\subseteq\{1,\ldots,q\},\,|L|
  \leq n+1\right\}.
\]
Then $\kappabar_\ell(f) \le \kappabar(f)$.
Moreover, if $2D \kappabar_\ell(f) r_\ell<1$, we have 
\begin{equation}\tag*{\qed}
  \kappabar(f)\leq \frac{\kappabar_\ell(f)}{1-2D \kappabar_\ell(f) r_\ell}.
\end{equation}
\end{cor}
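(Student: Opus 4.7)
The plan is to reduce the corollary directly to the preceding lemma by applying it to each subtuple $f^L$ with $L\subseteq\{1,\ldots,q\}$ and $|L|\le n+1$, then taking maxima over~$L$.

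First I would prove the left inequality $\kappabar_\ell(f)\le\kappabar(f)$: for each such $L$ the lemma gives $\kappa_\ell(f^L)\le \kappa(f^L)$, and taking the maximum over $L$ on both sides yields the claim immediately from the definitions of $\kappabar_\ell$ and $\kappabar$.

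For the right inequality, fix $L\subseteq\{1,\ldots,q\}$ with $|L|\le n+1$. By definition $\kappa_\ell(f^L)\le\kappabar_\ell(f)$, so the assumption $2D\kappabar_\ell(f)r_\ell<1$ implies $2D\kappa_\ell(f^L)r_\ell<1$, allowing me to invoke the second part of the lemma to obtain
\[
  \kappa(f^L)\ \le\ \frac{\kappa_\ell(f^L)}{1-2D\kappa_\ell(f^L)r_\ell}.
\]
The key observation is that the map $t\mapsto t/(1-2Dr_\ell t)$ is monotonically increasing on $[0,(2Dr_\ell)^{-1})$, so replacing $\kappa_\ell(f^L)$ by the larger quantity $\kappabar_\ell(f)$ only enlarges the right-hand side. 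Taking the maximum over all admissible $L$ on the left then gives
\[
  \kappabar(f)\ \le\ \frac{\kappabar_\ell(f)}{1-2D\kappabar_\ell(f)r_\ell},
\]
which is the desired bound.

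There is no real obstacle here; the only thing to be careful about is to check the monotonicity of $t\mapsto t/(1-2Dr_\ell t)$ so that the uniform substitution $\kappa_\ell(f^L)\leadsto\kappabar_\ell(f)$ is valid, and to verify that the hypothesis $2D\kappabar_\ell(f)r_\ell<1$ propagates to every subtuple $f^L$, both of which are immediate.
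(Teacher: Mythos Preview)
Your proposal is correct and is exactly the argument the paper has in mind: the corollary is stated without proof (the \qed is folded into the displayed inequality), being described as an immediate consequence of the preceding lemma applied to each subtuple $f^L$ and then taking maxima. Your care in checking the monotonicity of $t\mapsto t/(1-2Dr_\ell t)$ and the propagation of the hypothesis to each $f^L$ fills in precisely the details the paper leaves implicit.
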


Corollary~\ref{cor:est-kappa} motivates (and provides a proof of correctness for) 
the following algorithm.

\begin{minipage}[t]{0.9\textwidth}
\begin{algorithm*}[H]
\DontPrintSemicolon
\SetKwInOut{input}{Input}
\SetKwInOut{output}{Output}
\caption{\textsc{$\kappabar$-Estimate}}\label{alg:kappaestimation}
\input{$f\in\Hd[q]$\\
$\rho\in (0,1)$ \\
$B\in(0,\infty]$}
\hrulefill\\
$\ell\leftarrow 0$\;
\Repeat{$2D \sfK\, r_{\ell} \le \rho$ or $B\leq \sfK$}{
$\ell\leftarrow \ell+1$\;
$\sfK\leftarrow \max\{\kappa(f^L,x)\,|\,x\in \mathcal{G}_{\ell},\,|L|\leq n+1\}$\;
}
if $B\le \sfK$ \KwRet{\tt fail}\\
else \KwRet{$\sfK$}\\ 
\hrulefill\\
\output{{\tt fail} or $ \sfK \in (0,\infty)$}
\postcondition{If {\tt fail}, then $B\leq \sfK \leq \kappabar(f)$; 
otherwise  
$(1-\rho) \kappabar(f) \leq \sfK \leq \kappabar(f)$}
\end{algorithm*}
\end{minipage}
\medskip

\begin{prop}\label{prop:kappa-estgen}
Algorithm~{\sc $\kappabar$-estimate} is correct. 
Its cost 
on input $(f,\rho,B)$ is bounded by
\[
   \big( qnD\min\{B,\kappabar(f)\}\,\rho^{-1}\big)^{\Oh(n)}.
\]
\end{prop}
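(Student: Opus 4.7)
The proof will have two parts: correctness and the cost bound. Both are essentially bookkeeping on top of Corollary~\ref{cor:est-kappa} and the grid size estimate~\eqref{eq:grid-size}.

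For correctness, the upper bound $\sfK\le\kappabar(f)$ holds at every iteration, since $\sfK=\kappabar_\ell(f)$ and Corollary~\ref{cor:est-kappa} gives $\kappabar_\ell(f)\le\kappabar(f)$. If the loop exits through $B\le\sfK$, we immediately obtain $B\le\sfK\le\kappabar(f)$, which is the {\tt fail} postcondition. Otherwise the loop exits with $2D\sfK r_\ell\le\rho<1$, so the hypothesis of Corollary~\ref{cor:est-kappa} is met and
\[
 \kappabar(f)\ \le\ \frac{\sfK}{1-2D\sfK r_\ell}\ \le\ \frac{\sfK}{1-\rho},
\]
yielding $(1-\rho)\kappabar(f)\le\sfK\le\kappabar(f)$.

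For the cost bound, the plan is to show that the loop terminates at an index $\ell^\star=\Oh\!\big(\log\!\tfrac{D\min\{B,\kappabar(f)\}}{\rho}\big)$. Put $M:=\min\{B,\kappabar(f)\}$ and let $\ell^\star$ be the smallest integer with $r_{\ell^\star}\le\rho/(4DM)$. I distinguish two cases. If $M=\kappabar(f)$, then at the step $\ell^\star$ we have $2D\sfK r_{\ell^\star}\le 2D\kappabar(f)r_{\ell^\star}\le\rho/2$, so the $\rho$-termination condition triggers. If $M=B<\kappabar(f)$, then either $\sfK\ge B$ at some earlier step (the $B$-condition triggers), or $\sfK<B$ at step $\ell^\star$, in which case $2D\sfK r_{\ell^\star}<2DB\cdot\rho/(4DB)=\rho/2$ and the $\rho$-condition triggers. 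Hence the loop ends by $\ell=\ell^\star$ and $2^{\ell^\star}=\Oh(DM/\rho)$.

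The cost of one iteration at level $\ell$ is bounded by $|\mcG_\ell|$ times the number of index sets $L\subseteq\{1,\ldots,q\}$ with $|L|\le n+1$, times the cost of one evaluation $\kappa(f^L,x)$. Using~\eqref{eq:grid-size} this is
\[
 (n2^\ell)^{\Oh(n)}\cdot q^{\Oh(n)} \cdot (qnD)^{\Oh(1)},
\]
where the last factor accounts for evaluating the tuple $f^L$ at $x$ and for the linear algebra (pseudoinverse/SVD of an $|L|\times n$ matrix) needed for $\mu(f^L,x)$. Summing geometrically over $\ell=0,\ldots,\ell^\star$ and inserting $2^{\ell^\star}=\Oh(DM/\rho)$ gives the total cost bound
\[
 \big(qnD\min\{B,\kappabar(f)\}\,\rho^{-1}\big)^{\Oh(n)}.
\]

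The only slightly subtle point is the second case in the termination argument, where we do not know a priori that $\sfK$ reaches $B$; the key observation is that it does not need to, because the $\rho$-condition becomes cheap to satisfy precisely when $\sfK$ stays below $B$. Apart from that, the argument is a direct consequence of Corollary~\ref{cor:est-kappa} and the grid estimate~\eqref{eq:grid-size}.
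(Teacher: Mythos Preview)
Your proof is correct and follows essentially the same approach as the paper: correctness via Corollary~\ref{cor:est-kappa}, termination by a case split on whether $\sfK$ is below or above $B$ at the critical level, and the cost bound from~\eqref{eq:grid-size} and a count of the subsets~$L$. Your termination argument is in fact slightly more explicit than the paper's, which handles both cases in one line using $\msK:=\min\{B,\sfK\}$.

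One small inaccuracy: you claim the cost of a single evaluation $\kappa(f^L,x)$ is $(qnD)^{\Oh(1)}$, but evaluating $f^L$ and its derivative at a point already requires reading all its coefficients, and a degree-$D$ homogeneous polynomial in $n{+}1$ variables has $\binom{n+D}{n}$ of them; this is not polynomial in $nD$. The paper quotes $\Oh(N+n^3)$ for this step and then uses $N\le (nD)^{\Oh(n)}$. Fortunately this larger factor is absorbed by the exponent~$\Oh(n)$ in the final bound, so your conclusion is unaffected, but the intermediate estimate should be corrected.
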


\begin{proof} 
The correctness follows from Corollary~\ref{cor:est-kappa} and the stopping criterion, 
noting that at each iteration we have
$\sfK=\kappabar_\ell(f) \le \kappabar(f)$. 

To prove the cost bound assume that, after $\ell$ iterations, we have 
\begin{equation}\label{eq:no-iter}
   \ell\geq \log_{2} \big(2D\msK \rho^{-1}\big), 
\end{equation}
where $\msK:=\min\{B,\sfK\}$. 
Then 
$r_\ell=2^{-\ell} \leq \frac{\rho}{D\msK}$. 
If $B>\sfK$ then $\msK=\sfK$, $r_\ell\leq\frac{\rho}{2D \sfK}$, and the algorithm halts. 
On the other hand, if $B\leq \sfK$, the algorithm halts as well. 
Thus we have shown that the algorithms halts after at most
$$
 T := \log_{2} \big(2D \min\{B,\kappabar(f)\}\rho^{-1}\big)
$$
iterations. 
At the $\ell$th iteration, where $\ell \le T$, the number of points in $\mcG_{\ell}$ is, 
by~\eqref{eq:grid-size}, bounded by 
\begin{equation}\label{eq:size-grid}
  (n2^\ell)^{\Oh(n)} = (n2^T)^{\Oh(n)} .
\end{equation}
For each point $x\in\mcG_{\ell}$ we compute the value of $\kappa(f^L,x)$ for 
at most $\sum_{i=1}^{\min\{q,n+1\}} \binom{q}{i}\leq (q+1)^{n+1}$ subsets $L$. 
And each of these computations can be done in 
$\Oh(N+n^3)$ operations (see~\cite[\S2.5]{Lairez17}) where we recall that $N=\dim\Hd[q]$. 
(Actually, we compute $\kappa$ up to a factor of~$\sqrt{n}$, but we will disregard this fact for simplicity.) 
It is easy to see that 
\begin{equation}\label{eq:N}
  N \leq (nD)^{\Oh(n)} ,
\end{equation}
from where it follows that each $\kappa(f^L,x)$ is computed 
with cost $(nD)^{\Oh(n)}$. 

Putting all the previous bounds together we bound the cost of the computation by 
\begin{equation*}
   T \, (n2^T)^{\Oh(n)} \, 
   (q+1)^{n+1} (nD)^{\Oh(n)}
   \leq \big( qnD\min\{B,\kappabar(f)\} \rho^{-1}\big)^{\Oh(n)} ,
\end{equation*}
which finishes the proof. 
\end{proof}

\begin{remark}
Algorithm~{\sc $\kappabar$-Estimate} estimates $\kappabar(f)$ up to a 
precision $\rho$ in finite time, 
provided this condition number is not too large 
(not much bigger than $B$). When $B=\infty$ is given as input, 
it estimates $\kappabar(f)$ up to this 
precision but its running time is not bounded. In particular, 
if $\kappabar(f)=\infty$, then the algorithm loops forever.
\end{remark}

\begin{proof}[Proof of Proposition \ref{prop:kappa-est}]
This is just a particular case of 
Proposition~\ref{prop:kappa-estgen} with $B=\infty$ 
and $\rho=0.01$.
\end{proof}

\subsection{Computation of simplicial complexes}\label{sec:complexes}

Given $f\in\Hd[q]$, a lax 
formula $\Phi$, and $\ell\in\bbN$, we define the finite {\em cloud of points} 
\begin{equation}\label{def:cloud}
\mcX_{\ell}(f,\Phi):=\Ap^\circ_{D^{1/2}r_\ell}(f,\Phi)\cap \mcG_\ell.
\end{equation}
In the special case of atomic formulas $f_j\propto_j0$, we will write 
$\mcX_j^{\propto_j}$ for the corresponding cloud. 
The following theorem gives sufficient conditions on $\ell$ and $\kappabar(f)$ 
for the clouds~$\mcX_\ell(f,\Phi)$ to approximate, 
as in the hypothesis of Theorem~\ref{theo:lax-case},
the sets $\Ap(f,\Phi)$.  

\begin{theo}[Sampling Theorem]
\label{thm:sampling}
Assume $f\in\Hd[q]$ and $\ell\in\bbN$ are such that 
$13\,D^{2} \kappabar(f)^2r_\ell<1$. 
Then, for every lax formula $\Phi$, we have 
$$
d_H\left(\mcX_\ell(f,\Phi),\Ap(f,\Phi)\right)<3\,D^{1/2} \kappabar(f)r_\ell.
$$ 
\end{theo}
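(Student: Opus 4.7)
The plan is to establish the two inequalities implicit in $d_H$ separately, by reducing to the purely conjunctive case where the existing machinery (Equation~\eqref{eq:Prop_c} and Proposition~\ref{semialgebraiccaseneighborhoods}) directly applies. First, I would put $\Phi$ in disjunctive normal form $\Phi\equiv\bigvee_{i\in I}\phi_i$ with each $\phi_i$ purely conjunctive (this is possible within the lax fragment, as recalled in~\S\ref{sec:logic}), and observe that, from the atom-by-atom definition of the algebraic neighborhood, one has
\[
 \Ap(f,\Phi)=\bigcup_{i\in I}\Ap(f,\phi_i),\qquad
 \Ap_r^\circ(f,\Phi)=\bigcup_{i\in I}\Ap_r^\circ(f,\phi_i),
\]
so that the cloud also decomposes as $\mcX_\ell(f,\Phi)\subseteq\bigcup_i\Ap_{D^{1/2}r_\ell}^\circ(f,\phi_i)\cap\mcG_\ell$.

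For the first direction ($\mcX_\ell(f,\Phi)\to\Ap(f,\Phi)$), I would pick $x\in\mcX_\ell(f,\Phi)$ and use the above decomposition to find an index $i$ such that $x\in\Ap_{D^{1/2}r_\ell}^\circ(f,\phi_i)$. Since the hypothesis $13\,D^{2}\kappabar(f)^{2}r_\ell<1$ is exactly $13\,D^{3/2}\kappabar(f)^{2}(D^{1/2}r_\ell)<1$, Proposition~\ref{semialgebraiccaseneighborhoods} applied with tolerance $r=D^{1/2}r_\ell$ yields
\[
 \Ap_{D^{1/2}r_\ell}^\circ(f,\phi_i)\subseteq \mcU_{\bbS}\!\left(\Ap(f,\phi_i),\,3\kappabar(f)D^{1/2}r_\ell\right),
\]
and since chordal distance is bounded by spherical distance and $\Ap(f,\phi_i)\subseteq\Ap(f,\Phi)$, this gives $d(x,\Ap(f,\Phi))<3\,D^{1/2}\kappabar(f)r_\ell$.

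For the second direction ($\Ap(f,\Phi)\to\mcX_\ell(f,\Phi)$), I would take $y\in\Ap(f,\Phi)$ and invoke the spherical grid-covering property~\eqref{eq:grid-cover} to produce $x\in\mcG_\ell$ with $d_{\bbS}(x,y)<r_\ell$. Then the inclusion~\eqref{eq:Prop_c},
\[
 \mcU_{\bbS}(\Ap(f,\Phi),r_\ell)\subseteq \Ap_{D^{1/2}r_\ell}^\circ(f,\Phi),
\]
places $x$ inside $\Ap_{D^{1/2}r_\ell}^\circ(f,\Phi)\cap\mcG_\ell=\mcX_\ell(f,\Phi)$; since Euclidean distance is bounded by spherical distance we get $d(y,\mcX_\ell(f,\Phi))\le d(x,y)\le r_\ell$, which is comfortably below the target bound as $\kappabar(f)\ge 1$.

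There is no essential obstacle here: both inclusions are already available, and the only technical point to keep track of is the compatibility of the condition $13\,D^{2}\kappabar(f)^{2}r_\ell<1$ with the tolerance $D^{1/2}r_\ell$ used to apply Proposition~\ref{semialgebraiccaseneighborhoods} to each clause $\phi_i$. The mild subtlety worth checking, as noted above, is that $\Ap_r^\circ(f,\cdot)$ commutes with disjunctions in lax formulas, which is immediate from its atomwise definition; once this is granted, both directions of the Hausdorff bound follow and combine to yield the stated estimate.
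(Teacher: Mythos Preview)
Your proposal is correct and follows essentially the same route as the paper: pass to DNF, use Proposition~\ref{semialgebraiccaseneighborhoods} with tolerance $D^{1/2}r_\ell$ on each clause to control the distance from $\mcX_\ell(f,\Phi)$ to $\Ap(f,\Phi)$, and use the grid covering~\eqref{eq:grid-cover} together with~\eqref{eq:Prop_c} for the reverse direction. The only cosmetic difference is that the paper first reduces to purely conjunctive $\Phi$ via the inequality $d_H(\cup_i A_i,\cup_i B_i)\le\max_i d_H(A_i,B_i)$ and the identity $\mcX_\ell(f,\Psi_0\vee\Psi_1)=\mcX_\ell(f,\Psi_0)\cup\mcX_\ell(f,\Psi_1)$, whereas you carry the DNF decomposition through and pick out a clause only where Proposition~\ref{semialgebraiccaseneighborhoods} demands it.
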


\begin{proof}
Without loss of generality we can assume that $\Phi$ is in DNF, as 
this assumption does not change the underlying set. Furthermore, we can assume 
that $\Phi$ is basic due to the inequality
\[
  d_H(\cup_{i=1}^tA_i,\cup_{i=1}^tB_i)\leq \max_i\,d_H(A_i,B_i)
\]
of the Hausdorff distance and the fact that 
$\mcX_\ell(f,\Psi_0\vee\Psi_1)=\mcX_\ell(f,\Psi_0)\cup \mcX_\ell(f,\Psi_1)$.

By the construction of $\mcX_\ell$, 
\[
\mcX_\ell(f,\Phi)\subseteq \Ap^\circ_{D^{1/2}r_\ell}(f,\Phi)
\subseteq \mcU\big(\Ap(f,\Phi),3D^{1/2}\kappabar(f)r_\ell\big),
\]
the last by Proposition~\ref{semialgebraiccaseneighborhoods} and \eqref{eq:USU}.
By~\eqref{eq:grid-cover}, for all $x\in\Ap(f,\Phi)$, there is some $g_x\in\mcG_\ell$ 
such that $d_\bbS(x,g_x)\leq d(x,g_x)<r_\ell$. Thus, 
by \eqref{eq:Prop_c},  
$g_x\in \mcU_\bbS(\Ap(f,\Phi),r_\ell)\subseteq 
\Ap^\circ_{D^{1/2}r_\ell}(f,\Phi)$ 
and so $g_x\in \mcX_\ell(f,\Phi)$. Hence
\[
   \Ap(f,\Phi)\subseteq \mcU(\mcX_\ell(f,\Phi),r_\ell)\subseteq 
   \mcU\big(\mcX_\ell(f,\Phi),3D^{1/2}\kappabar(f)r_\ell\big),
\]
as $D\geq 1$ and $\kappabar(f)\geq 1$.
The inequality on the Hausdorff distance follows from the two 
inclusions above.
\end{proof}

We can now put together the Homology Witness Theorem~\ref{theo:lax-case} 
and the Sampling Theorem~\ref{thm:sampling}.  
The fundamental observation to make is that one only needs to   
sample from each of the $3q$ atomic 
sets associated with $f\in\Hd[q]$ defined in~\eqref{eq:atomic}. 
The following (trivial) identity
\begin{equation}\label{eq:inter}
   \mcX_\ell(f,\Phi)
   =\Phi\Big(\mcX_1^\leq,\mcX_1^=,
  \mcX_1^\geq,\ldots,\mcX_q^\leq,
  \mcX_q^=,\mcX_q^\geq\Big)
\end{equation}
allows us to obtain, for any lax formula $\Phi$, the 
cloud $\mcX_\ell(f,\Phi)$ by sampling from these atomic sets. 

\begin{prop}\label{prop:cechconstruction}
Let $f\in\Hd[q]$, $\varepsilon>0$, and $\ell\in\bbN$ be such that
\begin{equation*}
   9D^{1/2}\kappabar(f)r_\ell
   <\varepsilon<\frac{1}{14D^{3/2}\kappabar(f)} .
\end{equation*}
Then, for all lax formulas $\Phi$ over $f$, the semialgebraic set 
$\Ap(f,\Phi)\subseteq\bbS^n$ and the simplicial complex 
\begin{equation}\label{eq:defC}
  \fkC:=\Phi\Big(\cech{\varepsilon}{\mcX_1^\leq},\cech{\varepsilon}{\mcX_1^=},
  \cech{\varepsilon}{\mcX_1^\geq},\ldots,\cech{\varepsilon}{\mcX_q^\leq},
  \cech{\varepsilon}{\mcX_q^=},\cech{\varepsilon}{\mcX_q^\geq}\Big)
\end{equation}
have the same homology.
\end{prop}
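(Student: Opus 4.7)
The plan is to verify that the triple $(f, \varepsilon, \{\mcX_j^{\propto_j}\})$ satisfies the hypotheses of the Homology Witness Theorem~\ref{theo:lax-case} and then invoke it to directly obtain the claim. The right-hand inequality on $\varepsilon$ in Theorem~\ref{theo:lax-case} is identical to the upper bound we are given, so nothing is required for it. The two remaining items to check are (a) the compatibility relation $\mcX_j^{\leq}\cap \mcX_j^{\geq}=\mcX_j^{=}$ for each $j$, and (b) the Hausdorff approximation bound
\[
 3\, d_H\!\Big(\bigcap_{j\in J}\mcX_j^{\propto_j},\ \bigcap_{j\in J}S_j^{\propto_j}\Big)\ \leq\ \varepsilon
\]
for every $J\subseteq\{1,\dots,q\}$ and every $\propto\in\{\leq,=,\geq\}^J$.

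Item (a) is immediate from the definition of $\mcX_j^{\propto_j}=\Ap^\circ_{D^{1/2}r_\ell}(f_j\propto_j 0)\cap\mcG_\ell$ in~\eqref{def:cloud}: unfolding the algebraic neighborhoods, $\mcX_j^{\leq}$ corresponds to $f_j(x)<D^{1/2}r_\ell\|f_j\|$, $\mcX_j^{\geq}$ to $f_j(x)>-D^{1/2}r_\ell\|f_j\|$, and their intersection is precisely the condition $|f_j(x)|<D^{1/2}r_\ell\|f_j\|$ defining $\mcX_j^{=}$. For item (b) I would exploit the trivial distributivity~\eqref{eq:inter}: applied to the purely conjunctive formula $\Phi_J:=\bigwedge_{j\in J}(f_j\propto_j 0)$, it yields $\mcX_\ell(f,\Phi_J)=\bigcap_{j\in J}\mcX_j^{\propto_j}$, while clearly $\Ap(f,\Phi_J)=\bigcap_{j\in J}S_j^{\propto_j}$. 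Hence (b) reduces to applying the Sampling Theorem~\ref{thm:sampling} to $\Phi_J$, which gives
\[
 d_H\!\Big(\bigcap_{j\in J}\mcX_j^{\propto_j},\ \bigcap_{j\in J}S_j^{\propto_j}\Big)\ <\ 3\,D^{1/2}\kappabar(f)\,r_\ell.
\]
Multiplying by $3$ and combining with the assumed lower bound $9\,D^{1/2}\kappabar(f)r_\ell<\varepsilon$ finishes~(b).

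Before invoking the Sampling Theorem I must check its own hypothesis $13\,D^2\kappabar(f)^2 r_\ell<1$; this follows from multiplying the two inequalities in the statement: $9\,D^{1/2}\kappabar(f)\,r_\ell<\varepsilon<\frac{1}{14\,D^{3/2}\kappabar(f)}$ gives $126\,D^2\kappabar(f)^2 r_\ell<1$, which is more than what we need. With (a) and (b) verified, Theorem~\ref{theo:lax-case} applies and directly produces the equality of homology groups of $\Ap(f,\Phi)$ and $\fkC$, since $\fkC$ is exactly the Boolean combination of \v{C}ech complexes built in the theorem. I do not anticipate any real obstacle here: the content of the proposition is essentially bookkeeping that combines the two workhorses (the Homology Witness Theorem and the Sampling Theorem). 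The only delicate point is to recognise that $\mcX_\ell(f,\cdot)$ commutes with conjunctions via~\eqref{eq:inter}, which is what allows the single Hausdorff bound of Theorem~\ref{thm:sampling} to cover simultaneously all index subsets~$J$ required by Theorem~\ref{theo:lax-case}.
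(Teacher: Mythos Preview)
Your proposal is correct and follows essentially the same approach as the paper: reduce to the Homology Witness Theorem~\ref{theo:lax-case}, verify its Hausdorff hypothesis via the Sampling Theorem~\ref{thm:sampling}, and check that the numerical assumption $13\,D^{2}\kappabar(f)^2 r_\ell<1$ follows from the given bounds on~$\varepsilon$. If anything, you are slightly more explicit than the paper in checking the compatibility relation $\mcX_j^{\leq}\cap\mcX_j^{\geq}=\mcX_j^{=}$ and in spelling out that the Hausdorff bound must be verified for every purely conjunctive $\Phi_J$ (the paper treats this tersely).
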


\begin{proof}
This follows from the Homology Witness Theorem~\ref{theo:lax-case} applied to $f\in \Hd[q]$ 
and to the finite sets $\mcX_j^{\propto_j}$ associated to the atomic formulas $f_j\propto_j0$ 
via~\eqref{def:cloud}. For this, we need to check that 
\[
  3d_H\left(\mcX_\ell(f,\Phi),\Ap(f,\Phi)\right)<\varepsilon
  <\frac{1}{14D^{3/2}\kappabar(f)}.
\]
However, the right-hand inequality holds by assumption 
and the left-hand inequality follows from the Sampling Theorem~\ref{thm:sampling}
(it is immediate to check that $13\,D^{2} \kappabar(f)^2r_\ell<1$ follows 
from our hypothesis).
\end{proof}

We provide now the proof of the crucial Homology Witness Theorem~\ref{theo:lax-case}. 

\begin{proof}[Proof of Theorem~\ref{theo:lax-case}]
%
Without loss of generality, we can assume that $\Phi$ is in DNF, i.e., it is of the form $\bigvee_{i\in I}\phi_i$ with each $\phi_i$ purely conjunctive. We can further assume 
that no polynomial appears twice in any of the $\phi_i$. We can do these 
assumptions because they change neither the semialgebraic set $\Ap(f,\Phi)$ 
nor the simplicial complex $\fkC$ defined in~\eqref{eq:defC}.

We will use the Inclusion-Exclusion Transfer (Corollary~\ref{cor:IE}) 
to show that both $\Ap(f,\Phi)$ and $\fkC$ have the same homology 
as the algebraic neighborhood $\Ap_\rho^\circ(f,\Phi)$ 
for $\rho=6D^{1/2}\varepsilon$. 
Note that for this $\rho$ we have 
$$
  \sqrt{2}\kappabar(f)\rho=\sqrt{2}\kappabar(f)6D^{1/2}
  \varepsilon\le\frac{\sqrt{2}\kappabar(f)6D^{1/2}}
  {14 D^{3/2} \kappabar(f)}<1. 
$$
We can then use 
Theorem~\ref{semialgebraiccaseneighborhoodshomotopy} 
to deduce that, for all $J\subseteq I$, the inclusion
\[
 \bigcap_{j\in J}\Ap(f,\phi_j)=\Ap(f,\wedge_{j\in J}\phi_j)\subseteq 
 \Ap_{\rho}^\circ(f,\wedge_{j\in J}\phi_j)
 =\bigcap_{j\in J}\Ap_{\rho}^\circ(f,\phi_j)
\]
induces an isomorphism in homology. 
In addition, we have 
\[
 \bigcup_{i\in I}\Ap(f,\phi_i)=\Ap(f,\Phi)\text{~~and~~}
 \bigcup_{i\in I}\Ap_{\rho}^\circ(f,\phi_i)=\Ap_{\rho}^\circ(f,\Phi),
\]
so we can apply the Inclusion-Exclusion Transfer 
to the families 
$\{\Ap(f,\phi_i)\}_{i\in I}$ and 
$\{\Ap_{\rho}^\circ(f,\phi_i)\}_{i\in I}$ to deduce that the inclusion
\begin{equation}\label{eq:1homot}
   \Ap(f,\Phi)\hookrightarrow \Ap_{\rho}^\circ(f,\Phi)
\end{equation}
induces an isomorphism in homology. 
\smallskip

We now need to show that $\fkC$ and $\Ap_{\rho}^\circ(f,\Phi)$ have 
the same homology. 
To do so, for $J\subseteq \{1,\ldots,q\}$ and $\propto\in\{\leq,=,\geq\}^J$, 
we define the closed set 
$\mcX_J^{\propto}:=\cap_{j\in J}\mcX_j^{\propto_j}$. 
We also let $\psi_J^{\propto}:=\bigwedge_{j\in J}(f_j\propto_j0)$.  
By construction, we have 
\[
  \Ap(f,\psi_J^{\propto})=\bigcap_{j\in J}S_j^{\propto_j}
\]
where the $S_j^{\propto_j}$  are the $3q$ atomic sets associated with $f\in\Hd[q]$ defined in~\eqref{eq:atomic}.

We first prove that for all $z$ in the Euclidean neighborhood
$\mcU(\mcX_J^{\propto},\varepsilon)$, we have 
\begin{equation}\label{eq:inequalitytshow2}
d_\bbS\left(\frac{z}{\|z\|},\Ap(f,\psi_J^{\propto})\right) < 6\varepsilon. 
\end{equation}
Indeed, for all $y_0,y_1\in\bbS^n$,
\[
   d_\bbS(y_0,y_1)\leq \frac{\pi}{2}d(y_0,y_1)\leq 2d(y_0,y_1).
\]
Consequently,
\begin{align*}
d_\bbS\left(\frac{z}{\|z\|},\Ap(f,\psi_J^{\propto})\right)
&\leq 2d\left(\frac{z}{\|z\|},\Ap(f,\psi_J^{\propto})\right)\\
& < 2d\left(\frac{z}{\|z\|},z\right)+2d(z,\mcX_J^{\propto})
+2d_H(\mcX_J^{\propto},\Ap(f,\psi_J^{\propto}))\\
&\leq 2d\left(\frac{z}{\|z\|},z\right)+2d(z,\mcX_J^{\propto})+2\varepsilon\\
&= 2d(z,\bbS^n)+2d(z,\mcX_J^{\propto})+2\varepsilon\\
&\leq 4d(z,\mcX_J^{\propto})+2\varepsilon\\
&\leq 6\varepsilon,
\end{align*}
where the second line follows from the triangular inequality for the 
Hausdorff distance, the third 
one from $d_H(\mcX_J^{\propto},\Ap(f,\psi_J^{\propto}))<\varepsilon/3$, the fourth 
one from the fact that $\frac{z}{\|z\|}$ is the nearest point to $z$ in $\bbS^n$, 
the fifth one from $\mcX_J^{\propto}\subseteq \bbS^n$ and the sixth and last one from $z\in\mcU(\mcX_J^{\propto},\varepsilon)$. 
Hence we have shown~\eqref{eq:inequalitytshow2}. As the set $\mcU(\mcX_J^{\propto},\varepsilon)$ is not included in the sphere $\bbS^n$ it will be convenient to 
consider, for any set $S\subseteq\bbS^n$ the cone 
\[
 \widehat{S}:=\{\lambda x\mid\lambda>0,\,x\in S\}
\]
over the spherical set $S$. Note that the inclusion
\begin{equation}\label{eq:2homot}
   S\overset{\simeq}{\hookrightarrow} \hat{S}
\end{equation}
is a homotopy equivalence since the map
\[
 \widehat{S}\times[0,1]\rightarrow \widehat{S},\ 
 (p,t)\mapsto \frac{p}{(1-t)+t\|p\|_2}
\]
induces a continuous retraction of $\widehat{S}$ 
onto $S$. These two spaces thus have the same homology.
We will briefly write $\widehat{\mcU}$ and $\widehat{\Ap}$ to denote the cone over the 
corresponding neighborhoods. 
As a consequence of~\eqref{eq:inequalitytshow2} we deduce that 
\[
  \mcU(\mcX_J^{\propto},\varepsilon)\subseteq
  \widehat{\mcU}_{\bbS}(\Ap(f,\psi_J^{\propto}),6\varepsilon)
  \subseteq \widehat{\Ap}^\circ_\rho(f,\psi_J^{\propto})
\]
the last by~\eqref{eq:Prop_c} and the definition of $\rho$. 
We therefore have the inclusions
\begin{equation}\label{inclusionstoshow}
  \begin{tikzcd}
  \Ap(f,\psi_J^{\propto})  \arrow[r,hook]  \arrow[dr,hook]
  & \mcU(\mcX_J^{\propto},\varepsilon) \arrow[d,hook]\\
  & \widehat{\Ap}^\circ_\rho(f,\psi_J^{\propto})
\end{tikzcd}
\end{equation}
the horizontal arrow by hypothesis and the diagonal by composition.

We now note that $\Ap(f,\psi_J^{\propto})\hookrightarrow\mcU(\mcX_J^{\propto},\varepsilon)$
induces an isomorphism of homology by Theorem~\ref{teo:bcl} and that so does 
$\Ap(f,\psi_J^{\propto})\hookrightarrow\widehat{\Ap}^\circ_\rho(f,\psi_J^{\propto})$,
now by Theorem~\ref{semialgebraiccaseneighborhoodshomotopy} 
and~\eqref{eq:2homot}. 
This implies that the inclusion $\mcU(\mcX_J^{\propto},\varepsilon)
 \hookrightarrow \widehat{\Ap}^\circ_\rho(f,\psi_J^{\propto})$ induces 
 the isomorphism 
\begin{equation}\label{eq:3homol}
H_*\big(\mcU(\mcX_J^{\propto},\varepsilon)\big)\simeq 
H_*\big(\widehat{\Ap}^\circ_\rho(f,\psi_J^{\propto})\big).
\end{equation} 
Thus, the map
\[
   \pi:\cech{\varepsilon}{\psi_J^{\propto}}\rightarrow 
   \mcU(\mcX_J^{\propto},\varepsilon)
\]
defined in Theorem~\ref{nervetheorem} composed with the vertical arrow  
in~\eqref{inclusionstoshow} yields a map
\[
   \pi':\cech{\varepsilon}{\psi_J^{\propto}}\rightarrow 
   \widehat{\Ap}^\circ_\rho(f,\psi_J^{\propto})
\]
that induces an isomorphism in homology, by Theorem~\ref{nervetheorem} 
and~\eqref{eq:3homol}. As the 
$\psi_J^{\propto}$ cover all the purely conjunctive formulas, up to 
equivalence, we have shown that, for every purely conjunctive formula 
$\phi$, the map 
\[
  \pi:\cech{\varepsilon}{\phi}\rightarrow \widehat{\Ap}_\rho^\circ(f,\phi)
\]
from Theorem~\ref{nervetheorem} is well-defined, i.e., the image is 
contained in the codomain, and induces an isomorphism in homology.

We come back to the general case. Since
\[
\fkC=\bigcup_{i\in I}\cech{\varepsilon}{\phi_i}
\text{~~and~~}
\widehat{\Ap}_{\rho}^\circ(f,\Phi)=\bigcup_{i\in I}\widehat{\Ap}_{\rho}^\circ(f,\phi_i),
\]
the map
\[
  \pi:\fkC\rightarrow \widehat{\Ap}_\rho^\circ(f,\Phi)
\]
coming from Theorem~\ref{nervetheorem} is well-defined, as we can guarantee 
that the image is contained in the codomain by the previous paragraph. This map 
induces an isomorphism in homology, by the Inclusion-Exclusion Transfer 
(Corollary~\ref{cor:IE})
applied to the families $\{\cech{\varepsilon}{\phi_i}\}_{i\in I}$ and 
$\{\widehat{\Ap}_{\rho}^\circ(f,\phi_i)\}_{i\in I}$. This is so because, as we have 
just seen, the map $\pi$ induces an isomorphism in homology for purely conjunctive 
formulas, together with the equalities 
\[
  \bigcap_{j\in J}\cech{\varepsilon}{\phi_j}
  =\cech{\varepsilon}{\wedge_{j\in J}\phi_j}\text{~~and~~}
  \bigcap_{j\in J}\widehat{\Ap}_{\rho}^\circ(f,\phi_j)
  =\widehat{\Ap}_{\rho}^\circ(f,\wedge_{j\in J}\phi_j)
\]
for all $J\subseteq I$. Using~\eqref{eq:2homot} again we conclude that 
$\fkC$ and $\Ap_\rho^\circ(f,\Phi)$ have 
the same homology.

We can conclude as we have shown that both $\Ap(f,\Phi)$ and $\fkC$ have 
the same homology as $\Ap_\rho^\circ(f,\Phi)$ for the chosen $\rho$. 
\end{proof}

As a consequence of Proposition~\ref{prop:cechconstruction}, we may construct 
the desired simplicial complex~$\fkC$ from the complexes 
$\cech{\varepsilon}{\mcX_j^{\propto_j}}$ using the Boolean combination that yields 
$\Ap(f,\Phi)$ from the atoms $S_j^{\propto_j}$. 

\begin{minipage}[t]{0.9\textwidth}
\begin{algorithm*}[H]
\DontPrintSemicolon
\SetKwInOut{input}{Input}
\SetKwInOut{output}{Output}
\caption{\textsc{Simplicial}}\label{alg:simplicial}
\input{$f\in\Hd[q]$\\
Lax formula $\Phi$ over $f$\\
$\sfK\in [1,\infty)$
}
\precondition{$0.99\, \kappabar(f)\le \sfK \leq \kappabar(f) $}
\hrulefill

$\ell\leftarrow \lceil\log_2 200D^2 \sfK^2\rceil$\;
$\varepsilon\leftarrow \frac{1}{20\,D^{3/2} \sfK}$\;
\For{$j = 1,\ldots,q$}{
compute $\mcX_j^{\leq}$ and 
$\fkA_j^{\leq}\leftarrow \cech{\varepsilon}{\mcX_j^{\leq}}$\;
compute $\mcX_j^{=}$ and 
$\fkA_j^{=}\leftarrow \cech{\varepsilon}{\mcX_j^{=}}$\;
compute $\mcX_j^{\geq}$ and 
$\fkA_j^{\geq}\leftarrow \cech{\varepsilon}{\mcX_j^{\geq}}$\;
}
$\fkC\leftarrow \Phi\left(\fkA_1^{\leq},\fkA_1^{=},\fkA_1^{\geq},\ldots,\fkA_q^{\leq},\fkA_q^{=},\fkA_q^{\geq}\right)$\;

\KwRet{$\fkC$}\\
\hrulefill\\
\output{Simplicial complex $\fkC$}
\postcondition{$\fkC$ has the same homology as $\Ap(f,\Phi)$.}
\end{algorithm*}
\end{minipage}
\medskip

\begin{prop}\label{prop:comp-simplicial}
Algorithm~{\sc Simplicial} is correct. The cost of running it 
on input $(f,\Phi,\sfK)$ is bounded by 
$$
(q+\size(\Phi))\,\big(nD\kappabar(f)\big)^{\Oh(n^2)}.
$$
The number of faces of $\fkC$ is bounded by $(nD\kappabar(f))^{\Oh(n^2)}$. 
\end{prop}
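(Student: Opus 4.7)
The plan is to verify in turn the three claims: correctness, cost, and the face-count bound.

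For correctness, I would argue that the algorithm's choices of $\ell$ and $\varepsilon$ satisfy the hypotheses of Proposition~\ref{prop:cechconstruction}, after which the postcondition is immediate. Concretely, from $\ell = \lceil \log_2 200 D^2 \sfK^2 \rceil$ we have $r_\ell \le (200 D^2 \sfK^2)^{-1}$, while $\varepsilon = (20 D^{3/2}\sfK)^{-1}$. Using the precondition $0.99\,\kappabar(f) \le \sfK \le \kappabar(f)$, the upper bound $\varepsilon < (14 D^{3/2}\kappabar(f))^{-1}$ follows from $\kappabar(f) \le \sfK/0.99$, and the lower bound $9 D^{1/2}\kappabar(f) r_\ell < \varepsilon$ reduces to $180 D^2 \kappabar(f) < 200 D^2 \sfK$, which again is immediate from $\kappabar(f) \le \sfK/0.99$. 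Proposition~\ref{prop:cechconstruction} then gives that $\fkC$ and $\Ap(f,\Phi)$ share the same homology.

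For the cost analysis, the central observation is that every object constructed by the algorithm lives over the single grid $\mcG_\ell$ whose size is controlled by~\eqref{eq:grid-size}, giving $|\mcG_\ell| = (n 2^\ell)^{\Oh(n)} = (nD\kappabar(f))^{\Oh(n)}$. Each atomic cloud $\mcX_j^{\propto_j}$ is computed by evaluating $f_j$ at every point of $\mcG_\ell$ and testing the relaxed inequality; by~\eqref{eq:N} this costs $|\mcG_\ell|\cdot(nD)^{\Oh(n)}$ per index~$j$, hence $q\cdot(nD\kappabar(f))^{\Oh(n)}$ in total for the loop.

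The nontrivial part is the construction of the \v{C}ech complexes and their Boolean combination. Since every $\mcX_j^{\propto_j}\subseteq \mcG_\ell$, the three families of \v{C}ech complexes are all subcomplexes of the \v{C}ech complex over $\mcG_\ell$ at scale~$\varepsilon$; truncating at dimension $n{+}1$ (which suffices to preserve homology of subsets of $\bbS^n\subseteq\bbR^{n+1}$), the total number of candidate faces is bounded by $|\mcG_\ell|^{n+2} = (nD\kappabar(f))^{\Oh(n^2)}$. For each candidate face, membership in a given $\cech{\varepsilon}{\mcX_j^{\propto_j}}$ reduces to (i)~checking that each vertex belongs to $\mcX_j^{\propto_j}$ and (ii)~testing nonempty intersection of at most $n{+}2$ Euclidean balls, both at polynomial cost. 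This yields the face-count bound and shows that constructing all $3q$ complexes costs $q\cdot(nD\kappabar(f))^{\Oh(n^2)}$; finally, evaluating $\Phi$ on the complexes requires $\size(\Phi)$ union/intersection operations, each of cost proportional to the ambient face count, contributing $\size(\Phi)\cdot(nD\kappabar(f))^{\Oh(n^2)}$. Summing yields the claimed $(q+\size(\Phi))(nD\kappabar(f))^{\Oh(n^2)}$ bound.

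The main technical point to justify carefully is the truncation of the \v{C}ech complexes at dimension $n+1$, which is what produces the $\Oh(n^2)$ rather than a much larger exponent; the justification is that the balls involved live in $\bbR^{n+1}$, so by a standard Nerve/Helly argument higher-dimensional skeleta contribute trivially to the homology computed in the next stage. Everything else is bookkeeping combining~\eqref{eq:grid-size},~\eqref{eq:N}, and the cardinality estimates above.
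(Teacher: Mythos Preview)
Your proposal is correct and follows essentially the same route as the paper's proof: verify the hypotheses of Proposition~\ref{prop:cechconstruction} from the precondition on $\sfK$, then bound the grid size, the cost of computing the atomic clouds, the cost of building the truncated \v{C}ech complexes, and the cost of the final Boolean combination. The only cosmetic differences are that the paper cites \cite[Lemma~4.2]{CKS16} for the $|\mcX|^k k^{\Oh(n)}$ cost of computing $k$-faces and truncates at dimension~$n$ (justified by $\dim S_j^{\propto_j}\le n$) rather than at $n{+}1$ via a Nerve/Helly argument as you do.
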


\begin{proof}
It is straightforward to verify that the values the algorithm sets for 
$\ell$ and $\varepsilon$, along with the precondition on $\sfK$, guarantee 
that 
\begin{equation*}
   9D^{1/2}\kappabar(f)r_\ell
   <\varepsilon<\frac{1}{14D^{3/2}\kappabar(f)}.
\end{equation*}
Hence, the hypothesis of Proposition~\ref{prop:cechconstruction} 
are satisfied and the correctness of the algorithm follows. 

We next focus on complexity. The cost of computing an atomic cloud $\mcX_j^{\propto_j}$ 
is that of evaluating $f$ at a point in $\mcG_\ell$ times the number of points 
in $\mcG_\ell$. The latter is $\big(n2^\ell\big)^{\Oh(n)}$ by \eqref{eq:size-grid} 
and the former is $\Oh(N)=(nD)^{\Oh(n)}$ by~\eqref{eq:N}. Using that  
$\sfK\le\kappabar(f)$ it follows that we can compute one such atomic 
cloud with cost $(nD\kappabar(f))^{\Oh(n)}$. Multiplying by $3q$ 
we obtain the cost of computing all of them. 

The computation of the set $F_k$ of $k$-faces of the \v{C}ech complex 
$\cech{\varepsilon}{\mcX_j^{\propto_j}}$ takes time 
$|\mcX_j^{\propto_j}|^k\,k^{\Oh(n)}$ 
(see~\cite[Lemma~4.2]{CKS16}). As 
$|\mcX_j^{\propto_j}|\le |\mcG_\ell|=(n2^\ell)^{\Oh(n)}$ by~\eqref{eq:grid-size}, 
the computation of the sets $F_k$ for $k=0,\ldots,n$ (we are not interested 
in $k>n$ as $\dim S_j^{\propto_j}\leq n$) has cost 
$$
   \sum_{k=0}^n (n2^\ell)^{\Oh(kn)}k^{\Oh(n)} \ \le\ 
    (n2^\ell)^{\Oh(n^2)} \ \le\ 
   (nD\kappabar(f))^{\Oh(n^2)}.
$$
Multiplying by $q$ we obtain the cost of computing all the 
$\cech{\varepsilon}{\mcX_j^{\propto_j}}$. 

To compute the simplicial complex~$\fkC$ we compute, $\size(\Phi)-1$ many times, 
a union or an intersection of two (already computed) \v{C}ech complexes $\fkC_1$ 
and $\fkC_2$ (see \S\ref{se:HEC}). 
The cost of each of these computations is linear in the size of $\fkC_1$ and 
$\fkC_2$ and hence, this final step has cost bounded by 
$\size(\Phi)(nD\kappabar(f))^{\Oh(n^2)}$. The statement now 
follows.
\end{proof}

\subsection{Computation of homology groups}\label{sec:groups}

The final procedure to obtain the homology of $\Ap(f,\Phi)$ computes 
the homology of the simplicial complex $\fkC$ returned by 
{\sc Simplicial}. The matrices $M_k$ corresponding to the 
boundary maps $\delta_k:C_k\to C_{k-1}$ for $k=1,\ldots,n$, 
where $C_k$ is the free Abelian group generated by the $k$-faces, 
have entries in $\{-1,0,1\}$. The Betti numbers 
$\beta_0(\fkC),\ldots,\beta_{n-1}(\fkC)$, as well as the 
torsion coefficients, of $\fkC$ are computed from these 
matrices via the computation of their Smith normal 
form. A description of how this is 
done is in~\cite[Proposition~4.3]{CKS16} where the following 
cost bound is also proved. 
 
\begin{prop}\label{prop:cost-groups}
The homology groups $H_0(\fkC),\ldots,H_{n-1}(\fkC)$ of $\fkC$ are computed from 
the matrices $M_k$ with a cost bounded by $nF^{\Oh(k)}$, 
where $F$ is the maximum over~$k$ of the number of $k$-faces of $\fkC$. 
In the case where $\fkC$ is the simplicial complex returned by {\sc Simplicial} 
with input $(f,\Phi,\sfK)$, this total cost is
\begin{equation}\tag*{\qed}
   \big(nD\kappabar(f)\big)^{\Oh(n^2)}.
\end{equation}
\end{prop}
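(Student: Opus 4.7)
The plan is to directly invoke \cite[Proposition~4.3]{CKS16} for the abstract cost of computing the homology of a finite simplicial complex from its boundary matrices, and then to combine this with the face-count bound from Proposition~\ref{prop:comp-simplicial}. No new ideas are required; the proof is essentially a substitution.

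I would begin by recalling the standard reduction to Smith normal form. Writing $F_k$ for the set of $k$-faces of $\fkC$, each group $H_k(\fkC)$ is determined by the pair of boundary matrices $M_k$ and $M_{k+1}$ via the formulas
$$
  \beta_k(\fkC) \;=\; |F_k|-\mathrm{rank}(M_k)-\mathrm{rank}(M_{k+1}),
$$
with the torsion part of $H_k(\fkC)$ being read off from the elementary divisors $>1$ appearing in the Smith normal form of $M_{k+1}$. The entire homology computation therefore reduces to computing the Smith normal forms of at most $n$ integer matrices $M_1,\ldots,M_n$, each of dimension at most $F\times F$ and with entries in $\{-1,0,1\}$.

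Next, I would invoke a polynomial-time integer Smith normal form algorithm (for instance, those of Kannan--Bachem or Storjohann) which, thanks to the boundedness of the entries of $M_k$, keeps intermediate integer sizes under control throughout the reduction. Combined across the $n$ boundary matrices, this yields precisely the abstract bound $nF^{\Oh(k)}$ as stated in \cite[Proposition~4.3]{CKS16}.

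Finally, for the quantitative assertion specific to the complex $\fkC$ returned by \textsc{Simplicial}, I would substitute the face-count bound $F \leq (nD\kappabar(f))^{\Oh(n^2)}$ from Proposition~\ref{prop:comp-simplicial} into the abstract cost bound. This immediately gives the stated total cost of $(nD\kappabar(f))^{\Oh(n^2)}$ operations. There is no substantive obstacle in this argument, as both ingredients (the abstract cost bound from \cite{CKS16} and the face count from Proposition~\ref{prop:comp-simplicial}) are already in place; the only care needed is to verify that the exponents combine correctly under the substitution.
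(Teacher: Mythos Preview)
Your proposal is correct and follows essentially the same approach as the paper, which does not provide a standalone proof but simply defers the abstract bound to \cite[Proposition~4.3]{CKS16} and closes with the \qed after stating the quantitative bound obtained by substituting the face count from Proposition~\ref{prop:comp-simplicial}. Your write-up is in fact more detailed than the paper's treatment.
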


\section[Affine Condition, Random Data and Proof of the Main Result]{Affine Condition, Random Data\\\hspace*{7cm}and Proof of the Main Result}
\label{sec:final}

\subsection{Affine intersection condition}\label{sec:affine}

Proposition~\ref{generaltospherical} shows that, for well-posed 
tuples $p\in\Pd[q]$, homogeneization 
reduces the computation of homology groups of semialgebraic sets to the 
same computation for spherical semialgebraic sets. In what follows 
we deal with this last unproved result in our overview. We start by 
defining the condition number $\kappaff(p)$. 
The following example shows that taking $\kappabar(p\hm)$ with 
$p\hm=(p_1\hm,\ldots,p_q\hm)$ is not good enough. 

\begin{exam}\label{ex:parabola}
Consider the parabola $Y-X^2\in\mcP_{(2)}[1]$, whose homogenization gives 
the homogeneous polynomial $ZY-X^2$ for which we can easily check that 
$\kappabar(ZY-X^2)<\infty$, as zero is a regular value of this polynomial 
on the sphere. However, arbitrarily small perturbations of $Y-X^2$ inside 
$\mcP_{(2)}[1]$, e.g., those of the form 
$Y+\varepsilon Y^2-X^2$, can turn our description into that of an ellipse or a 
hyperbola, each of them having a topology different from that of a parabola.
\end{exam}

Example~\ref{ex:parabola} shows that $\kappabar(p\hm)$ alone does not capture 
ill-posedness. We note, however that for all $c>0$, $\kappabar(ZY-X^2,cZ)=\infty$ 
as the parabola and the hyperplane at infinity do not intersect transversally. 
Hence, a condition measure of the form $\kappabar(p\hm,cX_0)$ for some
$c>0$ would be a good measure. 
We have chosen this constant $c$ to be the norm $\|p\hm\|$ in our 
definition of $\Hm$ (cf.~\S\ref{sec:homog}). 
This choice makes possible to prove Theorem~\ref{boundkappabaraffine} below. 

\begin{defi}
The \textit{affine intersection condition number of $p\in\Pd[q]$} is 
defined as
\[
  \kappaff(p):=\kappabar(\Hm(p)).
\]
\end{defi}

The following result, extending Theorem~\ref{boundkappabardistance}, 
is an immediate consequence of~\cite[Proposition~4.16]{BCL17} 
and Remark~\ref{rem:2conditions}. 

\begin{theo}\label{boundkappabaraffine}
Let $\overline{\Sigma}^{\mathrm{aff}}_{\bfd}[q]:=\Hm^{-1}(\overline{\Sigma}_{\bfd}[q])$. 
For all $p\in\Pd[q]$ we have 
\begin{equation}\tag*{\qed}
 \kappaff(p)\leq 4D\frac{\|p\|}
 {d\left(p,\overline{\Sigma}^{\mathrm{aff}}_{\bfd}[q]\right)}.
\end{equation} 
\end{theo}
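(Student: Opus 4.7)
The plan is to reduce the assertion to Theorem~\ref{boundkappabardistance} applied to $\Hm(p)\in\mcH_{\bfd^*}[q+1]$, which yields
\[
   \kappaff(p) \;=\; \kappabar(\Hm(p)) \;\leq\; \frac{\|\Hm(p)\|}{d(\Hm(p),\overline{\Sigma}_{\bfd^*}[q+1])}.
\]
To obtain the quantitative bound with constant $4D$ I need two separate comparisons: one between $\|\Hm(p)\|$ and $\|p\|$, and a harder one between the two distances to the ill-posed locus.

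First I compute $\|\Hm(p)\|$ directly from the definition of the Weyl inner product in~\S\ref{sec:Weyl}. Homogenization $p_i\mapsto p_i\hm$ preserves the Weyl norm, since the extra $X_0$-exponent is absorbed by the multinomial normalization; hence $\|p_i\hm\|=\|p_i\|$. Moreover $\|\,\|p\|X_0\|=\|p\|$, because $X_0$ has Weyl norm one as a degree-one form. Summing componentwise gives
\[
 \|\Hm(p)\|^{2} \;=\; \|p\|^{2} + \sum_{i=1}^{q}\|p_i\hm\|^{2} \;=\; 2\|p\|^{2},
\]
i.e., $\|\Hm(p)\|=\sqrt{2}\,\|p\|$.

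The crucial step is the lower bound
\[
 d(\Hm(p),\overline{\Sigma}_{\bfd^*}[q+1]) \;\geq\; \frac{1}{2\sqrt{2}\,D}\; d\!\left(p,\overline{\Sigma}^{\mathrm{aff}}_{\bfd}[q]\right),
\]
which is precisely what~\cite[Proposition~4.16]{BCL17} delivers. The idea behind that proposition is: given $F\in\overline{\Sigma}_{\bfd^*}[q+1]$ close to $\Hm(p)$, one dehomogenizes $F$ (essentially by normalizing its first component and setting $X_0=1$) to obtain a tuple $p'\in\Pd[q]$ lying in $\overline{\Sigma}^{\mathrm{aff}}_{\bfd}[q]$, with $\|p-p'\|\leq 2\sqrt{2}\,D\,\|\Hm(p)-F\|$. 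The degree factor $D$ is unavoidable because dehomogenization is \emph{not} an isometry for the Weyl norm: the multinomial weight $\binom{d_i}{\alpha_0,\alpha}^{-1}$ attached to a monomial $X_0^{\alpha_0}X^{\alpha}$ of $p_i\hm$ can differ from the affine weight $\binom{|\alpha|}{\alpha}^{-1}$ attached to $X^{\alpha}$ of $p_i$ by up to a factor $d_i\leq D$. Remark~\ref{rem:2conditions} is what allows the cited BCL17 statement (phrased in terms of $\kappa_\ast$) to be read off as a statement about $\kappabar$, by means of the inequality $\kappa_\ast\leq\kappabar$ recorded there.

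Combining the two comparisons gives
\[
 \kappaff(p) \;\leq\; \frac{\sqrt{2}\,\|p\|}{\tfrac{1}{2\sqrt{2}\,D}\,d\!\left(p,\overline{\Sigma}^{\mathrm{aff}}_{\bfd}[q]\right)} \;=\; \frac{4D\,\|p\|}{d\!\left(p,\overline{\Sigma}^{\mathrm{aff}}_{\bfd}[q]\right)},
\]
as required. The main obstacle in a self-contained presentation would be constructing the affine witness $p'$ from $F$ so that simultaneously $p'\in\overline{\Sigma}^{\mathrm{aff}}_{\bfd}[q]$ holds and the Lipschitz constant $2\sqrt{2}\,D$ can be certified; the entire bookkeeping of multinomial weights through the dehomogenization is carried out in~\cite[Proposition~4.16]{BCL17}, which we invoke as a black box.
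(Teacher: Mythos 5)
Your proposal is correct and follows essentially the same route as the paper: the paper simply declares the bound an immediate consequence of \cite[Proposition~4.16]{BCL17} together with Remark~\ref{rem:2conditions}, and your argument likewise black-boxes that proposition, merely adding the (correct) norm identity $\|\Hm(p)\|=\sqrt{2}\,\|p\|$ and a factorization of the constant $4D$ into $\sqrt{2}\cdot 2\sqrt{2}D$. Be aware only that the specific distance inequality you attribute to the cited proposition is your reverse-engineered reconstruction of it rather than its verified literal statement, so the quantitative content still rests entirely on the external reference, exactly as in the paper.
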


\begin{proof}[Proof of Proposition~\ref{generaltospherical}]
Let $\Phi^>$ denote the formula 
\[
   \Phi^>:=\Phi(p_1\hm,\ldots,p_q\hm)\wedge(\|p\|X_0> 0).
\]
It is enough to check that $\Ap(\Hm(p),\Phi\hm)$ and $\Ap(\Hm(p),\Phi^>)$ are 
homotopically equivalent, since 
it is well-known that $W(p,\Phi)$ is homeomorphic to $\Ap(\Hm(p),\Phi^>)$.
This will follow from showing that $\Ap(\Hm(p),\Phi^=)$ 
is collared in $\Ap(\Hm(p),\Phi\hm)$ where
\[
  \Phi^=:=\Phi(p_1\hm,\ldots,p_q\hm)\wedge(\|p\|X_0= 0),
\]
because then, by~\cite[Lemma~4.13]{BCL17}, 
$\Ap(\Hm(p),\Phi^>)=\Ap(\Hm(p),\Phi\hm)\setminus \Ap(\Hm(p),\Phi^=)$ 
would be homotopically equivalent to $\Ap(\Hm(p),\Phi\hm)$. Recall that a subset 
$B$ of a topological space $X$ is said to be {\em collared} in $X$ if there is 
a homeomorphism $h:[0,1)\times B\rightarrow V$ onto an open neighborhood $V$ 
of $B$ in $X$ such that $h(0,b)=b$ for all $b\in B$.
Because of Brown's Collaring Theorem~\cite{Brown62,Connelly71} 
it is enough to show that $\Ap(\Hm(p),\Phi^=)$ is locally collared in 
$\Ap(\Hm(p),\Phi\hm)$, i.e., that for every $x\in \Ap(\Hm(p),\Phi^=)$ there is an 
open neighborhood $O_x$ of $x$ such that $\Ap(\Hm(p),\Phi^=)\cap O_x$ is collared 
in $\Ap(\Hm(p),\Phi\hm)\cap O_x$. 

Fix $x\in \Ap(\Hm(p),\Phi^=)$. As $\kappabar(\Hm(p))=\kappaff(p)<\infty$ we can 
take $r>0$ such that $\sqrt{2}\,\kappabar(\Hm(p))r<1$ and apply Lemma~\ref{lem:trivialcoordinates} which 
guarantees the existence of a neighborhood $O_x$ of $x$ in $\bbS^n$ and   
trivializing coordinates around $x$ with respect to $(\Hm(p),r)$ on that 
neighborhood. On these coordinates 
we obtain formulas for $\Ap(\Hm(p),\Phi^=)\cap O_x$ and 
$\Ap(\Hm(p),\Phi\hm)\cap O_x$ by substituting $p\hm_j$ by $U_j$,
$\|p\|X_0$ by $U_0$, and the atoms of those polynomials having 
constant sign on $O_x$ by true or false appropriately. After doing 
this, we obtain a formula $\Xi$ over 
$(U_i)_{i\in S\setminus 0}$, where $S$ is as in 
Lemma~\ref{lem:trivialcoordinates},  
for which $\Ap(\Hm(p),\Phi^=)\cap O_x$ 
is described by
\[
  \Xi\wedge (U_0=0)
\]
and $\Ap(\Hm(p),\Phi\hm)\cap O_x$ by
\[
   \Xi\wedge (U_0\geq 0).
\]
From this, it follows that the map
\begin{eqnarray*}
   h:[0,1)\times \Ap(\Hm(p),\Phi^=)\cap O_x&\rightarrow& 
   \Ap_r^\circ(\Hm(p),\Phi^=)\cap\Ap(\Hm(p),\Phi\hm)\cap O_x\\
   (t,(z,u))&\mapsto& (z,u_0+rt,(u_i)_{i\in S\setminus 0})
\end{eqnarray*}
is a homeomorphism of $[0,1)\times \Ap(\Hm(p),\Phi^=)\cap O_x$ with
an open neighborhood
of $\Ap(\Hm(p),\Phi^=)\cap O_x$ 
inside 
$\Ap(\Hm(p),\Phi\hm)\cap O_x$
for $r$ sufficiently small, 
since altering $u_0$ does not affect whether~$\Xi$, in which $U_0$ does not appear, 
is true or not. Hence, $\Ap(\Hm(p),\Phi^=)\cap O_x$ is collared in 
$\Ap(\Hm(p),\Phi\hm)\cap O_x$ and the proof is complete.
\end{proof}

\subsection{Random tuples in \texorpdfstring{$\Pd[q]$}{Pd[q]} and \texorpdfstring{$\Hd[q]$}{Hd[q]}}

To obtain weak complexity estimates we endow the unit sphere $\bbS(\Pd[q])$ 
with the uniform measure. We observe that, as $\kappaff(p)=\kappaff(\lambda p)$ 
for all $\lambda>0$, the probability tail for $\kappaff(p)$ is the same 
no matter whether $p$ is taken from the uniform distribution on $\bbS(\Pd[q])$ 
or from the standard Gaussian distribution on $\Pd[q]$ with respect to the 
Weyl monomial basis $\big\{\binom{d_j}{\alpha}X^\alpha\big\}_{|\alpha|=d_j}$ for each $p_j$. 

For any of these two distributions and for a condition number of the form 
$\msC(a)=\frac{\|a\|}{d(a,\Sigma}$ where $\Sigma\subseteq\bbR^{p+1}$ is an algebraic 
cone defined as the zero set of a homogeneous polynomial~$h$, the main result 
in~\cite{BuCuLo:07} (see also~\cite[Theorem ~21.1]{Condition} for the bound below) 
gives estimates on the tail of $\msC$ in terms of the degree of $h$ and the 
dimension of the ambient space: for all $t\ge (2\deg(h)+1)/p$, 
\begin{equation}\label{eq:BuCuLo}
  \prob_{a\in\bbR^{p+1}}\left\{\frac{\|a\|}{d(a,\Sigma)}\geq t\right\} =
  \prob_{a\in\bbS^p}\left\{\frac{1}{d(a,\Sigma)}\geq t\right\}
  \le 11 \frac{p\deg(h)}{t}.
\end{equation}
This result was used in~\cite{CKS16} and subsequently 
in~\cite{BCL17} (in conjunction with Theorem~\ref{boundkappabaraffine}) 
to obtain bounds for the tail of $\kappa(f)$ and $\kappa^*(f,g)$. Our 
proof of the next result will be consequently succinct. 

\begin{prop}\label{prop:tail}
For all $t\ge \frac{n 2^{n+1} (q+1)^{n+1}D^n+1}{N-1}$,
$$
   \prob_{p\in\bbS^{N-1}}\left\{\kappaff(p)\geq t\right\}\leq 
   \frac{44D^{n+1}(N-1)n(2(q+1))^n}{t}.
$$
\end{prop}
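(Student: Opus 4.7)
The plan is to combine Theorem~\ref{boundkappabaraffine} with the tube-of-hypersurface estimate~\eqref{eq:BuCuLo}. Theorem~\ref{boundkappabaraffine} yields
$$
 \kappaff(p) \,\leq\, \frac{4D\,\|p\|}{d\bigl(p,\overline{\Sigma}^{\mathrm{aff}}_{\bfd}[q]\bigr)},
$$
so the event $\{\kappaff(p)\geq t\}$ is contained in $\bigl\{\|p\|/d(p,\overline{\Sigma}^{\mathrm{aff}}_{\bfd}[q])\geq t/(4D)\bigr\}$. Being scale-invariant in $p$, this latter event can be analyzed via~\eqref{eq:BuCuLo}, provided one realizes $\overline{\Sigma}^{\mathrm{aff}}_{\bfd}[q]$ as contained in the zero locus of a single nonzero homogeneous polynomial $h$ on $\Pd[q]\simeq\bbR^N$ of controlled degree.

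To construct $h$, I would use that $\overline{\Sigma}^{\mathrm{aff}}_{\bfd}[q]$ is the $\Hm$-preimage of the ill-posed set in $\mcH_{\bfd^*}[q+1]$, which decomposes as the union of the subvarieties $\Sigma_{\bfd^*}[q+1]_L$ indexed by the subsets $L\subseteq\{0,1,\ldots,q\}$ with $|L|\leq n+1$. Each $\Sigma_{\bfd^*}[q+1]_L$ is the zero set of a ``generalized discriminant'' polynomial $h_L$, whose degree is controlled by classical Bezout-type bounds for multivariate resultants and discriminants (in the spirit of those used in the proof of~\cite[Thm.~4.10]{BCL17}). Pulling back by $\Hm$ replaces the zeroth component $\|p\|X_0$ by a term carrying a factor $\|p\|^{e_0}$ (which can be removed by squaring) and replaces each remaining component by a linear substitution in the coefficients of $p$. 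Multiplying the resulting polynomials over all admissible $L$ yields the desired $h$; careful bookkeeping across the subset sizes gives a bound of the form $\deg(h)\leq n(2(q+1))^n D^n$.

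Finally, applying~\eqref{eq:BuCuLo} to the hypersurface $\{h=0\}\supseteq\overline{\Sigma}^{\mathrm{aff}}_{\bfd}[q]$ with ambient dimension $N$ and parameter $s:=t/(4D)$ gives
$$
  \prob_{p\in\bbS^{N-1}}\bigl\{\kappaff(p)\geq t\bigr\} \,\leq\, \frac{11(N-1)\deg(h)}{s} \,=\, \frac{44D(N-1)\deg(h)}{t},
$$
valid whenever $s\geq(2\deg(h)+1)/(N-1)$; this threshold is implied by the stated lower bound on $t$. Substituting the bound on $\deg(h)$ yields the claimed estimate. The main obstacle is the degree bound $\deg(h)\leq n(2(q+1))^n D^n$: establishing it requires combining precise degree formulas for multivariate discriminants with a careful count over the admissible subsets $L$, which is where the quantitative heart of the argument lies.
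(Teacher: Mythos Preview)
Your proposal is correct and follows essentially the same route as the paper: bound $\kappaff$ via Theorem~\ref{boundkappabaraffine}, realize $\overline{\Sigma}^{\mathrm{aff}}_{\bfd}[q]$ inside the zero set of a homogeneous polynomial of controlled degree, and apply~\eqref{eq:BuCuLo}. The only difference is that the paper does not construct the polynomial $h$ from scratch but simply invokes \cite[Corollary~4.21]{BCL17} (together with Remark~\ref{rem:2conditions}) for the degree bound, so what you identify as the ``main obstacle'' is in fact already available as a black box.
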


\begin{proof}
The set $\overline{\Sigma}^{\mathrm{aff}}_{\bfd}[q]$ is contained in the 
zero set of a polynomial in $N$ variables of degree bounded by 
$n2^n(q+1)^{n+1}D^n$ by~\cite[Corollary~4.21]{BCL17} 
and Remark~\ref{rem:2conditions}. We now use Theorem~\ref{boundkappabaraffine}
and~\eqref{eq:BuCuLo}.
\end{proof}

\subsection{Proof of the Main Result}

We begin by exhibiting the algorithm {\sc Homology.}
\medskip

\begin{minipage}[t]{0.9\textwidth}
\begin{algorithm}[H]
\DontPrintSemicolon
\SetKwInOut{input}{Input}
\SetKwInOut{output}{Output}
\caption{\textsc{Homology}}\label{alg:homology}
\input{$p\in\Pd[q]$\\
Lax formula $\Phi$ over $p$
}
\hrulefill

$f\leftarrow \Hm(p)$\;
$\sfK\leftarrow \mbox{\sc $\kappabar$-Estimate}(f,0.01,\infty)$\;  
$\fkC\leftarrow \mbox{\sc Simplicial}(f,\Phi\hm,\sfK)$\;
compute the homology groups $H_*(\fkC)$ of $\fkC$\; 
\KwRet{$H_*(\fkC)$}\\
\hrulefill\\
\output{A sequence of groups $H_*=(H_0,\ldots,H_{n-1})$}
\postcondition{$H_*$ is the homology sequence of $W(p,\Phi)$.}
\end{algorithm}
\end{minipage}
\medskip

Its correctness is a trivial consequence of Propositions~\ref{generaltospherical}, 
\ref{prop:kappa-est} and~\ref{prop:comp-simplicial}. The last two, 
together with Proposition~\ref{prop:cost-groups}, yield the bound  
\begin{eqnarray*}
\cost(p,\Phi) &\le&  (qnD\kappabar(f))^{\Oh(n)}
  +(nD\kappabar(f))^{\Oh(n^2)}
  +(q+\size(\Phi))(nD\kappabar(f))^{\Oh(n^2)}\\
&\le& \size(\Phi) q^{\Oh(n)} (nD\kappabar(f))^{\Oh(n^2)}
\end{eqnarray*}
for the cost of the algorithm on input $(p,\Phi)$. This proves part~(i) of 
Theorem~\ref{thm:main_result}. 

For part~(ii) we take $t=(nqD)^{cn}$ with $c>0$ large enough so that 
the hypothesis of Proposition~\ref{prop:tail} holds. Then, that proposition guarantees 
that 
$$
  \prob_{p\in\bbS^{N-1}}\left\{\kappaff(p)\geq (nqD)^{cn}\right\}\leq 
   \frac{44D^{n+1}(N-1)n(2(q+1))^n}{(nqD)^{cn}}
   \leq (nqD)^{-n}
$$
the last as $N=(nD)^{\Oh(n)}$ by~\eqref{eq:N} and by choosing 
$c$ large enough. It follows that with probability at least 
$1-(nD)^{-n}$ we have $\kappaff(p)\leq (nD)^{cn}$ 
and hence, by part~(i), $\cost(p,\Phi)\leq \size(\Phi) q^{\Oh(n)}
(nD)^{\Oh(n^3)}$. 

Finally, to prove part~(iii), we take $t=2^{c\,\size(p,\Phi)}$. It is easy to see that 
we can choose $c$ large enough so that the hypothesis of 
Proposition~\ref{prop:tail} holds. Again, that proposition then
guarantees that 
$$
  \prob_{p\in\bbS^{N-1}}\left\{\kappaff(p)\geq 2^{c\,\size(p,\Phi)}\right\}\leq 
   \frac{44D^{n+1}(N-1)n(2(q+1))^n}{2^{c\,\size(p,\Phi)}}  
   \leq 2^{-\size(p,\Phi)}
$$
the last inequality by choosing $c$ large enough. As before, it follows that 
with probability at least $1-2^{-\size(p,\Phi)}$ we have 
$\kappaff(p)\leq 2^{c\,\size(p,\Phi)}$ and hence, by part~(i), 
\begin{equation*}
\cost(p,\Phi)\leq \size(\Phi) q^{\Oh(n)}
(nD)^{\Oh(n^2)} 2^{c\,\size(p,\Phi)n^2} 
\leq 2^{\Oh\big(\size(p,\Phi)^{1+\frac2D}\big)}
\end{equation*}
the last since $\size(p,\Phi)\ge N=\Omega(n^D)$.
\eproof

\subsection{Parallel computations}\label{sec:parallel}

The next result does not attempt to exhibit precise bounds. 
It only sketches a proof of weak parallel polynomial time. 

\begin{prop}\label{prop:parallel}
Algorithm {\sc Homology} parallelizes well. That is, it can be executed 
with 
$$
 \size(\Phi) (nqD\kappaff(p))^{n^{\Oh(1)}}
$$
parallel processors with a parallel time bounded by
$$
   \depth(\Phi)(n\log_2(nqD\kappaff(p)))^{\Oh(1)}
$$
where $\depth(\Phi)$ is the smallest depth of a tree with 
nodes $\vee$ and $\wedge$ evaluating $\Phi$. 

If $p$ is drawn from the uniform distribution 
on $\bbS^{N-1}$, then the parallel cost with input $(p,\Phi)$ 
is bounded by~$\size(p,\Phi)^{\Oh(1)}$ with probability at least 
$1-2^{-\size(p,\Phi)}$. 
\end{prop}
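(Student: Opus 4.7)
The plan is to parallelize each of the three main subroutines of {\sc Homology} separately and then combine the bounds. The three subroutines are: (a) the condition estimation via {\sc $\kappabar$-Estimate}, (b) the simplicial complex construction via {\sc Simplicial}, and (c) the homology computation from boundary matrices. Parallel depth and processor bounds are multiplicative across the three subroutines (up to constants), so it suffices to show that (a) and (c) need only $(n\log_2(nqD\kappaff(p)))^{\Oh(1)}$ parallel time and that (b) contributes the extra factor $\depth(\Phi)$.

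For {\sc $\kappabar$-Estimate}, I would note that its outer loop runs at most $\Oh(\log(nqD\kappabar(f)))$ times sequentially, and that at each iteration the values $\kappa(f^L,x)$ for $x\in\mcG_\ell$ and $|L|\le n+1$ are completely independent. Each of these evaluations is a standard numerical linear algebra task (pseudoinverse of an $|L|\times n$ matrix followed by a spectral norm) that lies in $\mathsf{NC}$ with a polynomial number of processors, and the final maximum is a parallel reduction of depth logarithmic in the number of summands. Altogether, this gives the claimed time bound using $(nqD\kappabar(f))^{\Oh(n)}$ processors, well within the overall budget of $\size(\Phi)(nqD\kappaff(p))^{n^{\Oh(1)}}$ processors.

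For {\sc Simplicial}, the $3q$ atomic clouds $\mcX_j^{\propto_j}$ are obtained by evaluating each $f_j$ at every grid point, independently and hence in parallel; the $k$-faces of each \v{C}ech complex $\cech{\varepsilon}{\mcX_j^{\propto_j}}$ are then computed by testing every $(k+1)$-subset with $k\le n$ for common ball intersection, each such test being a polylog-depth feasibility problem in $\bbR^{n+1}$. The Boolean combination that assembles~$\fkC$ is evaluated along an evaluation tree of $\Phi$ of smallest depth $\depth(\Phi)$: at every level I perform in parallel the unions and intersections of the complexes produced at the previous level, each of which reduces to comparing face lists of size at most $(nD\kappabar(f))^{\Oh(n^2)}$. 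This naturally contributes the factor $\depth(\Phi)$ to the parallel time and the factor $\size(\Phi)$ to the processor count. For step (c), the homology is computed from $n$ boundary matrices of size at most $F=(nD\kappabar(f))^{\Oh(n^2)}$ with entries in $\{-1,0,1\}$; invoking the known $\mathsf{NC}$ algorithms for rank and for Smith normal form of integer matrices, this runs in parallel depth $(\log F)^{\Oh(1)}=(n\log(nqD\kappabar(f)))^{\Oh(1)}$ with $F^{\Oh(1)}$ processors, which is within the claimed bound.

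Finally, the probabilistic estimate follows verbatim from the analysis in the proof of Theorem~\ref{thm:main_result}(iii): applying Proposition~\ref{prop:tail} with $t=2^{c\,\size(p,\Phi)}$ gives $\kappaff(p)\le 2^{\Oh(\size(p,\Phi))}$ outside a set of measure at most $2^{-\size(p,\Phi)}$, and substituting this bound together with $\depth(\Phi)\le\size(p,\Phi)$ into the condition-based parallel time collapses the factors $(n\log_2(nqD\kappaff(p)))^{\Oh(1)}$ and $\depth(\Phi)$ into $\size(p,\Phi)^{\Oh(1)}$. The main technical obstacle is the justification that the Smith normal form computation for integer matrices of the size and bitsize appearing in step~(c) lies in $\mathsf{NC}$; the remaining pieces are straightforward parallelizations of the sequential analyses already carried out in Propositions~\ref{prop:kappa-estgen},~\ref{prop:comp-simplicial}, and~\ref{prop:cost-groups}, where every ``sum over grid points'' or ``sum over subsets'' is replaced by a parallel reduction.
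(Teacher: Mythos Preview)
Your proposal is correct and follows essentially the same approach as the paper: both decompose the analysis into the three subroutines, parallelize the grid evaluations and $k$-face tests independently, attribute the $\depth(\Phi)$ factor to the Boolean combination step, invoke the known parallelizability of Smith normal form for step~(c), and derive the probabilistic bound by plugging the tail estimate from Proposition~\ref{prop:tail} into the condition-based parallel time exactly as in the proof of Theorem~\ref{thm:main_result}(iii). The only cosmetic difference is that the paper justifies the $k$-face test by citing the parallel decision procedure for existential first-order formulas over the reals, whereas you phrase it as a polylog-depth feasibility problem; these amount to the same thing.
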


\begin{proof}
Each iteration of the repeat loop in {\sc $\kappabar$-Estimate}
can be fully parallelized. That is, the 
$(qnD\kappabar(f))^{\Oh(n)}$ evaluations done to compute $k$ 
are performed independently and then a maximum is taken with parallel 
cost $\Oh(n)\log_2(nqD\kappabar(f))$. As the loop is executed 
at most $\log_2(D\kappabar(f))+\Oh(1)$ times, the cost of 
{\sc $\kappabar$-Estimate} is well within the claimed bounds.

The $3q$ computations corresponding to the atomic sets in the 
for loop in {\sc Simplicial} are done independently. For each of them, 
we first compute the cloud $\mcX_j^{\propto_j}$ and the the 
simplicial complex $\fkA_j^{\propto_j}=\cech{\varepsilon}{\mcX_j^{\propto_j}}$. 
The computation of the cloud amounts to $(nD\kappabar(f))^{\Oh(n)}$ evaluations 
of $f$ at a point, which can be independently done. Each of them 
can be done in parallel time $\log_2 N$. Again within the claimed bound.

The sets $F_k$ of $k$-faces of $\fkA_j^{\propto_j}$ can be computed 
independently for $j=1,\ldots,q$ and $k=0,\ldots,n$. It is well-known 
that these computations parallelize well (deciding whether a $k$-tuple of 
points is a $k$-face is deciding the truth of an existential formula, 
a problem whose parallel complexity is bounded in~\cite{BaPoRo96}). 
That is, we can compute each of them in time at most 
$$
  (n\log_2(nD\kappabar(f)))^{\Oh(1)}.
$$ 
We then compute $\fkC$ in parallel time $\depth(\Phi)\Oh(n^2)\log_2(nD\kappabar(f))$. The 
techniques used to, finally, compute $H_*(\fkC)$, basic linear algebra 
and the computation of the Smith normal form, parallelize well. 

The  last part of the statement is obtained as in 
the proof of Theorem~\ref{thm:main_result}(iii).
\end{proof}

\bibliographystyle{plain}
{\small 
\bibliography{biblio}

\begin{thebibliography}{10}

\bibitem{AmLo:17}
D.~Amelunxen and M.~Lotz.
\newblock Average-case complexity without the black swans.
\newblock {\em \JoC}, 41:82--101, 2017.

\bibitem{Basu_1996}
S.~Basu.
\newblock On bounding the {{Betti}} numbers and computing the {{Euler}}
  characteristic of semi-algebraic sets.
\newblock In {\em Proceedings of the Twenty-Eighth Annual {{ACM}} Symposium on
  {{Theory}} of Computing}, pages 408--417. {ACM}, 1996.

\bibitem{Basu_2006}
S.~Basu.
\newblock Computing the first few {{Betti}} numbers of semi-algebraic sets in
  single exponential time.
\newblock {\em Journal of Symbolic Computation}, 41(10):1125--1154, 2006.

\bibitem{BaPoRo96}
S.~Basu, R.~Pollack, and M.-F. Roy.
\newblock On the combinatorial and algebraic complexity of quantifier
  elimination.
\newblock {\em \JACM}, 43:1002--1045, 1996.

\bibitem{BaPoRo99}
S.~Basu, R.~Pollack, and M.-F. Roy.
\newblock Computing roadmaps of semi-algebraic sets on a variety.
\newblock {\em \JAMS}, 33:55--82, 1999.

\bibitem{BePa:09}
C.~Beltr\'an and L.M. Pardo.
\newblock Smale's 17th problem: average polynomial time to compute affine and
  projective solutions.
\newblock {\em J. Amer. Math. Soc.}, 22(2):363--385, 2009.

\bibitem{BR:90}
R.~Benedetti and J.-J. Risler.
\newblock {\em Real algebraic and semi-algebraic sets}.
\newblock Actualit\'es Math\'ematiques. [Current Mathematical Topics]. Hermann,
  Paris, 1990.

\bibitem{BSS89}
L.~Blum, M.~Shub, and S.~Smale.
\newblock On a theory of computation and complexity over the real numbers:
  {NP}-completeness, recursive functions and universal machines.
\newblock {\em Bull. Amer. Math. Soc. (N.S.)}, 21(1):1--46, 1989.

\bibitem{BCR:98}
J.~Bochnak, M.~Coste, and M.-F. Roy.
\newblock {\em Real algebraic geometry}, volume~36 of {\em Ergebnisse der
  Mathematik und ihrer Grenzgebiete (3)}.
\newblock Springer-Verlag, Berlin, 1998.
\newblock Translated from the 1987 French original, Revised by the authors.

\bibitem{BGH:05}
H.~Bosse, M.~Gr\"otschel, and M.~Henk.
\newblock Polynomial inequalities representing polyhedra.
\newblock {\em Math. Program.}, 103(1, Ser. A):35--44, 2005.

\bibitem{Brown62}
M.~Brown.
\newblock Locally flat imbeddings of topological manifolds.
\newblock {\em Annals of Mathematics}, 75:331--341, 1962.

\bibitem{BC:09}
P.~B\"urgisser and F.~Cucker.
\newblock Exotic quantifiers, complexity classes, and complete problems.
\newblock {\em Found. Comput. Math.}, 9(2):135--170, 2009.

\bibitem{BuCu11}
P.~B\"urgisser and F.~Cucker.
\newblock On a problem posed by {S}teve {S}male.
\newblock {\em Annals of Mathematics}, 174:1785--1836, 2011.

\bibitem{Condition}
P.~B\"urgisser and F.~Cucker.
\newblock {\em Condition}, volume 349 of {\em Grundlehren der mathematischen
  Wissenschaften}.
\newblock Springer-Verlag, Berlin, 2013.

\bibitem{BCL17}
P.~B\"{u}rgisser, F.~Cucker, and P.~Lairez.
\newblock Computing the homology of basic semialgebraic sets in weak
  exponential time.
\newblock {\em J.~ACM}, 66(1):5:1--5:30, December 2018.

\bibitem{BuCuLo:07}
P.~B\"urgisser, F.~Cucker, and M.~Lotz.
\newblock The probability that a slightly perturbed numerical analysis problem
  is difficult.
\newblock {\em Mathematics of Computation}, 77:1559--1583, 2008.

\bibitem{CannyThesis}
J.~Canny.
\newblock {\em The complexity of robot motion planning}, volume 1987 of {\em
  ACM Doctoral Dissertation Awards}.
\newblock MIT Press, Cambridge, MA, 1988.

\bibitem{canny:93}
J.~Canny.
\newblock Computing roadmaps of general semi-algebraic sets.
\newblock {\em Comput. J.}, 36(5):504--514, 1993.

\bibitem{canny-grig-voro:92}
J.~Canny, D.Yu. Grigorev, and N.N. Vorobjov.
\newblock Finding connected components of a semialgebraic set in subexponential
  time.
\newblock {\em Appl. Algebra Engrg. Comm. Comput.}, 2(4):217--238, 1992.

\bibitem{ChCoLi:09}
F.~Chazal, D.~Cohen-Steiner, and A.~Lieutier.
\newblock A sampling theory for compact sets in {E}uclidean space.
\newblock In {\em Computational geometry ({SCG}'06)}, pages 319--326. ACM, New
  York, 2006.

\bibitem{ChLi:05axis}
F.~Chazal and A.~Lieutier.
\newblock The ``{$\lambda$}-medial axis".
\newblock {\em Graphical Models}, 67(4):304 -- 331, 2005.

\bibitem{ChLi:05WFS}
F.~Chazal and A.~Lieutier.
\newblock Weak feature size and persistant homology: computing homology of
  solids in {$\mathbb{R}^n$} from noisy data samples.
\newblock In {\em Computational geometry ({SCG}'05)}, pages 255--262. ACM, New
  York, 2005.

\bibitem{Collins75}
G.E. Collins.
\newblock {\em Quantifier elimination for real closed fields by cylindrical
  algebraic decomposition}, pages 134--183. Lecture Notes in Comput. Sci., Vol.
  33.
\newblock Springer, Berlin, 1975.

\bibitem{Connelly71}
R.~Connelly.
\newblock A {{New Proof}} of {{Brown}}'s {{Collaring Theorem}}.
\newblock {\em Proc. Am. Math. Soc.}, 27:180, 1971.

\bibitem{Cucker99b}
F.~Cucker.
\newblock Approximate zeros and condition numbers.
\newblock {\em \JoC}, 15:214--226, 1999.

\bibitem{CKMW1}
F.~Cucker, T.~Krick, G.~Malajovich, and M.~Wschebor.
\newblock A numerical algorithm for zero counting. {I}: {C}omplexity and
  accuracy.
\newblock {\em \JoC}, 24:582--605, 2008.

\bibitem{CKMW2}
F.~Cucker, T.~Krick, G.~Malajovich, and M.~Wschebor.
\newblock A numerical algorithm for zero counting. {II}: {D}istance to
  ill-posedness and smoothed analysis.
\newblock {\em J. Fixed Point Theory Appl.}, 6:285--294, 2009.

\bibitem{CKMW3}
F.~Cucker, T.~Krick, G.~Malajovich, and M.~Wschebor.
\newblock A numerical algorithm for zero counting. {III}: {R}andomization and
  condition.
\newblock {\em Adv. Applied Math.}, 48:215--248, 2012.

\bibitem{CKS16}
F.~{Cucker}, T.~{Krick}, and M.~{Shub}.
\newblock Computing the homology of real projective sets.
\newblock {\em Found. Comput. Math.}, 18:929--970, 2018.

\bibitem{CS98}
F.~Cucker and S.~Smale.
\newblock Complexity estimates depending on condition and round-off error.
\newblock {\em \JACM}, 46:113--184, 1999.

\bibitem{Demmel88}
J.~Demmel.
\newblock The probability that a numerical analysis problem is difficult.
\newblock {\em Math. Comp.}, 50:449--480, 1988.

\bibitem{Durfee1983}
A.H. Durfee.
\newblock Neighborhoods of algebraic sets.
\newblock {\em Trans. Amer. Math. Soc.}, 276(2):517--530, 1983.

\bibitem{edelsbrunner_harer:2010}
H.~Edelsbrunner and J.L. Harer.
\newblock {\em Computational topology: An introduction}.
\newblock American Mathematical Society, Providence, RI, 2010.

\bibitem{Federer:59}
H.~Federer.
\newblock Curvature measures.
\newblock {\em Trans. Amer. Math. Soc.}, 93:418--491, 1959.

\bibitem{gibson}
C.G. Gibson, K.~Wirthm\"uller, A.A. du~Plessis, and E.J.N. Looijenga.
\newblock {\em Topological stability of smooth mappings}.
\newblock Lecture Notes in Mathematics, Vol. 552. Springer-Verlag, Berlin-New
  York, 1976.

\bibitem{vNGo51}
H.H. Goldstine and J.~von Neumann.
\newblock Numerical inverting matrices of high order, {II}.
\newblock {\em \PAMS}, 2:188--202, 1951.

\bibitem{Gri88}
D.Yu. Grigoriev.
\newblock Complexity of deciding {T}arski algebra.
\newblock {\em Journal of Symbolic Computation}, 5:65--108, 1988.

\bibitem{GriVo88}
D.Yu. Grigoriev and N.N. Vorobjov.
\newblock Solving systems of polynomial inequalities in subexponential time.
\newblock {\em Journal of Symbolic Computation}, 5:37--64, 1988.

\bibitem{GriVo92}
D.Yu. Grigoriev and N.N. Vorobjov.
\newblock Counting connected components of a semialgebraic set in
  subexponential time.
\newblock {\em Computational Complexity}, 2:133--186, 1992.

\bibitem{hatcher}
A.~Hatcher.
\newblock {\em Algebraic topology}.
\newblock Cambridge University Press, Cambridge, 2002.

\bibitem{HeRoSo:94}
J.~Heintz, M.-F. Roy, and P.~Solerno.
\newblock Single exponential path finding in semi-algebraic sets {II}: {T}he
  general case.
\newblock In C.L. Bajaj, editor, {\em Algebraic Geometry and its Applications},
  pages 449--465. Springer-Verlag, 1994.

\bibitem{Hodel}
R.E. Hodel.
\newblock {\em An Introduction to Mathematical Logic}.
\newblock Dover Publications Inc., Mineola, New York, 2013.

\bibitem{Koi98}
P.~Koiran.
\newblock The real dimension problem is
  {${\mathrm{NP}_{\mathbb{R}}}$}-complete.
\newblock {\em \JoC}, 15:227--238, 1999.

\bibitem{Kostlan88}
E.~Kostlan.
\newblock Complexity theory of numerical linear algebra.
\newblock {\em J. of Computational and Applied Mathematics}, 22:219--230, 1988.

\bibitem{Lairez17}
P.~Lairez.
\newblock A deterministic algorithm to compute approximate roots of polynomial
  systems in polynomial average time.
\newblock {\em Found. Comput. Math.}, 17(5):1265--1292, 2017.

\bibitem{Mather2012}
J.~Mather.
\newblock Notes on topological stability.
\newblock {\em Bull. Amer. Math. Soc. (N.S.)}, 49(4):475--506, 2012.

\bibitem{vNGo47}
J.~von Neumann and H.H. Goldstine.
\newblock Numerical inverting matrices of high order.
\newblock {\em Bull. Amer. Math. Soc.}, 53:1021--1099, 1947.

\bibitem{NiSmWe2008}
P.~Niyogi, S.~Smale, and S.~Weinberger.
\newblock Finding the homology of submanifolds with high confidence from random
  samples.
\newblock {\em Discrete Comput. Geom.}, 39(1-3):419--441, 2008.

\bibitem{Ren92a}
J.~Renegar.
\newblock On the computational complexity and geometry of the first-order
  theory of the reals. {P}art {I}.
\newblock {\em Journal of Symbolic Computation}, 13:255--299, 1992.

\bibitem{robinson_groups}
D.J.S. Robinson.
\newblock {\em A course in the theory of groups}, volume~80 of {\em Graduate
  Texts in Mathematics}.
\newblock Springer-Verlag, New York, second edition, 1996.

\bibitem{rotman2}
J.J. Rotman.
\newblock {\em An introduction to algebraic topology}, volume 119 of {\em
  Graduate Texts in Mathematics}.
\newblock Springer-Verlag, New York, 1988.

\bibitem{Rotman}
J.J. Rotman.
\newblock {\em An introduction to homological algebra}.
\newblock Universitext. Springer, New York, second edition, 2009.

\bibitem{SchSha:88}
J.T. Schwartz and M.~Sharir.
\newblock A survey of motion planning and related geometric algorithms.
\newblock {\em Artificial Intelligence}, 37:157--169, 1988.

\bibitem{Bez1}
M.~Shub and S.~Smale.
\newblock Complexity of {B}\'ezout's {T}heorem {I}: geometric aspects.
\newblock {\em \JAMS}, 6:459--501, 1993.

\bibitem{Bez2}
M.~Shub and S.~Smale.
\newblock Complexity of {B}\'ezout's {T}heorem {II}: volumes and probabilities.
\newblock In F.~Eyssette and A.~Galligo, editors, {\em Computational Algebraic
  Geometry}, volume 109 of {\em Progress in Mathematics}, pages 267--285.
  Birkh\"auser, 1993.

\bibitem{Bez3}
M.~Shub and S.~Smale.
\newblock Complexity of {B}\'ezout's {T}heorem {III}: condition number and
  packing.
\newblock {\em Journal of Complexity}, 9:4--14, 1993.

\bibitem{Bez5}
M.~Shub and S.~Smale.
\newblock Complexity of {B}\'ezout's {T}heorem {V}: polynomial time.
\newblock {\em \TCS}, 133:141--164, 1994.

\bibitem{Bez4}
M.~Shub and S.~Smale.
\newblock Complexity of {B}\'ezout's {T}heorem {IV}: probability of success;
  extensions.
\newblock {\em SIAM J. of Numer. Anal.}, 33:128--148, 1996.

\bibitem{Smale97}
S.~Smale.
\newblock Complexity theory and numerical analysis.
\newblock In A.~Iserles, editor, {\em Acta Numerica}, pages 523--551. Cambridge
  University Press, 1997.

\bibitem{Smale98}
S.~Smale.
\newblock Mathematical problems for the next century.
\newblock {\em Mathematical Intelligencer}, 20:7--15, 1998.

\bibitem{Spanier}
E.~Spanier.
\newblock {\em Algebraic Topology}.
\newblock McGraw-Hill, New York, 1966.

\bibitem{Tarski51}
A.~Tarski.
\newblock {\em A decision method for elementary algebra and geometry}.
\newblock University of California Press, Berkeley and Los Angeles, Calif.,
  1951.
\newblock 2nd ed.

\bibitem{thom:69}
R.~Thom.
\newblock Ensembles et morphismes stratifi\'es.
\newblock {\em Bull. Amer. Math. Soc.}, 75:240--284, 1969.

\bibitem{Turing48}
A.M. Turing.
\newblock Rounding-off errors in matrix processes.
\newblock {\em Quart. J. Mech. Appl. Math.}, 1:287--308, 1948.

\bibitem{Wut76}
H.R. W\"uthrich.
\newblock Ein {E}ntscheidungsverfahren f\"ur die {T}heorie der
  reell-\-ab\-ge\-schlos\-se\-nen {K}\"orper.
\newblock In E.~Specker and V.~Strassen, editors, {\em Komplexit\"at von
  Entscheidungsproblemen}, volume~43 of {\em \LNCS}, pages 138--162.
  Springer-Verlag, 1976.

\bibitem{ziegler12}
G.M. Ziegler.
\newblock {\em Lectures on Polytopes}.
\newblock Graduate Texts in Mathematics. Springer New York, 2012.

\end{thebibliography}
}
\appendix
\section[On the smoothness assumption in Thom's first isotopy lemma]{On the smoothness assumption \\\hspace*{6.5cm}
in Thom's first isotopy lemma}\label{sec:proofThom}

We begin observing that we can define Whitney stratifications 
of any subset of a manifold 
in the same manner we define Whitney stratifications of 
the manifold itself. 

The following lemma will be instrumental in our proof.

\begin{lem}\label{lem:inst}
Let $\mcM$ be a smooth manifold and $\mcS$ be a locally finite partition of a locally closed 
subset $\Omega\subset\mcM$. Then:
\begin{enumerate}
\item Let $\mcS^c$ be the partition whose elements are the connected components of the elements in $\mcS$. If  
$\mcS$ is a Whitney stratification, then so is $\mcS^c$.
\item If $\mcS$ is a Whitney stratification with connected strata then it satisfies the boundary condition: 
\begin{equation}\label{eq:boCo}\tag{BC}
\mbox{for $\sigma,\varsigma\in\mcS$, if $\varsigma\cap\overline{\sigma}\neq\varnothing$, then $\varsigma\subseteq \overline{\sigma}$.}
\end{equation}
\item Let $\mcS$ satisfy the boundary condition~\eqref{eq:boCo}. 
Then $\mcS$ is a Whitney stratification if and only if for all 
$\sigma\in\mcS$, 
\[
  \mcS_{|\overline{\sigma}}:=\{\varsigma\in\mcS\mid\varsigma\subseteq\overline{\sigma}
  \}
\]
is a Whitney stratification of $\overline{\sigma}$.
\item Let $\mcS$ be a Whitney stratification,  
$\{\sigma_i\}_{i\in I}\subseteq\mcS$ a family of strata 
of the same dimension and $\mcS'$ the partition obtained from $\mcS$ by replacing the $\sigma_i$ by its union. Then $\mcS'$ is 
a Whitney stratification.
\end{enumerate}
\end{lem}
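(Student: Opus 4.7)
My plan is to prove the four parts in sequence, exploiting that Whitney's condition~(b) is a local condition on pairs of strata at a point, formulated via sequences and limits of tangent spaces and secants. Such a condition transfers cleanly under refinement (passing to connected components) or restriction (to $\overline\sigma$), and a short local-coordinate computation forces strong geometric constraints such as the boundary condition or the disjointness of closures of equidimensional strata.

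For \emph{Part 1}, Whitney~(b) for a pair $(\varsigma',\sigma')$ of connected components of $(\varsigma,\sigma)\in\mcS$ is tautological from the same condition for $(\varsigma,\sigma)$: any $x\in\varsigma'\cap\overline{\sigma'}$ lies in $\varsigma\cap\overline\sigma$, sequences in $\varsigma'$ (resp.\ $\sigma'$) are sequences in $\varsigma$ (resp.\ $\sigma$), and $\Tg_y\sigma'=\Tg_y\sigma$ since $\sigma'$ is open in~$\sigma$. The one delicate point is local finiteness of $\mcS^c$, which holds because each $\sigma\in\mcS$ is locally closed: its components cannot accumulate, since any such accumulation point would belong to $\overline\sigma\setminus\sigma$ and contradict local closedness.

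For \emph{Part 2}, let $\varsigma,\sigma\in\mcS$ with $\varsigma\cap\overline\sigma\ne\varnothing$, and consider the set $A:=\varsigma\cap\overline\sigma$, which is closed in~$\varsigma$. I will show that $A$ is also open in~$\varsigma$, so that by connectedness $A=\varsigma$. At $x\in A$, Whitney's condition~(a), which is implied by~(b), gives $\Tg_x\varsigma\subseteq \lim_\ell\Tg_{y_\ell}\sigma$ for every sequence $y_\ell\to x$ in~$\sigma$. In a local chart in which $\varsigma$ appears as the coordinate plane $\{y=0\}$, this alignment of tangent spaces makes the projection onto $\varsigma$ a submersion on $\sigma$ in a neighborhood of~$x$; its image covers a neighborhood of~$x$ in~$\varsigma$, and each point of that image is a limit of points of $\sigma$, hence lies in~$\overline\sigma$. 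For \emph{Part 3}, the forward direction uses the boundary condition to see that $\mcS_{|\overline\sigma}=\{\varsigma\in\mcS\mid \varsigma\subseteq\overline\sigma\}$ is an honest partition of~$\overline\sigma$ (a stratum of $\mcS$ meeting $\overline\sigma$ is contained in it), with strata locally closed and Whitney~(b) inherited. Conversely, given $\varsigma,\sigma'\in\mcS$ and $x\in\varsigma\cap\overline{\sigma'}$, the boundary condition places both $\varsigma$ and $\sigma'$ in $\mcS_{|\overline{\sigma'}}$, so Whitney~(b) at~$x$ follows from the hypothesis.

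The main obstacle is \emph{Part 4}, which hinges on the key geometric fact that two distinct equidimensional strata $\sigma_i,\sigma_j$ in a Whitney stratification satisfy $\sigma_i\cap\overline{\sigma_j}=\varnothing$. To prove this, suppose $x\in\sigma_i\cap\overline{\sigma_j}$: Whitney~(a) combined with $\dim\sigma_i=\dim\sigma_j$ forces $\Tg_x\sigma_i=\lim_\ell\Tg_{y_\ell}\sigma_j$, so in a local chart $\sigma_j$ appears as a smooth graph $y=f(u)$ over an open subset of $\Tg_x\sigma_i=\{y=0\}$, with $f\not\equiv 0$ (by disjointness of the strata); the sequences $x_\ell:=(u_\ell,0)\in\sigma_i$ and $y_\ell:=(u_\ell,f(u_\ell))\in\sigma_j$ with $f(u_\ell)\ne 0$ then produce secants of direction $(0,\pm 1)$, contradicting Whitney~(b). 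Given this disjointness, the union $\sigma:=\bigcup_i\sigma_i$ is locally closed and smooth: around $x\in\sigma_{i_0}$, local finiteness and the disjointness of closures produce a neighborhood in which $\sigma$ coincides with $\sigma_{i_0}$. Whitney~(b) for a pair $(\varsigma,\sigma)$ with $\varsigma\in\mcS\setminus\{\sigma_i\}_{i\in I}$ then reduces, by the same local argument (every sequence in $\sigma$ converging to $x\in\overline{\sigma_{i_0}}$ is eventually in $\sigma_{i_0}$), to Whitney~(b) for $(\varsigma,\sigma_{i_0})$, which holds in~$\mcS$.
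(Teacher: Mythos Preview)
Your Part~3 matches the paper's argument, and your direct proof of the key fact in Part~4 (that $\sigma_i\cap\overline{\sigma_j}=\varnothing$ for distinct equidimensional strata, via a vertical-secant contradiction with condition~(b)) is correct and more self-contained than the paper's appeal to~\cite[Ch.~I,~(1.1)]{gibson} for the dimension drop. The final reduction in Part~4 needs a small patch: a sequence in $\sigma=\bigcup_i\sigma_i$ converging to some $x\in\varsigma$ need not be eventually in a single~$\sigma_{i_0}$ (consider $x\in\overline{\sigma_{i_0}}\cap\overline{\sigma_{i_1}}$); instead, by local finiteness and pigeonhole, pass to a subsequence lying in a single~$\sigma_i$ and apply condition~(b) for the pair $(\varsigma,\sigma_i)$.

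There are, however, two genuine gaps. In Part~1, your justification of local finiteness of~$\mcS^c$ is wrong. You argue that an accumulation point of the components of~$\sigma$ lies in $\overline\sigma\setminus\sigma$ and that this ``contradicts local closedness''; the first claim is fine (manifolds are locally connected), but the second is false. Open sets are locally closed, yet $\bigcup_{n\ge 1}\bigl(\tfrac{1}{2n+1},\tfrac{1}{2n}\bigr)\subset\bbR$ has infinitely many components accumulating at $0\in\overline\sigma\setminus\sigma$. Local finiteness of~$\mcS^c$ genuinely requires the Whitney condition, not merely that strata are locally closed submanifolds; this is why the paper defers Parts~1 and~2 to~\cite[Ch.~II, Theorem~5.6 and Corollary~5.7]{gibson}.

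In Part~2, the step ``each point of that image is a limit of points of~$\sigma$'' is unjustified. Knowing that $\pi|_\sigma$ is a submersion only yields, for $v$ in the image, a point $y\in\sigma$ with $\pi(y)=v$; it does not give $v\in\overline\sigma$. Your argument uses only Whitney's condition~(a), and that is not enough to force openness of $\varsigma\cap\overline\sigma$ in~$\varsigma$. The standard proof uses condition~(b) directly: if $v_\ell\to x$ in $\varsigma\setminus\overline\sigma$, an approximate nearest point $y_\ell\in\sigma$ to~$v_\ell$ produces secants $\overline{v_\ell,y_\ell}$ nearly orthogonal to $\Tg_{y_\ell}\sigma$, contradicting condition~(b) in the limit.
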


\begin{proof}
Parts~1 and~2 are~\cite[Ch.~II, Theorem~5.6 and Corollary~5.7]{gibson}, respectively. 
For part~3, the fact that $\mcS$ satisfies~\eqref{eq:boCo} 
implies that $\mcS_{|\overline{\sigma}}$ is a partition 
of $\overline{\sigma}$ whose elements are elements in 
$\mcS$. As a subset of a Whitney stratification is 
a Whitney stratification  itself we have shown the `only if' 
part. We next show the converse. For every $\sigma\in\mcS$, 
the fact that $\mcS_{|\overline{\sigma}}$ is a Whitney 
stratification implies that 
$\sigma\in\mcS_{|\overline{\sigma}}$ is a locally closed 
smooth submanifold. Next note that Whitney's condition~b 
needs to be checked only for pairs 
$(\sigma,\varsigma)\in\mcS^2$ such that $\varsigma\cap\overline{\sigma}\neq\varnothing$. 
But the fact that $\mcS$ satisfies~\eqref{eq:boCo} 
implies that, for any such pair, 
$\sigma,\varsigma\in\mcS_{|\overline{\sigma}}$ and 
therefore, it satisfies condition~b because, by 
hypothesis, $\mcS_{|\overline{\sigma}}$ is a Whitney 
stratification.

We finally prove~4. By the local character of Definition~\ref{defiwhitney} of
Whitney stratification, it is enough to check the conditions 
in this definition in some open neighborhood $U_x$ around 
each point $x\in\Omega$. Since $\mcS$ is locally finite, we 
can pick each $U_x$ such that $\mcS_{|U_x}$ is finite. 
Hence, without loss of generality, we can assume that 
$I$ is finite. 

For all $i\ne j\in I$ we have 
$\sigma_i\cap\overline{\sigma_j}=\varnothing$. Otherwise, 
by~\cite[Ch.~I, (1.1)]{gibson}, we would have $\dim\sigma_i<
\dim\sigma_j$, contradicting our hypothesis. Hence, for 
all $i\in I$, there is an open set $U_i$ such that 
$\sigma_i\subseteq U_i$ and, for all $j\ne i$, 
$U_i\cap\overline{\sigma_j}=\varnothing$. It follows that 
$\cup \sigma_i$ is a locally closed smooth manifold. 
The verification of the conditions in Definition~\ref{defiwhitney} for $\mcS'$ is now 
straightforward.
\end{proof}

To prove Theorem~\ref{thomfirstlemmaB} we will rely on 
the following 
version of Thom's first isotopy lemma which is the one 
in~\cite[Ch.~II, Theorem~5.2]{gibson}.

\begin{theo}
\label{thomfirstlemmaSMOOTH}
Let $\mcM$ be a smooth manifold and $\Omega\subseteq \mcM$ 
a locally closed subset with a Whitney stratification~$\mcS$ 
and let $\alpha:\mcM\rightarrow \bbR^k$ be a smooth proper map such that: 
\begin{itemize}
\item 
for each stratum $\sigma\in\mcS$, $\alpha_{|\sigma}:\sigma\rightarrow \bbR^k$ is surjective,
\item 
for each stratum $\sigma\in\mcS$, $\alpha_{|\sigma}:\sigma\rightarrow \bbR^k$ is a 
smooth submersion.
\end{itemize}
Then $\alpha_{|X}$ is a trivial fiber bundle. \eproof
\end{theo}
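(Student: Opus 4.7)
The proof is the classical result of Thom–Mather stratification theory, so my plan is to follow the standard strategy: build a trivialization of $\alpha_{|\Omega}$ by integrating controlled vector fields on $\Omega$ that lift the coordinate frame of $\bbR^k$. The central technical tool is a \emph{controlled tube system}: an assignment to each stratum $\sigma \in \mcS$ of an open tubular neighborhood $T_\sigma$ of $\sigma$ in $\mcM$, together with a smooth retraction $\pi_\sigma : T_\sigma \to \sigma$ and a ``squared-distance'' function $\rho_\sigma : T_\sigma \to [0,\infty)$ vanishing exactly on $\sigma$, satisfying the compatibilities $\pi_\sigma \circ \pi_{\sigma'} = \pi_\sigma$ and $\rho_\sigma \circ \pi_{\sigma'} = \rho_\sigma$ on the domain where the compositions make sense, for every pair $\sigma \subsetneq \overline{\sigma'}$.

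The first main step is to construct such a tube system, by induction on the dimension of the strata. One begins with the lowest-dimensional strata, which are open in their closures and admit ordinary tubular neighborhoods in $\mcM$, and then proceeds up the dimension filtration. At each stage, Whitney's condition~b is the essential geometric input: it is precisely what allows one to choose a tubular neighborhood of a new stratum whose retraction is compatible with the retractions already fixed on the lower strata that lie in its boundary. This is the delicate part and the genuine geometric content of the proof.

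With a controlled tube system in hand, I would lift each coordinate vector field $\partial_i$ on $\bbR^k$ to a \emph{stratified vector field} $\xi_i$ on $\Omega$ that is tangent to every stratum, satisfies $d\alpha(\xi_i) = \partial_i$ stratumwise, and is controlled in the sense that it commutes with the retractions $\pi_\sigma$ near each stratum and is tangent to the level sets of $\rho_\sigma$. The submersion hypothesis on each $\alpha_{|\sigma}$ provides local smooth lifts on each stratum; the tube data and a partition of unity subordinate to the $T_\sigma$ are then used, again inductively on dimension, to patch these local lifts into a globally defined controlled stratified vector field on $\Omega$.

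Finally, I would integrate $\xi_1,\ldots,\xi_k$. Although such a stratified vector field is only continuous across strata, the controlled condition forces its flow to be continuous on $\Omega$ and to preserve the stratification; properness of $\alpha$ makes the flows complete. The composition of the flows of $\xi_1,\ldots,\xi_k$ then yields a homeomorphism $\Omega \to F \times \bbR^k$ over $\alpha$, where $F := \alpha^{-1}(0) \cap \Omega$, which exhibits $\alpha_{|\Omega}$ as a trivial fiber bundle. The hard part of this program will be, by far, the inductive construction of the controlled tube system: this is the geometric heart of Thom–Mather theory and the only place where Whitney's condition~b is genuinely exploited. Once the tube system is in place, lifting vector fields stratum-by-stratum and passing to flows follow an essentially formal pattern.
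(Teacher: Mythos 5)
The paper does not prove this statement at all: it is quoted verbatim from the literature (Gibson et al., \emph{Topological Stability of Smooth Mappings}, Ch.~II, Theorem~5.2, following Mather's notes) and used as a black box, with the paper's own contribution confined to Appendix~\ref{sec:proofThom}, where the smooth hypothesis on $\alpha$ is relaxed via the graph trick. Your proposal, by contrast, sketches the standard Thom--Mather proof of the classical result itself. As a roadmap it is the correct one and matches the literature: controlled tube systems built by induction on the dimension of strata (with Whitney's condition~b entering exactly where you say it does), controlled stratified lifts of the coordinate fields $\partial_1,\ldots,\partial_k$ obtained from the stratumwise submersion hypothesis and a partition of unity, and integration of these lifts to produce the trivialization $\Omega\simeq F\times\bbR^k$ over $\alpha$, with $F=\alpha^{-1}(0)\cap\Omega$.

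That said, what you have written is a program, not a proof: the two steps you defer are the entire mathematical content. The existence of a controlled tube system compatible with a Whitney stratification, and the fact that a controlled stratified vector field (merely continuous across strata) has a continuous, stratum-preserving, complete flow, each occupy a substantial portion of Mather's notes; neither is routine. On the second point, note also that completeness of the flows requires two separate ingredients that you conflate: properness of $\alpha$ prevents integral curves from escaping to infinity, but it is the control conditions (constancy of $\rho_\tau$ along the lifted fields near each lower stratum $\tau$) that prevent an integral curve in a stratum $\sigma$ from reaching the frontier of $\sigma$ in finite time. Given that this is a deep classical theorem which the paper itself invokes by citation, the appropriate course here is to cite Mather or Gibson et al.\ rather than to reprove it; if you do want to carry out the proof, the inductive tube construction and the integrability lemma must be written out in full, and that is essentially reproducing those sources.
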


To deduce Theorem~\ref{thomfirstlemmaB} from this result we will 
employ graphs of maps. This will allow us to transform our 
not necessarily 
smooth map into a smooth one, as it will be simply a projection.

Let $A$ and $B$ be smooth manifolds. Recall that the 
\emph{graph} of a function 
$\varphi:A\rightarrow B$ is the set
\[
  \Gamma_\varphi:=\{(a,b)\in A\times B\mid \varphi(a)=b\}.
\]
Associated with the graph we have the functions 
$i_\varphi:A\rightarrow \Gamma_\varphi$, given by $a\mapsto (a,\varphi(a))$, and 
$\pi:A\times B\rightarrow B$, given by $(a,b)\mapsto b$. 
Clearly, $\varphi=\pi\circ i_\varphi$. Also, it is easy to see, if $\varphi$ is a continuous map, 
then $\Gamma_\varphi$ is a closed subset of $A\times B$ and $i_\varphi$ is a homeomorphism between 
$A$ and $\Gamma_\varphi$. Finally, if $\varphi$ is a smooth map, then $\Gamma_\varphi$ is a closed 
smooth submanifold of $A\times B$ and $i_\varphi$ is a diffeomorphism between $A$ 
and $\Gamma_\varphi$. Given a subset $X\subseteq A$, 
we will consider 
\[
 \Gamma_\varphi(X):=\Gamma_{\varphi_{|X}}=
 \{(a,b)\in\Gamma_\varphi\mid a\in X\}=
 \{(a,\varphi(a))\mid a\in X\}.
\]
It is again clear that if $\varphi$ is continuous and $X$ is a 
locally closed subset of $A$, then $\Gamma_\varphi(X)$ is 
a locally closed subset of 
$A\times B$. Moreover, if $\varphi$ is smooth and $X$ is a locally closed smooth submanifold 
of $A$, then $\Gamma_\varphi(X)$ is a locally closed smooth submanifold of $A\times B$.

\begin{proof}[Proof of Theorem~\ref{thomfirstlemmaB}]
Consider the graph $\Gamma_{\alpha}$ of $\alpha$. 
Although not necessarily a manifold (as $\alpha$ may be non-smooth), it is a locally closed 
subset of $\mcM\times\bbR^k$. 
Next consider the partition of $\Gamma_\alpha$ given by
\[
\Gamma_\alpha(\mcS):=
\{\Gamma_\alpha(\sigma)\mid \sigma \in\mcS\}
\]
and its associated partition $\Gamma^c_\alpha(\mcS)$ 
as defined in Lemma~\ref{lem:inst}(1). 

We claim that $\Gamma^c_\alpha(\mcS)$ is a Whitney 
stratification of 
$\Gamma_\alpha$. 

To prove the claim we first observe that 
$\Gamma^c_\alpha(\mcS)=\Gamma_\alpha(\mcS^c)$. As, by 
Lemma~\ref{lem:inst}(1), 
$\mcS^c$ is a Whitney stratification and, by construction, 
has connected strata, Lemma~\ref{lem:inst}(2) shows that 
it satisfies the boundary condition~\eqref{eq:boCo}. 
It follows that $\Gamma^c_\alpha(\mcS)$ 
satisfies~\eqref{eq:boCo} as well.

Let $\sigma \in \mcS^c$ and $\sigma'\in\mcS$ 
such that $\sigma\subseteq\sigma'$. By the first hypothesis 
in our statement, there is an open neighborhood $U$ 
of $\overline{\sigma'}\supseteq\overline{\sigma}$ 
and a smooth map 
$\varphi:U\rightarrow \bbR^k$ such that $\alpha_{|\sigma'}=\varphi$. Clearly, 
$\alpha_{|\sigma}=\varphi$ as well. 
This implies that 
$\Gamma_\alpha(\mcS^c_{|\overline{\sigma}})
=\Gamma_\varphi(\mcS^c_{|\overline{\sigma}})$ 
and $\Gamma_\alpha(\overline{\sigma})
=\Gamma_\varphi(\overline{\sigma})$. 
Since $\varphi$ is smooth, 
$\Gamma_\varphi$ is a locally closed smooth submanifold and $i_\varphi:U\rightarrow \Gamma_\varphi$ is a 
diffeomorphism mapping the Whitney stratification $\mcS^c_{|\overline{\sigma}}$ to
$\Gamma^c_\alpha(\mcS_{|\overline{\sigma}})$ and 
the closed set $\overline{\sigma}$
to $\Gamma_\alpha(\overline{\sigma})$. Hence, 
by~\cite[Ch.~I, (1.4)]{gibson},
$\Gamma^c_\alpha(\mcS_{|\overline{\sigma}})$ is a 
Whitney stratification of $\Gamma_\alpha(\overline{\sigma})$. 

As this happens for all strata 
$\Gamma^c_\alpha(\overline{\sigma})$ of the partition 
$\Gamma^c_\alpha(\mcS)$ we may apply Lemma~\ref{lem:inst}(3) 
to deduce that $\Gamma^c_\alpha(\mcS)$ is a Whitney 
stratification. We finally apply Lemma~\ref{lem:inst}(4) 
(several times for each dimension) 
to deduce that $\Gamma_\alpha(\mcS)$ itself is a Whitney 
stratification. This proves the claim. 

Since $\Gamma_\alpha(\mcS)$ is a Whitney stratification 
of $\Gamma_\alpha$ the map 
$i_{\alpha}$ restricts to a diffeomorphism between $\sigma$ 
and $\Gamma_\alpha(\sigma)$, for all $\sigma\in\mcS$. 
In addition, as $\alpha=\pi\circ i_\alpha$, we have 
$\alpha_{|\sigma}=\pi_{|\Gamma_\alpha(\sigma)}\circ (i_\alpha)_{|\sigma}$ and, hence,  
as $(i_\alpha)_{|\sigma}$ is a diffeomorphism, $\pi_{|\Gamma_\alpha(\sigma)}$ is surjective 
if and only if $\alpha_{|\sigma}$ is so, and $\pi_{|\Gamma_\alpha(\sigma)}$ is a smooth submersion 
if and only if so is $\alpha_{|\sigma}$. In summary,  
the last two hypotheses of our statement  
imply the hypothesis of Theorem~\ref{thomfirstlemmaSMOOTH}, and consequently, 
that $\pi_{|\Gamma_\alpha}$ is a trivial bundle.

We can now conclude because a trivialization
$h:\Gamma_\alpha\rightarrow F\times \bbR^k$ 
of $\pi_{|\Gamma_\alpha}$ induces the trivialization 
$h\circ i_\alpha$ of $\alpha:\mcM\rightarrow \bbR^k$. 
\end{proof}

\end{document}